\documentclass[lettersize,journal]{IEEEtran}
\usepackage{amsmath,amsfonts}
\usepackage{amsthm}
\usepackage{adjustbox}
\usepackage{mathtools}
\usepackage{mathrsfs}
\usepackage{bm}
\usepackage{booktabs}
\usepackage{amssymb}
\usepackage{algorithm}
\usepackage{algpseudocode}
\usepackage{array}
\usepackage[caption=false,font=normalsize,labelfont=sf,textfont=sf]{subfig}
\usepackage{textcomp}
\usepackage{xcolor}
\usepackage{stfloats}
\usepackage{url}
\usepackage{verbatim}
\usepackage{graphicx}
\usepackage{cite}
\newtheorem{theorem}{Theorem}
\newtheorem{proposition}{Proposition}
\newtheorem{lemma}{Lemma}

\newtheorem{remark}{Remark}
\newtheorem{corollary}{Corollary}

\theoremstyle{definition}

\DeclareMathOperator*{\argmax}{arg\,max}
\newcommand{\C}{\mathbb{C}}

\newcommand{\tr}{\operatorname{tr}}

\newcommand{\normF}[1]{\left\lVert #1\right\rVert_{\mathrm F}}

\newcommand{\autoeq}[1]{%
  \adjustbox{max width=\columnwidth}{%
    $\displaystyle #1$
  }%
}

\begin{document}

\title{Surrogate-Enhanced Fractional Programming for MIMO Device-to-Device Interference Networks}

\author{\IEEEauthorblockN{Zihan Jiao, Xinping Yi, \textit{Member, IEEE}, Shi Jin, \textit{Fellow, IEEE}, and Giuseppe Caire, \textit{Fellow, IEEE}}

\thanks{Z. Jiao, X. Yi, and S. Jin are with the National Mobile Communications Research Laboratory, Southeast University, Nanjing 210096, China. Email:
\{jiao.zh, xyi, jinshi\}@seu.edu.cn.
}
\thanks{G. Caire is with the Faculty of Electrical Engineering and Computer Science, Technical University of Berlin, Germany. Email: 
caire@tu-berlin.de.}
}



\maketitle

\begin{abstract} Interference management in multi-stream multi-input multi-output (MIMO) device-to-device (D2D) networks often leads to weighted sum-rate maximization with sum-log-determinant involving matrix-valued signal-to-interference-plus-noise ratios (SINRs).
The state-of-the-art paradigms, including weighted minimum mean-square error (WMMSE) and fractional programming (FP), have achieved tremendous success in link scheduling, power control, and beamforming problems.
Recently, an upgraded FP approach, nicknamed surrogate-enhanced fractional programming (SEFP) in a scalar form, has attained improved sum-rate performance in joint uplink scheduling and power control in coordinated multicell single-input single-output (SISO) networks.
The proposed scalar SEFP improves the surrogate construction of the classical Lagrangian dual transform plus quadratic transform (LDT+QT) for logarithmic fractional objectives with a novel reciprocal-inverse transform (RIT), yet its extension to the matrix form for MIMO settings does not seem straightforward as in FP, because the matrix ratio and the auxiliary matrix generally do not commute.
In this paper, by leveraging the mathematical tools from Hermitian functional calculus, we extend RIT from scalar to matrix ratios by lifting the scalar RIT along the eigendirections of a Hermitian auxiliary matrix and recasting the resulting directional construction in an operator form.
Combining the matrix RIT with the matrix QT, we develop a matrix SEFP framework for weighted sum-log-determinant maximization.
Further, we establish a unified view of matrix SEFP and the recently proposed XMMSE method, specify when the two algorithms attribute to identical minorization-maximization (MM) surrogates and original variable update trajectories, and prove XMMSE can be considered as a special case of SEFP under the algorithmic family perspective. Inspired by the unified view, we develop SEFPLinQ for the joint scheduling and multi-stream beamforming optimization in flexible-association MIMO D2D networks.
Numerical results demonstrate consistent performance gains of the proposed SEFPLinQ over FPLinQ and other baselines in both SISO and MIMO settings.

\end{abstract}

\begin{IEEEkeywords}
Fractional programming (FP), surrogate-enhanced FP (SEFP), weighted sum-rate (WSR), minorization-maximization (MM), beamforming, scheduling.
\end{IEEEkeywords}

\section{Introduction}
Interference management is a fundamental problem for the design of modern multi-terminal wireless networks such as the multi-stream multi-input multi-output (MIMO) device-to-device (D2D) interference networks with the objective of maximizing weighted sum-rate (WSR)~\cite{WSRReview}. 
With interference management techniques such as link scheduling, power control, and MIMO beamforming, the WSR maximization problem is notoriously difficult because the desired signal of one link is generally coupled with the interference experienced by many others, leading to highly nonconvex optimization landscapes, and is generally known to be NP-hard~\cite{MISOComplexity}. 
The difficulty becomes more pronounced in multi-antenna multi-stream transmission networks, where the achievable rates naturally take a log-determinant formula involving matrix-form signal-to-interference-plus-noise ratios (SINRs). 

To address these challenges, existing approaches to interference-aware scheduling and resource allocation can be broadly organized into several methodological lines.
One line of work emphasizes low-complexity link scheduling based on certain interference compatibility criteria. 
For example, FlashLinQ~\cite{FlashLinQ} employed distributed signal-to-interference ratio (SIR)-based coordination, whereas ITLinQ~\cite{ITLinQ} and ITLinQ+~\cite{ITLinQplus} incorporated information theory-motivated treating interference as noise (TIN) conditions to guide link scheduling and, in the latter case, power control.
A complementary line directly optimizes the WSR function. For instance, the WMMSE framework exploited the rate-MSE relationship to recast WSR beamforming into tractable alternating block updates, while related block coordinate methods have further incorporated scheduling, beamforming, and power adaptation~\cite{WMMSE_1,WMMSE_2,BCD}.
Notably, fractional programming (FP) provides another optimization-based route by decoupling the signal and interference terms in the SINR-type ratios: the quadratic transform (QT) based framework first addressed power control and beamforming~\cite{FPPart1}, was subsequently extended to discrete scheduling through matching~\cite{FPPart2}, and was further generalized by matrix FP/FPLinQ to joint scheduling and multi-stream beamforming in MIMO D2D networks with flexible association~\cite{FPPart3}. 
More recently, artificial intelligent (AI) induced methods, such as end-to-end graph-based learning~\cite{2023TWC_PCGNN}, deep unfolding~\cite{jiao2026ISIT}, and hybrid model/data-driven~\cite{GRLinQ} methods have been explored to reduce the online computational and signaling burdens of such iterative optimization procedures.

Of particular relevance to this work is the model-based WSR optimization, where the block coordinate ascent (BCA) and the minorization-maximization (MM) viewpoints provide a common language for understanding these methods.
Once the transform-induced auxiliary variables are fixed at their exact maximizers, the transformed objective acts as a globally valid minorizing surrogate that is tight at the current iterate~\cite{FPPart3,BCA+MM WSR}.
This perspective explains why apparently different WMMSE-, FP-, and MM-type iterations can coincide under particular auxiliary variable coordinates and update orders, while also revealing that neither the ratio decoupling nor the block update strategy is unique.
More importantly, it places surrogate construction at the center of algorithm design: An effective surrogate should remain tractable enough to preserve exploitable problem structure, while being sufficiently faithful to the original objective to guide the optimization variables efficiently~\cite{MM tutorial}.

Motivated by this perspective, the recently proposed surrogate-enhanced fractional programming (SEFP) framework revisited the classical LDT+QT construction in FP and introduced the reciprocal-inverse transform (RIT), which yields a tighter surrogate for logarithmic fractional objectives while retaining the favorable decoupling structure of FP~\cite{SEFP_scalar}.
Nevertheless, the existing SEFP framework was formulated for scalar ratios and its extension to matrix-valued SINR ratios is nontrivial.
Along a different but closely related direction, the most recent XMMSE method exploited an exact integral rate-MSE relation to construct a tighter MM surrogate than conventional WMMSE for MIMO beamforming~\cite{XMMSE}.
This central question then arises as to whether the surrogate enhancement principle underlying SEFP admits a principled extension from scalar fractional objectives to weighted sum-log-determinant optimization with matrix-valued ratios, and what the relationship is between the SEFP-type and the XMMSE-type surrogate enhancement.



To answer the above question, we extend the scalar SEFP to its matrix counterpart to deal with matrix-valued ratios, and establish the connection to the recently proposed XMMSE.
Specifically, the main contributions of this paper are summarized as follows.
\begin{itemize}
    \item 
    By leveraging the mathematical tools from Hermitian functional calculus, we develop the matrix RIT by lifting the scalar RIT along the eigendirections of a Hermitian auxiliary matrix and recasting the resulting directional construction in an operator form. By combining the matrix RIT with the matrix QT, we develop a matrix SEFP framework for weighted sum-log-determinant maximization with matrix-valued ratios.
    \item 
    We establish a unified view of matrix SEFP and the recently proposed XMMSE method. On the one hand, we show that the auxiliary variables induced by the transform of the two algorithms admit a lossless finite-continuum-dimensional interconvertibility, and, under exact auxiliary-variable updates, the two approaches induce identical MM surrogates. On the other hand, from an algorithmic perspective, however, SEFP retains additional flexibility in ratio decoupling and block-coordinate update ordering beyond those available in XMMSE.  
    \item 
    By exploiting the above flexibility, we develop SEFPLinQ for the joint scheduling and multi-stream beamforming optimization in flexible-association MIMO D2D networks, where the SEFP surrogate admits an edge-separable structure that enables a two-stage weighted bipartite matching procedure.
    Numerical results demonstrate consistent performance gains of the proposed SEFPLinQ over FPLinQ and other baselines in both SISO and MIMO settings.
\end{itemize}
The rest of this paper is organized as follows.
Section~\ref{preliminary} reviews the scalar RIT and SEFP and lays the mathematical groundwork for their matrix extension.
Section~\ref{sec:matrix_rit} develops the matrix RIT and SEFP. 
Section~\ref{sec:xmmse_connection} establishes an interconnection between the SEFP and XMMSE, and clarifies their equivalence and difference.
Section~\ref{SEFPLinQ} proposes the SEFPLinQ strategy for the joint scheduling and beamforming WSR optimization problem.
Section~\ref{sec:numerical results} presents the numerical results, and Section~\ref{sec:conclu} concludes the paper.

\textbf{Notation:} Boldface, lowercase, and uppercase letters denote vectors and matrices, respectively. 
$(\cdot)^{\dagger}$, $(\cdot)^{T}$, and $(\cdot)^{H}$ represent the pseudo-inverse, transpose and Hermitian (conjugate) transpose operators, respectively.
$\mathbb{C}^{n}$, $\mathbb{H}_{+}^{n}$ and $\mathbb{H}_{++}^{n}$ denote the $n \times n$ complex, Hermitian positive semidefinite and positive definite matrices, respectively.
$\mathbb{R}_{+}$ and $\mathbb{R}_{++}$ denotes nonnegative and positive real number.
For Hermitian matrices $\mathbf{X}$ and $\mathbf{Y}$,
$\mathbf{X}\succeq\mathbf{Y} \ (\mathbf{X}\succ\mathbf{Y})$ means that $\mathbf{X}-\mathbf{Y}$ is positive semidefinite (positive definite).
$\operatorname{tr}(\cdot)$ denotes the trace of a square matrix.
$\langle\mathbf{X}, \mathbf{Y}\rangle_F \triangleq \operatorname{tr}(\mathbf{X}^{H}\mathbf{Y})$ denotes the Frobenius inner product between matrices $\mathbf{X}$ and $\mathbf{Y}$, and $\Vert{}\mathbf{X}\Vert{}_F = \sqrt{\langle\mathbf{X}, \mathbf{X}\rangle_F}$ is the corresponding Frobenius norm. 
$\mathbf{I}_N$ is the $N \times N$ identity matrix.
$\ker(\mathbf{A})$ denotes the kernel (or null space) of matrix $\mathbf{A}$.
We use the underline $\underline{(\cdot)}$ to denote a collection of vector and matrix variables.

\section{Preliminaries}\label{preliminary}
The weighted sum-rate maximization is usually formulated, in a scalar form, as
a generic weighted sum-of-logarithms maximization problem 
\begin{equation}\label{w_sol_p}
\max_{\mathbf{x} \in \mathcal{X}} \quad F_s(\mathbf{x}) \triangleq \sum_{m=1}^{M} \omega_m \log \left( 1 + \frac{A_m(\mathbf{x})}{B_m(\mathbf{x})} \right),\end{equation}
with $\omega_m\geq 0$, where $\mathbf{x}$ collects the optimization variables and $\mathcal X$ denotes the feasible set.  $A_m(\mathbf{x})\in\mathbb{R}_{+}$ and $B_m(\mathbf{x})\in\mathbb{R}_{++}$ are the numerator and the denominator functions, respectively. 

The state-of-the-art solution to \eqref{w_sol_p} is the fractional programming (FP) method \cite{FPPart1,FPPart2}, which consists of a Lagrangian dual transform (LDT) dealing with sum-of-logarithms followed by a quadratic transform (QT) dedicated to the ratio. The FP algorithm has been effective for power control, beamforming, and uplink scheduling in single-antenna D2D networks with fixed or flexible transmitter-receiver associations.

When it comes to the case with multiple antennas at both transmitters and receivers, the generic weighted sum-log-determinant maximization problem in matrix form is
\begin{equation}\label{prob:generic_logdet}
\max_{\mathbf{x} \in \mathcal{X}}\quad
F(\mathbf{x})\triangleq
\sum_{m=1}^{M}\omega_m
\log\det\!\left(\mathbf{I}_{d_m}+\mathbf{R}_m(\mathbf{x})\right),
\end{equation}
where $\mathbf{x}$ collects the optimization variables (e.g., beamforming matrices) that belong to the feasible set $\mathcal X$. 
Notably, the ratio becomes a matrix form, i.e., $
\mathbf{R}_m(\mathbf{x})
\triangleq
\mathbf{S}_m^{H}(\mathbf{x})\mathbf{F}_m^{-1}(\mathbf{x})\mathbf{S}_m(\mathbf{x}) \
\in\mathbb{H}_{+}^{d_m},$
with $\mathbf{S}_m(\mathbf{x})\in\mathbb{C}^{n_m\times d_m}$ and
$\mathbf{F}_m(\mathbf{x})\in\mathbb{H}_{++}^{n_m}$, where $d_m$ is the number of data streams in MIMO communications. For compactness, we omit $\mathbf{x}$ and the index $m$ whenever they are clear from the context.
With the FP framework, the solution to~\eqref{prob:generic_logdet}  with multi-antenna and multi-data-stream transmission is a natural extension of the scalar form in~\eqref{w_sol_p} with $d_m=1$, and has been applied to the MIMO D2D networks in a straightforward way \cite{FPPart3}.

Notably, the scalar FP framework~\cite{FPPart1,FPPart2} has recently been enhanced in \cite{SEFP_scalar}, nicknamed SEFP, where the LDT in the former is replaced by a newly designed RIT in the latter.
The RIT produces a pointwise tighter surrogate function than the LDT from a MM perspective, in such a way that the SEFP yields superior performance for joint uplink scheduling and power control in the SISO D2D networks.


\subsection{The Scalar RIT and SEFP \cite{SEFP_scalar}}
\label{subsec:scalar_rit_review}
In what follows, we briefly introduce the newly designed RIT and the SEFP algorithm in the scalar form.

\begin{proposition}[Scalar Reciprocal-Inversion Transform~{\cite[Theorem~1]{SEFP_scalar}}]\label{Scalar_RIT}
The weighted sum-of-logarithms problem \eqref{w_sol_p} can be
equivalently reformulated as
\begin{equation}
\label{eq:scalar_rit_problem}
\max_{\substack{\mathbf{x}\in\mathcal{X}\\
\boldsymbol{\alpha}\in\mathbb{R}_{++}^{M}}}
\quad
F_{ r}(\mathbf{x},\boldsymbol{\alpha}) \triangleq
\sum_{m=1}^{M}
\omega_m
\frac{
c(\alpha_m)A_m(\mathbf{x})
}{
\alpha_m^2 B_m(\mathbf{x})
+
b(\alpha_m)A_m(\mathbf{x})
},
\end{equation}
where
$\boldsymbol{\alpha}=(\alpha_1,\ldots,\alpha_M)$ denotes the collection
of RIT-induced auxiliary variables, and two scalar functions for $v>0$ are defined as
\begin{equation} \label{operator_b} 
\quad b(v)\triangleq (1 + v)\log(1+v) - v,
\end{equation} 
\begin{equation} \label{operator_c}
c(v)\triangleq \left( 1 + v \right) \log^2(1+v).
\end{equation}

For any fixed $\mathbf{x}$, if $A_m(\mathbf{x})>0$, the unique optimal
auxiliary variable associated with the $m$-th term is $
\alpha_m^\star=
{A_m(\mathbf{x})}/{B_m(\mathbf{x})}.$
If $A_m(\mathbf{x})=0$, the corresponding $m$-th term of $F_r(\mathbf{x},\boldsymbol{\alpha})$ equals zero for any $\alpha_m > 0$, and hence exactly matches the $m$-th term of $F_s(\mathbf{x})$. 
\end{proposition}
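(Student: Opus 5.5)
Since the objective separates across $m$ and each $\alpha_m$ enters only the $m$-th term, the proof reduces to a scalar statement. Fix $\mathbf{x}$ and write $A=A_m(\mathbf{x})\ge 0$, $B=B_m(\mathbf{x})>0$, and $\gamma=A/B$. I will show that for every $\gamma\ge 0$,
\[
\sup_{\alpha>0}\ \frac{c(\alpha)A}{\alpha^2 B+b(\alpha)A}
=\sup_{\alpha>0}\ g(\alpha,\gamma)=\log(1+\gamma),
\qquad g(\alpha,\gamma)\triangleq\frac{c(\alpha)\gamma}{\alpha^2+b(\alpha)\gamma},
\]
and that when $\gamma>0$ the supremum is attained uniquely at $\alpha=\gamma$. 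Given this, maximizing $F_r$ jointly over $(\mathbf{x},\boldsymbol{\alpha})$ is the same as maximizing $\sum_m\omega_m\log(1+A_m/B_m)=F_s(\mathbf{x})$ over $\mathbf{x}$. The two problems therefore share optimal values and optimal $\mathbf{x}$, with $\omega_m\ge 0$ preserving the direction of each inequality.

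The case $A=0$ is immediate, because the numerator vanishes while the denominator $\alpha^2B>0$ (note $b(\alpha)\ge 0$), so the term is $0=\log 1$ for every $\alpha$.

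For $\gamma>0$, I first check tightness. At $\alpha=\gamma$, using $\gamma^2+b(\gamma)\gamma=\gamma(1+\gamma)\log(1+\gamma)$, we get $g(\gamma,\gamma)=\frac{(1+\gamma)\log^2(1+\gamma)\gamma}{\gamma(1+\gamma)\log(1+\gamma)}=\log(1+\gamma)$.

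For the upper bound, the inequality $g(\alpha,\gamma)\le\log(1+\gamma)$ is equivalent, after clearing the positive denominator, to
\[
h(\gamma)\triangleq(\alpha^2+b(\alpha)\gamma)\log(1+\gamma)-c(\alpha)\gamma\ \ge 0
\]
for fixed $\alpha>0$ and all $\gamma\ge 0$. Because $h$ is a positive combination of $\log(1+\gamma)$ and $\gamma\log(1+\gamma)$ minus a linear term, I will study it directly in $\gamma$:
\begin{itemize}
\item $h(0)=0$ and $h(\alpha)=0$, the latter being the tightness computation above.
\item $h'(\gamma)=b(\alpha)\log(1+\gamma)+\frac{\alpha^2+b(\alpha)\gamma}{1+\gamma}-c(\alpha)$.
\item $h''(\gamma)=\frac{b(\alpha)}{1+\gamma}+\frac{b(\alpha)-\alpha^2}{(1+\gamma)^2}$.
\end{itemize}
From the formula for $h'$ one checks $h'(\alpha)=0$: substituting $b$ and $c$ gives $b(\alpha)L+\frac{\alpha^2+b(\alpha)\alpha}{1+\alpha}-(1+\alpha)L^2$ with $L=\log(1+\alpha)$. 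This expression equals $bL+\alpha L-(1+\alpha)L^2=L\bigl(b+\alpha-(1+\alpha)L\bigr)=0$. Thus $\gamma=\alpha$ is a double root of $h$.

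The step I expect to be the main obstacle is showing that $h$ has no other zeros and is nonnegative. The function $(1+\gamma)^2h''(\gamma)=b(\alpha)(1+\gamma)+b(\alpha)-\alpha^2$ is affine and increasing in $\gamma$, so $h''$ changes sign at most once, from negative to positive. Hence $h$ is concave and then convex, and $h'$ first decreases and then increases. I will combine this with $h'(0)=\alpha^2-c(\alpha)$, which I expect to be positive via the elementary inequality $\alpha^2>(1+\alpha)\log^2(1+\alpha)$, equivalently $\log(1+\alpha)<\alpha/\sqrt{1+\alpha}$. Then $h'$ starts positive, decreases, and rises back to $0$ at $\alpha$ as a minimum of $h'$ lying in the convex region, so $h'$ has exactly one earlier zero. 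Consequently $h$ increases from $0$, reaches a maximum, and then decreases to $0$ at $\alpha$ with zero slope. Beyond $\alpha$ it is convex with $h'\ge 0$, so $h$ increases.

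This shape still has to exclude $h$ dipping below zero before $\alpha$. That is ruled out because $h$ first rises from $h(0)=0$ and then decreases monotonically to $h(\alpha)=0$, so $h\ge 0$ on $[0,\alpha]$. Uniqueness of the maximizer follows because $h>0$ away from $\gamma\in\{0,\alpha\}$. Swapping the roles, for fixed $\gamma>0$ equality holds only at $\alpha=\gamma$. If the shape argument becomes messy, my fallback is to verify instead $\partial_\alpha g(\alpha,\gamma)\gtrless 0$ for $\alpha\lessgtr\gamma$ directly.
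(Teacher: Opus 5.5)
Your proof is correct in substance, but one sentence is wrong as written, and it hides the one step that needs an argument. The point $\alpha$ is not a minimum of $h'$. The minimum of $h'$ sits at the inflection point $\gamma_0$ where $b(\alpha)\gamma+2b(\alpha)-\alpha^2=0$, and $h'$ crosses zero \emph{upward} at $\alpha>\gamma_0$. Indeed, if $h'(\alpha)=0$ were the minimum of $h'$, then $h'\ge 0$, and $h(0)=h(\alpha)=0$ would force $h\equiv 0$ on $[0,\alpha]$, contradicting $h'(0)>0$.

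What you must justify is that $h'$ becomes negative inside $(0,\alpha)$. You assert this (``rises back to $0$'', ``lying in the convex region'') but do not prove it. It follows in one line from Rolle's theorem: $h(0)=h(\alpha)=0$ gives a zero $\gamma_1\in(0,\alpha)$ of $h'$. Since $h'$ is strictly decreasing on $[0,\gamma_0]$ and strictly increasing on $[\gamma_0,\infty)$, two distinct zeros must straddle the turning point. Hence $\gamma_1<\gamma_0<\alpha$, $h'>0$ on $[0,\gamma_1)\cup(\alpha,\infty)$, and $h'<0$ on $(\gamma_1,\alpha)$. That is exactly the rise--fall--rise shape you describe, and it gives $h>0$ on $(0,\infty)\setminus\{\alpha\}$. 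It also shows $h'(0)>0$ is automatic, so your separate inequality $\log(1+\alpha)<\alpha/\sqrt{1+\alpha}$ is true but not needed.

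The paper does not prove this proposition itself; it imports it from the scalar SEFP work. Its later remark that the RIT rebuilds the bound ``in a reciprocal-inversion coordinate'' points to a shorter route. With $u=1/r$, the reciprocal of the bound is affine,
\[
\frac{1}{\phi_\alpha(r)}=\frac{\alpha^2}{c(\alpha)}\,u+\frac{b(\alpha)}{c(\alpha)},
\]
and this is exactly the tangent line at $u=1/\alpha$ of $\psi(u)=1/\log(1+1/u)$. A direct computation shows that $\psi''<0$ is equivalent to the classical inequality $\log(1+x)>2x/(2+x)$ with $x=1/u$. So $\psi$ is strictly concave and lies below its tangent, which after taking reciprocals reads $\log(1+r)\ge\phi_\alpha(r)$, with equality iff $r=\alpha$. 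This gives the bound, tightness and uniqueness at once, and it explains where $b$ and $c$ come from.

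Your approach takes $b$ and $c$ as given and verifies the inequality by a sign-change analysis of $h$. It is more computational but fully elementary. Incidentally, your claim that $\alpha$ lies in the convex region of $h$, i.e. $h''(\alpha)>0$, is itself equivalent to the same inequality $\log(1+\alpha)>2\alpha/(2+\alpha)$.
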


By the scalar RIT in Proposition~\ref{Scalar_RIT}, we have
\begin{equation}
\label{eq:scalar_rit_pointwise_equiv}
\max_{\boldsymbol{\alpha}\in\mathbb{R}_{++}^{M}}
F_{r}(\mathbf{x},\boldsymbol{\alpha})
=
F_s(\mathbf{x}), \quad
\forall\mathbf{x}\in\mathcal{X}.
\end{equation}
%
According to \cite{SEFP_scalar}, the RIT serves as a pointwise lower-bound surrogate of the logarithm $\log(1+r)$ when setting $r=\frac{A_m(\mathbf{x})}{B_m(\mathbf{x})}$.


\begin{corollary}[Scalar RIT's Lower-Bound Property~{\cite[Corollary~1]{SEFP_scalar}}]
\label{cor:scalar_rit_lower_bound}
For every fixed $\alpha>0$, we have
\begin{equation}
\label{eq:scalar_rit_lower_bound}
\log(1+r) \ge \phi_{\alpha}(r)\triangleq
\frac{c(\alpha)r}
{\alpha^2+b(\alpha)r},
\quad
r\geq 0.
\end{equation}
For $r>0$, equality in \eqref{eq:scalar_rit_lower_bound} holds if and
only if $\alpha=r$, whereas, at $r=0$, the equality always holds.
\end{corollary}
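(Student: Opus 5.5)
We prove the inequality by reducing it, for fixed $\alpha>0$, to the sign of a one-variable function of $r$. Put $L(r)\triangleq\log(1+r)$. Since $b(\alpha)>0$ for $\alpha>0$ (the map $b$ satisfies $b(0)=0$ and $b'(v)=\log(1+v)>0$), the denominator $\alpha^2+b(\alpha)r$ is strictly positive for all $r\ge 0$. The claim is therefore equivalent to
\begin{equation*}
h_\alpha(r)\triangleq \bigl(\alpha^2+b(\alpha)r\bigr)L(r)-c(\alpha)\,r\;\ge\;0,\qquad r\ge 0,
\end{equation*}
with equality exactly at $r=0$ and $r=\alpha$. The case $r=0$ is immediate, since $h_\alpha(0)=0$.

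For $r>0$ we divide by $r$ and work with $g(r)\triangleq h_\alpha(r)/r=\alpha^2 L(r)/r+b(\alpha)L(r)-c(\alpha)$. We first check tangency at $r=\alpha$.
\begin{itemize}
\item Using $\alpha b(\alpha)+\alpha^2=(1+\alpha)\alpha\log(1+\alpha)$, we get $g(\alpha)=\alpha\log(1+\alpha)\cdot(1+\alpha)L(\alpha)/\alpha-c(\alpha)=0$.
\item We have $g'(r)=\alpha^2\frac{r/(1+r)-L(r)}{r^2}+\frac{b(\alpha)}{1+r}$. At $r=\alpha$ this gives $\frac{\alpha}{1+\alpha}-\log(1+\alpha)+\frac{b(\alpha)}{1+\alpha}$, which is $0$ by the definition of $b$.
\end{itemize}
So $r=\alpha$ is a critical zero of $g$. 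This matches the structure of the RIT, which was designed so that $\phi_\alpha$ touches $L$ at $r=\alpha$.

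It remains to show that $r=\alpha$ is the unique global minimizer of $g$ on $(0,\infty)$; this is the main obstacle. We would multiply $g'(r)$ by $r^2(1+r)>0$ and obtain
\begin{equation*}
k(r)\triangleq \alpha^2 r-\alpha^2(1+r)L(r)+b(\alpha)r^2 .
\end{equation*}
Then $k(0)=0$ and $k'(r)=-\alpha^2L(r)+2b(\alpha)r$. Using $L(r)\le r$ gives $k'(r)\ge 0$ only when $2b(\alpha)\ge\alpha^2$, which does not always hold, so a cruder route is needed. We have $k''(r)=2b(\alpha)-\alpha^2/(1+r)$, which is increasing in $r$. Hence $k$ is first concave and then convex, and $k'$ first decreases and then increases, starting from $k'(0)=0$. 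Together with $k(0)=0$ and $k(r)\to+\infty$, this means $k$ has at most one positive zero. Since $k(\alpha)=0$ by the computation above, $g'<0$ on $(0,\alpha)$ and $g'>0$ on $(\alpha,\infty)$.

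One degenerate case must be ruled out: if $k''(0)\ge 0$, then $k$ is convex with $k(0)=k'(0)=0$ and would have no positive zero, contradicting $k(\alpha)=0$. So this case cannot occur, and the single-sign-change argument applies. It follows that $g>0$ on $(0,\infty)\setminus\{\alpha\}$ and $g(\alpha)=0$, which gives the inequality together with the equality characterization ($\alpha=r$ for $r>0$; always at $r=0$).
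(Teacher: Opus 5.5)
Your proof is correct. The tangency computations $g(\alpha)=g'(\alpha)=0$ check out, $k(r)=r^2(1+r)g'(r)$ and its derivatives are right, and the shape of $k$ (concave, then convex, with $k(0)=k'(0)=0$ and $k\to+\infty$) forces exactly one positive zero, which must be $r=\alpha$. One small correction: $2b(\alpha)\ge\alpha^2$ is not merely ``not always'' true, it is never true. The function $\alpha^2-2b(\alpha)$ vanishes at $0$ and has derivative $2(\alpha-\log(1+\alpha))>0$. So $k''(0)<0$ always, and you can state this directly instead of excluding the degenerate case by contradiction.

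The paper gives no proof of its own here; it imports the bound from the scalar SEFP reference. The natural comparison is therefore with the structural argument suggested by the name ``reciprocal-inversion.'' With $u=1/r$, the reciprocal $1/\phi_\alpha(r)=\bigl(\alpha^2u+b(\alpha)\bigr)/c(\alpha)$ is affine in $u$. It is exactly the tangent line at $u_0=1/\alpha$ of $f(u)=1/\log(1+1/u)$: one checks $f(u_0)=1/\log(1+\alpha)$ and $f'(u_0)=\alpha^2/c(\alpha)$, and the intercept comes out as $b(\alpha)/c(\alpha)$. Writing $\ell(u)=\log(1+1/u)$, one has $f''=\bigl(2(\ell')^2-\ell\ell''\bigr)/\ell^3$. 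The condition $f''<0$ reduces to $\log(1+r)>2r/(2+r)$ for $r>0$, which holds because the difference vanishes at $0$ and has derivative $r^2/\bigl((1+r)(2+r)^2\bigr)$. Strict concavity of $f$ then gives $1/\log(1+r)\le 1/\phi_\alpha(r)$ for $r>0$, with equality only at $r=\alpha$, and taking reciprocals yields the corollary.

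That route explains where $b$ and $c$ come from (they are the scaled intercept and slope of a tangent line) and gets the equality case for free from strict concavity, with no sign-change case analysis. Your route is more computational, but it is fully self-contained and does not require spotting that hidden structure.
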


We next recall the scalar SEFP formulation obtained by concatenating the scalar RIT with the scalar QT.
\begin{proposition}[Scalar Surrogate-Enhanced FP{~\cite[Theorem~2]{SEFP_scalar}}] \label{Scalar_SEFP}
By applying the scalar QT \cite[Theorem 1]{FPPart1} to the
RIT-reformulated objective in \eqref{eq:scalar_rit_problem}, problem
\eqref{w_sol_p} can be equivalently reformulated as
\begin{equation}
\begin{aligned}
\label{eq:scalar_sefp_problem}
\max_{\substack{
\mathbf{x}\in\mathcal{X}\\
\boldsymbol{\alpha}\in\mathbb{R}_{++}^{M}\\
\mathbf{y}\in\mathbb{R}^{M}
}}
\
F_{rq}(\mathbf{x},\boldsymbol{\alpha},\mathbf{y})&\triangleq
\sum_{m=1}^{M}
\Big[
2y_m
\sqrt{
\omega_m c(\alpha_m)A_m(\mathbf{x})
}
\\[-12pt]
&-
y_m^2
\big(
\alpha_m^2B_m(\mathbf{x})
+
b(\alpha_m)A_m(\mathbf{x})
\big)
\Big],
\end{aligned}
\end{equation}
where $\mathbf{y}=(y_1,\ldots,y_M)$ denotes the collection of
QT-induced auxiliary variables.
For any fixed $(\mathbf{x},\boldsymbol{\alpha})$, the optimal QT
auxiliary variables are given by
\begin{equation}
\label{eq:scalar_y_opt}
y_m^\star
=
\frac{
\sqrt{\omega_m c(\alpha_m)A_m(\mathbf{x})}
}{
\alpha_m^2B_m(\mathbf{x})
+
b(\alpha_m)A_m(\mathbf{x})
},
\quad
m=1,\ldots,M.
\end{equation}
\end{proposition}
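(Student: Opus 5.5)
The plan is to derive Proposition~\ref{Scalar_SEFP} from Proposition~\ref{Scalar_RIT} by applying the scalar QT termwise to the ratio inside each summand of \eqref{eq:scalar_rit_problem}, then chaining the two equivalences. For fixed $(\mathbf{x},\boldsymbol{\alpha})$, write the $m$-th summand of $F_r$ as $\mathcal{A}_m/\mathcal{B}_m$, with numerator $\mathcal{A}_m\triangleq \omega_m c(\alpha_m)A_m(\mathbf{x})\ge 0$ and denominator $\mathcal{B}_m\triangleq \alpha_m^2B_m(\mathbf{x})+b(\alpha_m)A_m(\mathbf{x})$. The one point that needs genuine checking is that $\mathcal{B}_m>0$, since the QT requires a positive denominator. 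This holds because $\alpha_m^2B_m(\mathbf{x})>0$ and $b(v)\ge 0$ for $v>0$. The latter follows from $b(0)=0$ and $b'(v)=\log(1+v)>0$. Likewise $c(v)\ge 0$, so the square root $\sqrt{\mathcal{A}_m}$ is well defined.

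Next I would carry out the QT step directly. For each $m$, the function $y_m\mapsto 2y_m\sqrt{\mathcal{A}_m}-y_m^2\mathcal{B}_m$ is a strictly concave quadratic in $y_m\in\mathbb{R}$. Setting its derivative to zero gives the unique maximizer $y_m^\star=\sqrt{\mathcal{A}_m}/\mathcal{B}_m$, which is exactly \eqref{eq:scalar_y_opt}. Substituting $y_m^\star$ back gives the maximal value $\mathcal{A}_m/\mathcal{B}_m$. The variables $y_m$ are decoupled across $m$, so
\begin{equation*}
\max_{\mathbf{y}\in\mathbb{R}^M}F_{rq}(\mathbf{x},\boldsymbol{\alpha},\mathbf{y})=F_r(\mathbf{x},\boldsymbol{\alpha})
\end{equation*}
holds for every $(\mathbf{x},\boldsymbol{\alpha})$. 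This is precisely the scalar QT of \cite[Theorem 1]{FPPart1}, so it could also simply be invoked.

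Finally, I would combine this with \eqref{eq:scalar_rit_pointwise_equiv}:
\begin{equation*}
\max_{\mathbf{x},\boldsymbol{\alpha},\mathbf{y}}F_{rq}=\max_{\mathbf{x},\boldsymbol{\alpha}}F_r=\max_{\mathbf{x}}F_s.
\end{equation*}
Because each inner maximization is attained, an optimal $\mathbf{x}$ of \eqref{eq:scalar_sefp_problem} is optimal for \eqref{w_sol_p}, and conversely. The conversion runs through $\alpha_m^\star=A_m/B_m$ and then $y_m^\star$.

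The case $A_m(\mathbf{x})=0$ needs separate handling. There the RIT maximizer over $\alpha_m$ is not unique, and the summand is identically zero. Correspondingly, $y_m^\star=0$ and the QT term vanishes, so equivalence still holds for any $\alpha_m>0$. The main obstacle is therefore only the positivity check on $\mathcal{B}_m$ together with this degenerate case; everything else is the standard QT argument.
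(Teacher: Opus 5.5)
Your proposal is correct and takes the same route the paper indicates. The paper gives no standalone proof (it cites the result) but obtains it by applying the scalar QT termwise to the RIT objective, with $\omega_m c(\alpha_m)A_m(\mathbf{x})$ as numerator and the strictly positive denominator $\alpha_m^2B_m(\mathbf{x})+b(\alpha_m)A_m(\mathbf{x})$, and then chaining with the pointwise RIT identity; this mirrors its matrix analogue, where Lemma~\ref{lem:operator_qt} is applied with $\mathbf{A}=\sqrt{\omega_m}\,\mathbf{S}_m c(\boldsymbol{\Gamma}_m)^{1/2}$ in the proof of Theorem~\ref{thm:matrix_sefp}, and your positivity check via $b'(v)=\log(1+v)$ and your handling of $A_m(\mathbf{x})=0$ are exactly the details worth verifying.
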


By Proposition \ref{Scalar_SEFP}, we have
\begin{equation}
\label{eq:scalar_sefp_pointwise_equiv}
\max_{\substack{
\boldsymbol{\alpha}\in\mathbb{R}_{++}^{M}\
\mathbf{y}\in\mathbb{R}^{M}
}}
F_{rq}
(\mathbf{x},\boldsymbol{\alpha},\mathbf{y})
=
F_s(\mathbf{x}),
\quad
\forall\mathbf{x}\in\mathcal{X}.
\end{equation}
This means the solution to the original weighted sum-of-logarithms maximization problem in \eqref{w_sol_p} can be alternatively done by solving the equivalent formulation in \eqref{eq:scalar_sefp_problem} with two additional auxiliary variables $\boldsymbol{\alpha}$ and $\mathbf{y}$.

When extending the scalar SEFP to the matrix case, we would expect a similar straightforward step as that for the FP \cite{FPPart3}. However, it turns out that the scalar-to-matrix extension for the SEFP is totally nontrivial.

\subsection{Why is the Scalar-to-Matrix Extension Nontrivial?}\label{rem:noncommutativity}
The main obstruction to the extension is noncommutativity. 
In contrast to the scalar-to-matrix extension of FP, the matrix RIT can not be obtained by mechanically replacing the scalar auxiliary variable $\boldsymbol{\alpha}$ with a matrix auxiliary variable $\boldsymbol{\Gamma}$.\footnote{The scalar-to-matrix extension of the LDT, however, exactly uses this method (see in~\cite[Prop.~3 and Thm.~2]{FPPart3}).} 
The hypothetical transformed objective expression is not a valid representation because the matrix products involved in the resulting rational expression cannot, in general, be rearranged as in the scalar case.
This is because $\boldsymbol{\Gamma}$ and $\mathbf{R}$ need not be simultaneously diagonalizable to satisfy the commutativity.

To invite deeper consideration, note that $\phi_{\alpha}(r)$ in \eqref{eq:scalar_rit_lower_bound} is a lower bound of $\log(1+r)$ that is tight at $r=\alpha$ for $r>0$.
Hence, the scalar auxiliary variable only needs to specify a single tangency value. 
For a matrix ratio, however, let $\mathbf R\in\mathbb H_+^d$ have eigenvalues $\lambda_i(\mathbf R)$, $i=1,\ldots,d$, since 
$\log\det(\mathbf{I}_d+\mathbf{R})
=\sum_{i=1}^{d}\log(1+\lambda_i(\mathbf{R})), $
the matrix extension of the RIT must associate potentially different scalar contact values with different $d$-dimensional spectral directions. 
Accordingly, the potential auxiliary matrix $\boldsymbol{\Gamma}$ needs to be chosen Hermitian positive definite, with eigendecomposition
\begin{equation}\label{eq:gamma_evd}
\boldsymbol{\Gamma}
=\boldsymbol{\Xi}  \operatorname{diag}(\gamma_1,\ldots,\gamma_d) \boldsymbol{\Xi}^{H},\ \quad \boldsymbol{\Xi}=[\boldsymbol{\xi}_1,\ldots,\boldsymbol{\xi}_d],
\end{equation}
where the eigenvalues $\{\gamma_i\}_{i=1}^{d}$ play the role of scalar contact values, whereas the eigenvectors $\{\boldsymbol{\xi}_i\}_{i=1}^{d}$ specify the spectral directions along which the scalar lower-bound property is lifted. This becomes feasible when resorting to the Hermitian functional calculus. 

\subsection{Hermitian Functional Calculus}
\label{subsec:Hermitian functional calculus}
We introduce a widely utilized mathematical tool--Hermitian functional calculus--in this subsection, which is critical for the following matrix RIT and SEFP constructions and the SEFP-XMMSE interconnection analysis.

For a Hermitian matrix $\mathbf{X}=\mathbf{U}\operatorname{diag}(\lambda_1,\ldots,\lambda_N)\mathbf{U}^{H}$,
let $\sigma(\mathbf{X})=\{\lambda_1,\ldots,\lambda_N\}$
denote its set of eigenvalues, called its spectrum.
For a scalar function $f$ which is well defined on $\sigma(\mathbf{X})$, the corresponding matrix function is defined as~\cite[Sec.~10.3]{shapiro2024functional}, \cite[Sec.~12.4.1]{einsiedler2017functional}
$$ f(\mathbf{X})\triangleq
\mathbf{U}\operatorname{diag}\!\big(f(\lambda_1),\ldots,f(\lambda_N)\big)\mathbf{U}^{H}. $$
This definition is independent of the eigenbasis chosen within each repeated eigenvalue subspace.

For scalar functions $f,g$, which are both well defined on $\sigma(\mathbf{X})$ and scalars $\alpha,\beta\in\mathbb{C}$, functional calculus preserves the algebraic operations for the corresponding matrix functions~\cite[Thm.~12.37]{einsiedler2017functional}
\[
\begin{aligned}
(\alpha f+\beta g)(\mathbf{X})
&=\alpha f(\mathbf{X})+\beta g(\mathbf{X}),\\
(fg)(\mathbf{X})
&=f(\mathbf{X})g(\mathbf{X})=g(\mathbf{X})f(\mathbf{X}).
\end{aligned}
\]
Thus, functions of the same Hermitian matrix share a common orthonormal eigenbasis and commute. 
Functions of different matrices need not commute.

Using the same definitions of the scalar functions 
$f,g$ and of the Hermitian matrix $\mathbf{X}$ as in the preceding two paragraphs, then
$f(\mathbf{X})$ and $g(\mathbf{X})$ are Hermitian, and pointwise scalar inequalities yield corresponding matrix inequalities~\cite[Thm.~12.37 and Cor.~12.42]{einsiedler2017functional}
\[
f(t)\leq g(t),\ \forall t\in\sigma(\mathbf{X}) \quad\Longrightarrow\quad f(\mathbf{X})\preceq g(\mathbf{X}).
\]
In particular, nonnegative or strictly positive values on the spectrum yield positive semidefinite or positive definite matrix functions, respectively.
This property could also yield a matrix RIT  construction as a lower bound of the logarithm function as in Lemma~\ref{lem:mRIT_LB} in the Appendix.

For a fixed $\mathbf{X}$, parameter integration can also be performed eigenvalue-wise~\cite[Example~12.43]{einsiedler2017functional}.
Specifically, if $h_\tau(t)$ is integrable in $\tau\in[0,1]$ at each $t\in\sigma(\mathbf{X})$, then $h(t)\triangleq\int_0^1 h_\tau(t)\,d\tau$ satisfies $h(\mathbf{X})=\int_0^1 h_\tau(\mathbf{X})\,d\tau$.
This property enables the spectral moment identities used to connect SEFP and XMMSE in Section~\ref{sec:xmmse_connection}.


\section{Matrix Reciprocal-Inversion Transform and Surrogate-Enhanced Fractional Programming}
\label{sec:matrix_rit}
With Hermitian functional calculus, we are able to extend the scalar RIT and the SEFP framework to the matrix version. 

\subsection{The Matrix RIT}
\label{subsec:matrix_rit}
With the matrix functional calculus introduced in Section~\ref{subsec:Hermitian functional calculus}, 
the scalar functions $b(\cdot)$ and $c(\cdot)$ in
\eqref{operator_b} and \eqref{operator_c} induce the corresponding Hermitian matrix functions.
For $\boldsymbol{\Gamma}$ in \eqref{eq:gamma_evd}, define
\begin{align}
b(\boldsymbol{\Gamma})
&\triangleq
(\mathbf{I}+\boldsymbol{\Gamma})\log(\mathbf{I}+\boldsymbol{\Gamma})-\boldsymbol{\Gamma},\label{eq:b_matrix}\\
c(\boldsymbol{\Gamma})
&\triangleq
(\mathbf{I}+\boldsymbol{\Gamma})[\log(\mathbf{I}+\boldsymbol{\Gamma})]^2.\label{eq:c_matrix}
\end{align}
By Hermitian functional calculus, we have
\begin{align}
b(\boldsymbol{\Gamma})
&=\boldsymbol{\Xi}\operatorname{diag}\big(b(\gamma_1),\ldots,b(\gamma_d)\big)\boldsymbol{\Xi}^{H},\label{eq:b_spectral}\\
c(\boldsymbol{\Gamma})
&=\boldsymbol{\Xi}\operatorname{diag}\big(c(\gamma_1),\ldots,c(\gamma_d)\big)\boldsymbol{\Xi}^{H}.\label{eq:c_spectral}
\end{align}
Since $b(\gamma_i)>0$ and $c(\gamma_i)>0$ for every $\gamma_i>0$, both
$b(\boldsymbol{\Gamma})$ and $c(\boldsymbol{\Gamma})$ are positive definite.

For each $m$, introduce $\boldsymbol{\Gamma}_m\in\mathbb{H}_{++}^{d_m}$ and define
\begin{subequations}\label{eq:vec_definitions}
\begin{align}
\mathbf{a}_{m,\boldsymbol{\Gamma}_m}
&\triangleq
\operatorname{vec}\!\left(\mathbf{S}_m c(\boldsymbol{\Gamma}_m)^{1/2}\right),\label{eq:a_gamma}\\
\mathbf{D}_{m,\boldsymbol{\Gamma}_m}
&\triangleq
(\boldsymbol{\Gamma}_m^{2})^{T}\otimes\mathbf{F}_m
+b(\boldsymbol{\Gamma}_m)^{T}\otimes\mathbf{S}_m\mathbf{S}_m^{H}.
\label{eq:D_gamma_kron}
\end{align}
\end{subequations}
We are now ready to state the matrix RIT.
\begin{theorem}[Matrix Reciprocal-Inversion Transform]\label{thm:matrix_rit}
The weighted sum-log-determinant problem \eqref{prob:generic_logdet} is equivalent to
\begin{equation}\label{eq:matrix_rit_problem}
\max_{\substack{\mathbf{x}\in\mathcal{X} , \{\boldsymbol{\Gamma}_m\in\mathbb{H}_{++}^{d_m}\}_{m=1}^{M}}}
F_{R}(\mathbf{x},\underline{\boldsymbol{\Gamma}})\triangleq
\sum_{m=1}^{M}\omega_m\,
\mathbf{a}_{m,\boldsymbol{\Gamma}_m}^{H}
\mathbf{D}_{m,\boldsymbol{\Gamma}_m}^{-1}
\mathbf{a}_{m,\boldsymbol{\Gamma}_m},
\end{equation}
where for every positively weighted term with \(\mathbf{R}_m\succ\mathbf{0}\), the unique optimal RIT-induced auxiliary matrix is
\begin{equation}\label{eq:gamma_opt_pd}
\boldsymbol{\Gamma}_m^{\star}
=
\mathbf{R}_m
=
\mathbf{S}_m^{H}\mathbf{F}_m^{-1}\mathbf{S}_m.
\end{equation}
If otherwise $\mathbf{R}_m$ is singular with eigen-decomposition
\begin{equation} 
\label{eq:Rm_decom}
\mathbf{R}_m =
\mathbf{P}_m
\operatorname{diag}\big(
\lambda_{m,1},\ldots,\lambda_{m,\kappa_m},
0,\ldots,0
\big)
\mathbf{P}_m^{H}, \ \lambda_{m,i}>0, 
\end{equation}
then every matrix of the form
\begin{equation} \label{eq:gamma_opt_singular}
\boldsymbol{\Gamma}_m^{\star} =
\mathbf{P}_m \operatorname{diag} \big(
\lambda_{m,1},\ldots,\lambda_{m,\kappa_m},
\eta_{m,\kappa_m+1},\ldots,\eta_{m,d_m}
\big)
\mathbf{P}_m^{H},
\end{equation}
for any $\eta_{m,j}>0$ is optimal and can attain the corresponding maximum term in~\eqref{eq:matrix_rit_problem}.
\end{theorem}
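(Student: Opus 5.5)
The plan is to reduce the Kronecker-structured quadratic form to a sum of scalar RIT bounds, one along each eigendirection of $\boldsymbol{\Gamma}_m$, and then apply Corollary~\ref{cor:scalar_rit_lower_bound} via Hermitian functional calculus. The objective in \eqref{eq:matrix_rit_problem} is separable in $m$ and $\omega_m\ge 0$. It therefore suffices to show, for a single term (index $m$ dropped), that
\[
\max_{\boldsymbol{\Gamma}\in\mathbb{H}_{++}^{d}}\mathbf{a}_{\boldsymbol{\Gamma}}^{H}\mathbf{D}_{\boldsymbol{\Gamma}}^{-1}\mathbf{a}_{\boldsymbol{\Gamma}}=\log\det(\mathbf{I}+\mathbf{R}),
\]
and to characterize the maximizers. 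Pointwise equality for every $\mathbf{x}$ then gives the equivalence of \eqref{prob:generic_logdet} and \eqref{eq:matrix_rit_problem}.

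\emph{Step 1: block-diagonalization.} Insert the eigendecomposition \eqref{eq:gamma_evd} together with \eqref{eq:b_spectral}. Using $(\boldsymbol{\Xi}\boldsymbol{\Lambda}\boldsymbol{\Xi}^H)^T=\boldsymbol{\Xi}^{*}\boldsymbol{\Lambda}\boldsymbol{\Xi}^{T}$ and the mixed-product rule, I will show
\[
\mathbf{D}_{\boldsymbol{\Gamma}}=(\boldsymbol{\Xi}^{*}\otimes\mathbf{I})\,\operatorname{blkdiag}_i\!\big(\gamma_i^2\mathbf{F}+b(\gamma_i)\mathbf{S}\mathbf{S}^H\big)\,(\boldsymbol{\Xi}^{T}\otimes\mathbf{I}).
\]
Likewise, $\operatorname{vec}(\mathbf{A}\mathbf{B})=(\mathbf{B}^T\otimes\mathbf{I})\operatorname{vec}(\mathbf{A})$ with $c(\boldsymbol{\Gamma})^{1/2}=\boldsymbol{\Xi}\operatorname{diag}(c(\gamma_i)^{1/2})\boldsymbol{\Xi}^H$ gives $\mathbf{a}_{\boldsymbol{\Gamma}}=(\boldsymbol{\Xi}^{*}\otimes\mathbf{I})\operatorname{vec}\big(\mathbf{S}\boldsymbol{\Xi}\operatorname{diag}(c(\gamma_i)^{1/2})\big)$. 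Since $\boldsymbol{\Xi}^{*}\otimes\mathbf{I}$ is unitary, the unitary factors cancel, and
\[
\mathbf{a}_{\boldsymbol{\Gamma}}^{H}\mathbf{D}_{\boldsymbol{\Gamma}}^{-1}\mathbf{a}_{\boldsymbol{\Gamma}}=\sum_{i=1}^{d}c(\gamma_i)\,\boldsymbol{\xi}_i^H\mathbf{S}^H\big(\gamma_i^2\mathbf{F}+b(\gamma_i)\mathbf{S}\mathbf{S}^H\big)^{-1}\mathbf{S}\boldsymbol{\xi}_i .
\]
I expect this bookkeeping (transposes, conjugates, vec ordering) to be the main technical obstacle. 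The rest is spectral.

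\emph{Step 2: push-through to functional calculus.} Set $\mathbf{G}=\mathbf{F}^{-1/2}\mathbf{S}$, so that $\mathbf{R}=\mathbf{G}^H\mathbf{G}$. The push-through identity $\mathbf{G}^H(\gamma^2\mathbf{I}+b\,\mathbf{G}\mathbf{G}^H)^{-1}\mathbf{G}=\mathbf{R}(\gamma^2\mathbf{I}+b\,\mathbf{R})^{-1}$ turns the $i$-th summand into $\boldsymbol{\xi}_i^H\phi_{\gamma_i}(\mathbf{R})\boldsymbol{\xi}_i$, with $\phi_\alpha$ as in \eqref{eq:scalar_rit_lower_bound}. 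Corollary~\ref{cor:scalar_rit_lower_bound} holds on all of $\sigma(\mathbf{R})\subset\mathbb{R}_+$. By the order-preserving property of Hermitian functional calculus, this gives $\phi_{\gamma_i}(\mathbf{R})\preceq\log(\mathbf{I}+\mathbf{R})$. Summing over $i$ with $\{\boldsymbol{\xi}_i\}$ orthonormal yields
\[
\mathbf{a}_{\boldsymbol{\Gamma}}^{H}\mathbf{D}_{\boldsymbol{\Gamma}}^{-1}\mathbf{a}_{\boldsymbol{\Gamma}}\le\sum_i\boldsymbol{\xi}_i^H\log(\mathbf{I}+\mathbf{R})\boldsymbol{\xi}_i=\operatorname{tr}\log(\mathbf{I}+\mathbf{R})=\log\det(\mathbf{I}+\mathbf{R}).
\]

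\emph{Step 3: equality and maximizers.} Equality holds if and only if $\boldsymbol{\xi}_i^H\big(\log(\mathbf{I}+\mathbf{R})-\phi_{\gamma_i}(\mathbf{R})\big)\boldsymbol{\xi}_i=0$ for every $i$. Because the matrix in the middle is PSD, this forces $\boldsymbol{\xi}_i$ into its kernel. By the equality case of Corollary~\ref{cor:scalar_rit_lower_bound}, that kernel is the span of the eigenvectors of $\mathbf{R}$ with eigenvalue $\gamma_i$ together with $\ker(\mathbf{R})$.
\begin{itemize}
\item If $\mathbf{R}\succ\mathbf{0}$, each $\boldsymbol{\xi}_i$ must be an eigenvector of $\mathbf{R}$ with eigenvalue $\gamma_i$, so $\boldsymbol{\Gamma}=\mathbf{R}$. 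This proves existence and uniqueness in \eqref{eq:gamma_opt_pd}.
\item If $\mathbf{R}$ is singular as in \eqref{eq:Rm_decom}, take $\boldsymbol{\Xi}=\mathbf{P}$. The first $\kappa$ directions carry $\gamma_i=\lambda_i$ and attain equality. The remaining directions lie in $\ker(\mathbf{R})$, where $\phi_{\eta}(0)=0=\log 1$ for any $\eta>0$, so they also attain equality. Hence every matrix of the form \eqref{eq:gamma_opt_singular} attains the maximum.
\end{itemize}
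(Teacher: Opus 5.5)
Your proposal is correct and follows essentially the paper's proof. You reduce $\mathbf{a}_{\boldsymbol{\Gamma}}^{H}\mathbf{D}_{\boldsymbol{\Gamma}}^{-1}\mathbf{a}_{\boldsymbol{\Gamma}}$ to the direction-wise sum $\sum_i \boldsymbol{\xi}_i^{H}\phi_{\gamma_i}(\mathbf{R})\boldsymbol{\xi}_i$ via the push-through identity, apply the scalar RIT bound eigenvalue-wise, and characterize equality through the kernels of the PSD gaps $\log(\mathbf{I}+\mathbf{R})-\phi_{\gamma_i}(\mathbf{R})$, exactly as in Lemma~\ref{lem:mRIT_LB}. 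The only difference is cosmetic: you get the direction-wise decoupling by block-diagonalizing $\mathbf{D}_{\boldsymbol{\Gamma}}$ with the unitary $\boldsymbol{\Xi}^{*}\otimes\mathbf{I}$, whereas the paper right-multiplies the operator equation $\mathcal{D}_{\boldsymbol{\Gamma}}(\mathbf{W}_{\boldsymbol{\Gamma}})=\mathbf{S}c(\boldsymbol{\Gamma})^{1/2}$ by each eigenvector $\boldsymbol{\xi}_i$, which is the same computation before vectorization.
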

\begin{proof}
See Appendix~\ref{pro:mRIT_theorem}.
\end{proof}

\begin{remark}[Scalar RIT as a special case]\label{rem:scalar_reduction} 
When $d_m=1$, let $\boldsymbol{\Gamma}_m=\gamma_m$, $\mathbf{S}_m=\mathbf{s}_m$, and $r_m=\mathbf{s}_m^{H}\mathbf{F}_m^{-1}\mathbf{s}_m$. Then
\begin{equation}
\mathbf{a}_{m,\gamma_m}^{H}
\mathbf{D}_{m,\gamma_m}^{-1}
\mathbf{a}_{m,\gamma_m}
=
\frac{c(\gamma_m)r_m}
{\gamma_m^2+b(\gamma_m)r_m},
\end{equation}
so Theorem~\ref{thm:matrix_rit} reduces exactly to the scalar RIT in {Proposition~\ref{Scalar_RIT}}. 
\end{remark}

To facilitate the successive combination with the QT in Section~\ref{subsec:matrix_sefp}, we give another equivalent representation of the matrix RIT by the following corollary.
\begin{corollary} [Operator-form Representation of Matrix RIT] \label{cor:mRIT_equ_repre}
For every \(\mathbf{F}_m\succ\mathbf{0}\) and \(\boldsymbol{\Gamma}_m\succ\mathbf{0}\),
define the linear matrix operator
\begin{equation}\label{eq:Dm_operator}
\mathcal{D}_{m,\boldsymbol{\Gamma}_m}(\mathbf{Y})
\triangleq
\mathbf{F}_m\mathbf{Y}\boldsymbol{\Gamma}_m^2
+\mathbf{S}_m\mathbf{S}_m^{H}\mathbf{Y}b(\boldsymbol{\Gamma}_m).
\end{equation}
The matrix RIT-induced objective $F_{R}$ defined in~\eqref{eq:matrix_rit_problem} in Theorem~\ref{thm:matrix_rit} can be equivalently represented by 
\begin{equation}
\autoeq{
F_{R}(\mathbf{x},\underline{\boldsymbol{\Gamma}})
\triangleq\sum_{m=1}^{M}\omega_m
\left\langle
\mathbf{S}_m c(\boldsymbol{\Gamma}_m)^{1/2},
\mathcal{D}_{m,\boldsymbol{\Gamma}_m}^{-1}
\big(\mathbf{S}_m c(\boldsymbol{\Gamma}_m)^{1/2}\big)
\right\rangle_F.}
\label{eq:operator_kron_equivalence}
\end{equation}
\end{corollary}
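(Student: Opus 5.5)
The plan is to show that the Kronecker-form quadratic form in \eqref{eq:matrix_rit_problem} and the operator-form inner product in \eqref{eq:operator_kron_equivalence} coincide term by term. Fix $m$ and drop the index. The tool is the standard identity $\operatorname{vec}(\mathbf{A}\mathbf{Y}\mathbf{B})=(\mathbf{B}^{T}\otimes\mathbf{A})\operatorname{vec}(\mathbf{Y})$. Applied to the two summands of \eqref{eq:Dm_operator}, it gives
\[
\operatorname{vec}\big(\mathcal{D}_{\boldsymbol{\Gamma}}(\mathbf{Y})\big)
=\big((\boldsymbol{\Gamma}^2)^{T}\otimes\mathbf{F}+b(\boldsymbol{\Gamma})^{T}\otimes\mathbf{S}\mathbf{S}^{H}\big)\operatorname{vec}(\mathbf{Y})
=\mathbf{D}_{\boldsymbol{\Gamma}}\operatorname{vec}(\mathbf{Y}),
\]
with $\mathbf{D}_{\boldsymbol{\Gamma}}$ as in \eqref{eq:D_gamma_kron}. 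So $\mathcal{D}_{\boldsymbol{\Gamma}}$ is exactly the linear map represented by $\mathbf{D}_{\boldsymbol{\Gamma}}$ in vectorized coordinates.

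The main step is to justify that $\mathcal{D}_{\boldsymbol{\Gamma}}^{-1}$ exists, which amounts to showing $\mathbf{D}_{\boldsymbol{\Gamma}}$ is invertible.

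\begin{itemize}
\item By the discussion after \eqref{eq:c_spectral}, $\boldsymbol{\Gamma}^2\succ\mathbf{0}$ and $b(\boldsymbol{\Gamma})\succ\mathbf{0}$. Transposes of Hermitian positive definite matrices are again Hermitian positive definite, since they have the same eigenvalues.
\item Kronecker products of Hermitian PSD matrices are Hermitian PSD, because the eigenvalues of $\mathbf{A}\otimes\mathbf{B}$ are the pairwise products of eigenvalues. Hence $(\boldsymbol{\Gamma}^2)^{T}\otimes\mathbf{F}\succ\mathbf{0}$, using $\mathbf{F}\succ\mathbf{0}$, and $b(\boldsymbol{\Gamma})^{T}\otimes\mathbf{S}\mathbf{S}^{H}\succeq\mathbf{0}$.
\item Their sum is therefore positive definite, so $\mathbf{D}_{\boldsymbol{\Gamma}}$ is invertible.
\end{itemize}

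Consequently $\mathcal{D}_{\boldsymbol{\Gamma}}$ is a bijection on $\mathbb{C}^{n\times d}$, and $\operatorname{vec}\big(\mathcal{D}_{\boldsymbol{\Gamma}}^{-1}(\mathbf{Z})\big)=\mathbf{D}_{\boldsymbol{\Gamma}}^{-1}\operatorname{vec}(\mathbf{Z})$. This is the step that needs care, but it is routine once positivity is tracked. The only subtlety is that positivity must come from the $\mathbf{F}$-term alone, because $\mathbf{S}\mathbf{S}^{H}$ may be singular.

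Finally, take $\mathbf{Z}=\mathbf{S}c(\boldsymbol{\Gamma})^{1/2}$, so that $\operatorname{vec}(\mathbf{Z})=\mathbf{a}_{\boldsymbol{\Gamma}}$ by \eqref{eq:a_gamma}. Using $\langle\mathbf{X},\mathbf{Y}\rangle_F=\operatorname{tr}(\mathbf{X}^{H}\mathbf{Y})=\operatorname{vec}(\mathbf{X})^{H}\operatorname{vec}(\mathbf{Y})$, we get
\[
\big\langle\mathbf{Z},\mathcal{D}_{\boldsymbol{\Gamma}}^{-1}(\mathbf{Z})\big\rangle_F=\mathbf{a}_{\boldsymbol{\Gamma}}^{H}\mathbf{D}_{\boldsymbol{\Gamma}}^{-1}\mathbf{a}_{\boldsymbol{\Gamma}}.
\]
Multiplying by $\omega_m$ and summing over $m$ yields \eqref{eq:operator_kron_equivalence} from \eqref{eq:matrix_rit_problem}. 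Since the two objectives agree for every $(\mathbf{x},\underline{\boldsymbol{\Gamma}})$, the equivalence and optimality statements of Theorem~\ref{thm:matrix_rit} carry over unchanged.
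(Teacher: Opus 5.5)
Your proof is correct and follows essentially the same route as the paper's appendix proof: the identity $\operatorname{vec}(\mathbf{A}\mathbf{Y}\mathbf{B})=(\mathbf{B}^{T}\otimes\mathbf{A})\operatorname{vec}(\mathbf{Y})$ identifies $\mathcal{D}_{\boldsymbol{\Gamma}}$ with $\mathbf{D}_{\boldsymbol{\Gamma}}$, invertibility is established (with positivity coming from the $\mathbf{F}$-term alone), and $\langle\mathbf{X},\mathbf{Y}\rangle_F=\operatorname{vec}(\mathbf{X})^{H}\operatorname{vec}(\mathbf{Y})$ converts the inner product into $\mathbf{a}_{\boldsymbol{\Gamma}}^{H}\mathbf{D}_{\boldsymbol{\Gamma}}^{-1}\mathbf{a}_{\boldsymbol{\Gamma}}$. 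The only difference is cosmetic: you certify invertibility through the Kronecker spectrum of $\mathbf{D}_{\boldsymbol{\Gamma}}$, whereas the paper shows directly that $\mathcal{D}_{\boldsymbol{\Gamma}}$ is self-adjoint with $\langle\mathbf{A},\mathcal{D}_{\boldsymbol{\Gamma}}(\mathbf{A})\rangle_F=\|\mathbf{F}^{1/2}\mathbf{A}\boldsymbol{\Gamma}\|_F^2+\|\mathbf{S}^{H}\mathbf{A}\,b(\boldsymbol{\Gamma})^{1/2}\|_F^2>0$; these are equivalent via the vec isometry, and your Hermitian positive definiteness of $\mathbf{D}_{\boldsymbol{\Gamma}}$ likewise yields the self-adjointness that the operator-form QT lemma needs later.
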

\begin{proof}
See Appendix~\ref{pro:coro_oper_repr}.
\end{proof}

\subsection{Matrix SEFP = Matrix RIT + QT}
\label{subsec:matrix_sefp}
As in the scalar case~\cite[Section III.C]{SEFP_scalar}, the matrix RIT remains compatible with the matrix   QT~\cite[Theorem~1]{FPPart3}. 
To apply the QT directly to the operator-form representation of the matrix RIT in Corollary~\ref{cor:mRIT_equ_repre}, we first state the QT identity in an equivalent operator form in the following lemma.

\begin{lemma}[Operator-Form QT Identity]\label{lem:operator_qt}
Let $\mathcal{D}:\mathbb{C}^{n\times d}\rightarrow\mathbb{C}^{n\times d}$ be a self-adjoint and positive definite linear operator. 
Then, for any $\mathbf{A}\in\mathbb{C}^{n\times d}$,
\begin{equation}\label{eq:operator_qt_identity}
\left\langle
\mathbf{A},\mathcal{D}^{-1}(\mathbf{A})
\right\rangle_F
=
\max_{\mathbf{Y}\in\mathbb{C}^{n\times d}}
\left\{
2\Re\left\langle\mathbf{A},\mathbf{Y}\right\rangle_F
-
\left\langle\mathbf{Y},\mathcal{D}(\mathbf{Y})\right\rangle_F
\right\}.
\end{equation}
The optimal QT-induced auxiliary variable is unique and is given by
\begin{equation}\label{eq:operator_qt_opt}
\mathbf{Y}^{\star}=\mathcal{D}^{-1}(\mathbf{A}).
\end{equation}
\end{lemma}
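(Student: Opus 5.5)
The plan is to prove the identity by completing the square in the Hilbert space $\mathbb{C}^{n\times d}$, equipped with the real-valued inner product $\Re\langle\cdot,\cdot\rangle_F$. We treat $\mathcal{D}$ as self-adjoint with respect to $\langle\cdot,\cdot\rangle_F$, that is, $\langle\mathbf{X},\mathcal{D}(\mathbf{Y})\rangle_F=\langle\mathcal{D}(\mathbf{X}),\mathbf{Y}\rangle_F$. Positive definiteness means $\langle\mathbf{Y},\mathcal{D}(\mathbf{Y})\rangle_F>0$ for every $\mathbf{Y}\neq\mathbf{0}$. On a finite-dimensional space this already makes $\mathcal{D}$ injective, hence invertible, so $\mathcal{D}^{-1}$ exists. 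It also makes $\mathcal{D}^{-1}$ self-adjoint and positive definite.

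Set $\mathbf{Y}_0\triangleq\mathcal{D}^{-1}(\mathbf{A})$, so that $\mathbf{A}=\mathcal{D}(\mathbf{Y}_0)$. For an arbitrary $\mathbf{Y}$, we expand the quadratic form
\begin{equation*}
\big\langle\mathbf{Y}-\mathbf{Y}_0,\mathcal{D}(\mathbf{Y}-\mathbf{Y}_0)\big\rangle_F
=\langle\mathbf{Y},\mathcal{D}(\mathbf{Y})\rangle_F-2\Re\langle\mathbf{Y}_0,\mathcal{D}(\mathbf{Y})\rangle_F+\langle\mathbf{Y}_0,\mathcal{D}(\mathbf{Y}_0)\rangle_F .
\end{equation*}
Self-adjointness is what lets the two cross terms be merged into a single real part. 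It also gives $\langle\mathbf{Y}_0,\mathcal{D}(\mathbf{Y})\rangle_F=\langle\mathcal{D}(\mathbf{Y}_0),\mathbf{Y}\rangle_F=\langle\mathbf{A},\mathbf{Y}\rangle_F$. Moreover, $\langle\mathbf{Y}_0,\mathcal{D}(\mathbf{Y}_0)\rangle_F=\langle\mathcal{D}^{-1}(\mathbf{A}),\mathbf{A}\rangle_F$, which is real and equal to $\langle\mathbf{A},\mathcal{D}^{-1}(\mathbf{A})\rangle_F$. Rearranging therefore yields
\begin{equation*}
2\Re\langle\mathbf{A},\mathbf{Y}\rangle_F-\langle\mathbf{Y},\mathcal{D}(\mathbf{Y})\rangle_F
=\langle\mathbf{A},\mathcal{D}^{-1}(\mathbf{A})\rangle_F-\big\langle\mathbf{Y}-\mathbf{Y}_0,\mathcal{D}(\mathbf{Y}-\mathbf{Y}_0)\big\rangle_F .
\end{equation*}

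Positive definiteness of $\mathcal{D}$ makes the subtracted term nonnegative, and it vanishes if and only if $\mathbf{Y}=\mathbf{Y}_0$. Hence the supremum is attained and equals the left-hand side of \eqref{eq:operator_qt_identity}, and the maximizer \eqref{eq:operator_qt_opt} is unique.

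We expect the only delicate step to be the bookkeeping of conjugation and real parts in the cross terms, specifically ensuring that both cross terms reduce to $\Re\langle\mathbf{A},\mathbf{Y}\rangle_F$. As an alternative, one could vectorize, writing $\mathcal{D}$ as a Hermitian positive definite matrix acting on $\operatorname{vec}(\mathbf{Y})$ as in \eqref{eq:D_gamma_kron}, and invoke the standard vector QT/completing-the-square argument. The operator form above avoids that detour.
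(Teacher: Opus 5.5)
Your proof is correct and follows the paper's argument essentially verbatim. Like the paper, you set $\mathbf{Y}^{\star}=\mathcal{D}^{-1}(\mathbf{A})$, use self-adjointness to complete the square, and conclude from positive definiteness that the residual quadratic form is nonnegative and vanishes only at $\mathbf{Y}^{\star}$. Your explicit checks that $\mathcal{D}^{-1}$ exists in finite dimensions and that the cross terms merge into $2\Re\langle\mathbf{A},\mathbf{Y}\rangle_F$ fill in details the paper leaves implicit.
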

\begin{proof}
Applying $\mathcal{D}$ to both sides of~\eqref{eq:operator_qt_opt} gives $\mathcal{D}(\mathbf{Y}^{\star})=\mathbf{A}$. 
Since $\mathcal{D}$ is self-adjoint, for any $\mathbf{Y}$,
\begin{align}
&2\Re\left\langle\mathbf{A},\mathbf{Y}\right\rangle_F
-\left\langle\mathbf{Y},\mathcal{D}(\mathbf{Y})\right\rangle_F
\nonumber\\
&\quad=
\left\langle
\mathbf{A},\mathcal{D}^{-1}(\mathbf{A})
\right\rangle_F
-\left\langle
\mathbf{Y}-\mathbf{Y}^{\star},
\mathcal{D}(\mathbf{Y}-\mathbf{Y}^{\star})
\right\rangle_F.
\label{eq:operator_qt_complete_square}
\end{align}
Because $\mathcal{D}$ is positive definite, the second term on the right-hand side of \eqref{eq:operator_qt_complete_square} is nonnegative and equals zero if and only if $\mathbf{Y}=\mathbf{Y}^{\star}$. 
Therefore, \eqref{eq:operator_qt_identity} holds, with the unique optimizer given by \eqref{eq:operator_qt_opt}.
\end{proof}
Applying Lemma~\ref{lem:operator_qt} to the matrix RIT leads to the following matrix SEFP formulation.
\begin{theorem}[Matrix Surrogate-Enhanced Fractional Programming]\label{thm:matrix_sefp}
By applying the matrix RIT and the matrix QT successively, the weighted sum-log-determinant problem \eqref{prob:generic_logdet} is equivalent to
\begin{subequations}\label{prob:matrix_sefp}
\begin{align}
&\underset{\mathbf{x},\{\boldsymbol{\Gamma}_m,\mathbf{Y}_m\}_{m=1}^{M}}
{\operatorname{max}}\quad F_{RQ}(\mathbf{x},\underline{\boldsymbol{\Gamma}},\underline{\mathbf{Y}})
\label{prob:matrix_sefp_obj}\\
&\qquad \mathrm{s.t.}\quad
\mathbf{x}\in\mathcal{X},\ 
\boldsymbol{\Gamma}_m\in\mathbb{H}_{++}^{d_m},\ 
\mathbf{Y}_m\in\mathbb{C}^{n_m\times d_m},
\label{prob:matrix_sefp_aux}
\end{align}
\end{subequations} 
\begin{figure*}[!t]
\normalsize
\begin{equation} \label{def_of_FmSEFP}
 F_{RQ}(\mathbf{x},\underline{\boldsymbol{\Gamma}},\underline{\mathbf{Y}})   \triangleq
\sum_{m=1}^{M}
\Big[
2\sqrt{\omega_m}\,\Re\operatorname{tr}\!\left(c(\boldsymbol{\Gamma}_m)^{1/2}\mathbf{S}_m^{H}\mathbf{Y}_m
\right)
-\operatorname{tr}\!\left(
\mathbf{Y}_m^{H}\mathbf{F}_m\mathbf{Y}_m\boldsymbol{\Gamma}_m^{2}
\right)
-\operatorname{tr}\!\left(
\mathbf{Y}_m^{H}\mathbf{S}_m\mathbf{S}_m^{H}\mathbf{Y}_m b(\boldsymbol{\Gamma}_m)
\right)
\Big].
\end{equation}
\hrulefill 
\vspace*{-1pt} 
\end{figure*}
where $F_{RQ}(\mathbf{x},\underline{\boldsymbol{\Gamma}},\underline{\mathbf{Y}})$ is defined in \eqref{def_of_FmSEFP}.

For any fixed $\mathbf{x}$, the optimal RIT-induced auxiliary matrix $\boldsymbol{\Gamma}_m^{\star}$ is characterized exactly as in Theorem~\ref{thm:matrix_rit}.
Then, for any fixed $\mathbf{x}$ and $\boldsymbol{\Gamma}_m$, Lemma~\ref{lem:operator_qt} gives the unique optimal QT-induced auxiliary variable update 
\begin{equation}\label{eq:Y_operator_update}
\mathbf{Y}_m^{\star} =
\sqrt{\omega_m}\,
\mathcal{D}_{m,\boldsymbol{\Gamma}_m}^{-1}
\!\left(
\mathbf{S}_m c(\boldsymbol{\Gamma}_m)^{1/2}
\right).
\end{equation}
Furthermore, when $\boldsymbol{\Gamma}_m$ is updated optimally, $\mathbf{Y}_m^{\star}$ has the closed-form expression
\begin{equation}\label{eq:Y_closeform}
\mathbf{Y}_m^{\star} =\sqrt{\omega_m} \mathbf{F}^{-1}_m \mathbf{S}_m \mathbf{R}_m^\dagger(\mathbf{I}+\mathbf{R}_m)^{-1/2}.
\end{equation}
\end{theorem}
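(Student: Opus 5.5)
The plan is to chain Theorem~\ref{thm:matrix_rit} (in the operator form of Corollary~\ref{cor:mRIT_equ_repre}) with Lemma~\ref{lem:operator_qt}, applied term by term, and then to evaluate the optimal $\mathbf{Y}_m$ explicitly at $\boldsymbol{\Gamma}_m^\star$.

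\textbf{Step 1: the QT lemma applies to each term.} Fix $m$. We verify that $\mathcal{D}_{m,\boldsymbol{\Gamma}_m}$ in \eqref{eq:Dm_operator} is self-adjoint and positive definite on $\mathbb{C}^{n_m\times d_m}$ with the Frobenius inner product. For any $\mathbf{Y}$,
\[
\langle \mathbf{Y},\mathcal{D}_{m,\boldsymbol{\Gamma}_m}(\mathbf{Y})\rangle_F=\tr(\mathbf{Y}^H\mathbf{F}_m\mathbf{Y}\boldsymbol{\Gamma}_m^2)+\tr(\mathbf{Y}^H\mathbf{S}_m\mathbf{S}_m^H\mathbf{Y}b(\boldsymbol{\Gamma}_m)).
\]
Each trace has the form $\tr(\mathbf{P}^{1/2}\mathbf{Y}^H\mathbf{Q}\mathbf{Y}\mathbf{P}^{1/2})$ with $\mathbf{P},\mathbf{Q}\succeq\mathbf{0}$. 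The first term is strictly positive for $\mathbf{Y}\neq\mathbf{0}$, because $\mathbf{F}_m\succ\mathbf{0}$ and $\boldsymbol{\Gamma}_m^2\succ\mathbf{0}$; the second is nonnegative. Self-adjointness follows from cyclicity of the trace together with the Hermitian property of $\mathbf{F}_m$, $\mathbf{S}_m\mathbf{S}_m^H$, $\boldsymbol{\Gamma}_m^2$ and $b(\boldsymbol{\Gamma}_m)$. Equivalently, $\mathbf{D}_{m,\boldsymbol{\Gamma}_m}$ in \eqref{eq:D_gamma_kron} is a sum of Kronecker products of PSD matrices, with the first summand PD.

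\textbf{Step 2: scaling and summing.} Set $\mathbf{A}=\sqrt{\omega_m}\,\mathbf{S}_m c(\boldsymbol{\Gamma}_m)^{1/2}$. Then $\omega_m\langle \mathbf{S}_m c^{1/2},\mathcal{D}^{-1}(\mathbf{S}_m c^{1/2})\rangle_F$ equals the maximum over $\mathbf{Y}_m$ in Lemma~\ref{lem:operator_qt}. Expanding $2\Re\langle\mathbf{A},\mathbf{Y}_m\rangle_F-\langle\mathbf{Y}_m,\mathcal{D}(\mathbf{Y}_m)\rangle_F$ gives exactly the $m$-th bracket of \eqref{def_of_FmSEFP}, since $c(\boldsymbol{\Gamma}_m)^{1/2}$ is Hermitian. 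Summing over $m$, the terms decouple in $\mathbf{Y}_m$, so $\max_{\underline{\mathbf{Y}}}F_{RQ}=F_R$. Combined with Theorem~\ref{thm:matrix_rit}, this yields $\max_{\underline{\boldsymbol{\Gamma}},\underline{\mathbf{Y}}}F_{RQ}(\mathbf{x},\cdot)=F(\mathbf{x})$ for every $\mathbf{x}$, which is the claimed equivalence. The characterization of $\boldsymbol{\Gamma}_m^\star$ is inherited directly, because maximizing over $\mathbf{Y}$ first returns $F_R$. The update \eqref{eq:Y_operator_update} follows from \eqref{eq:operator_qt_opt} and linearity of $\mathcal{D}^{-1}$.

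\textbf{Step 3: the closed form \eqref{eq:Y_closeform}, the main obstacle.} Since $\mathcal{D}$ is invertible, it suffices to verify that the candidate $\mathbf{Y}=\sqrt{\omega_m}\mathbf{F}_m^{-1}\mathbf{S}_m\mathbf{R}_m^\dagger(\mathbf{I}+\mathbf{R}_m)^{-1/2}$ satisfies $\mathcal{D}(\mathbf{Y})=\sqrt{\omega_m}\mathbf{S}_m c(\boldsymbol{\Gamma}^\star)^{1/2}$. Substituting gives
\[
\mathcal{D}(\mathbf{Y})=\sqrt{\omega_m}\,\mathbf{S}_m\big[\mathbf{R}^\dagger(\mathbf{I}+\mathbf{R})^{-1/2}\boldsymbol{\Gamma}^2+\mathbf{R}\mathbf{R}^\dagger(\mathbf{I}+\mathbf{R})^{-1/2}b(\boldsymbol{\Gamma})\big],
\]
using $\mathbf{S}^H\mathbf{F}^{-1}\mathbf{S}=\mathbf{R}$.

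In the PD case, $\boldsymbol{\Gamma}^\star=\mathbf{R}$ and all factors are functions of $\mathbf{R}$, so they commute. The bracket becomes $g(\mathbf{R})$ with
\[
g(\lambda)=(1+\lambda)^{-1/2}\big(\lambda+b(\lambda)\big)=(1+\lambda)^{1/2}\log(1+\lambda)=c(\lambda)^{1/2},
\]
which is the desired identity.

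In the singular case, $\boldsymbol{\Gamma}^\star$ in \eqref{eq:gamma_opt_singular} shares the eigenbasis $\mathbf{P}_m$ with $\mathbf{R}$, so everything is simultaneously diagonal. On the positive eigenvalues the computation above goes through. On a null direction $\mathbf{p}_j$, the bracket's $j$-th column vanishes because $\mathbf{R}^\dagger\mathbf{p}_j=\mathbf{0}$ and $\mathbf{R}\mathbf{R}^\dagger\mathbf{p}_j=\mathbf{0}$. The target $\mathbf{S}c^{1/2}$ also vanishes there, since $\mathbf{R}\mathbf{p}_j=\mathbf{0}$ implies $\mathbf{F}^{-1/2}\mathbf{S}\mathbf{p}_j=\mathbf{0}$, hence $\mathbf{S}\mathbf{p}_j=\mathbf{0}$. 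Consequently $\mathbf{S}_m\mathbf{M}=\mathbf{S}_m\mathbf{M}'$ whenever the diagonal matrices $\mathbf{M},\mathbf{M}'$ (in the $\mathbf{P}_m$ basis) agree on the range of $\mathbf{R}$. This shows \eqref{eq:Y_closeform} holds for every admissible choice of $\eta_{m,j}$. This kernel bookkeeping is the delicate part of the argument.
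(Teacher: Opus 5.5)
Your proposal is correct and follows the paper's own proof. You apply Lemma~\ref{lem:operator_qt} term by term to the operator form of Corollary~\ref{cor:mRIT_equ_repre} with $\mathbf{A}=\sqrt{\omega_m}\,\mathbf{S}_m c(\boldsymbol{\Gamma}_m)^{1/2}$, combine the result with Theorem~\ref{thm:matrix_rit}, and obtain \eqref{eq:Y_closeform} by substituting $\boldsymbol{\Gamma}_m^\star$ into the Sylvester equation $\mathcal{D}_{m,\boldsymbol{\Gamma}_m}(\mathbf{Y}_m)=\sqrt{\omega_m}\,\mathbf{S}_m c(\boldsymbol{\Gamma}_m)^{1/2}$. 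Your explicit eigendirection check in the singular case, using $\mathbf{S}_m\mathbf{p}_j=\mathbf{0}$ and $\mathbf{R}_m^\dagger\mathbf{p}_j=\mathbf{0}$ for $\mathbf{p}_j\in\ker(\mathbf{R}_m)$, is correct and makes explicit a step the paper only asserts.
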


\begin{proof} See Appendix~\ref{proof_of_Thm_mSEFP}.
\end{proof}

\subsection{MM Interpretation}
\label{subsec:mm_interpretation}
According to the correspondence between the BCA algorithm and MM theory~\cite[Section V.A]{BCA+MM WSR}, it follows that a BCA algorithm's objective with the auxiliary variables fixed at their current optimal values is a globally valid minorizing surrogate of the original objective and is tight at the current iterate. As such, the matrix SEFP also admits a direct MM interpretation. 

Specifically, by Theorems~\ref{thm:matrix_rit} and~\ref{thm:matrix_sefp},
\begin{equation}\label{eq:mm_value_representation}
F(\mathbf{x})
=
\max_{\underline{\boldsymbol{\Gamma}}}
F_R(\mathbf{x},\underline{\boldsymbol{\Gamma}})
=
\max_{\underline{\boldsymbol{\Gamma}},\underline{\mathbf{Y}}}
F_{RQ}(\mathbf{x},\underline{\boldsymbol{\Gamma}},\underline{\mathbf{Y}}).
\end{equation}
Now let us turn to the MM perspective. Let $\mathbf{x}^{(t)}$ be the current iterate. 
For clarity, for each $\omega_m>0$, if $\mathbf{R}_m(\mathbf{x}^{(t)})\succ\mathbf{0}$, set $\boldsymbol{\Gamma}_m^{(t)} = \mathbf{R}_m(\mathbf{x}^{(t)})$ characterized in Theorem~\ref{thm:matrix_rit}.
If instead $\mathbf{R}_m(\mathbf{x}^{(t)})$ is singular, select any maximizing $\boldsymbol{\Gamma}_m^{(t)}$ characterized in~\eqref{eq:gamma_opt_singular}.
Then let $\mathbf{Y}_m^{(t)}$ be the corresponding unique QT maximizer given by~\eqref{eq:Y_operator_update}. 
The matrix SEFP surrogate can be directly defined as
\begin{equation}\label{eq:mm_surrogate}
Q_{\mathrm{SEFP}}(\mathbf{x};\mathbf{x}^{(t)}) \triangleq  F_{RQ} \!\left(\mathbf{x},
\underline{\boldsymbol{\Gamma}}^{(t)},
\underline{\mathbf{Y}}^{(t)}
\right),
\end{equation}
with
\begin{align}
Q_{\mathrm{SEFP}}(\mathbf{x};\mathbf{x}^{(t)})
&\leq
F_R(\mathbf{x},\underline{\boldsymbol{\Gamma}}^{(t)})
\leq F(\mathbf{x}),
\label{eq:mm_lower_bound}
\end{align}
where the first inequality in~\eqref{eq:mm_lower_bound} follows from the matrix QT, and the second follows from the matrix RIT. The equalities hold at the point of $\mathbf{x}^{(t)}$, i.e.,
\begin{align}
Q_{\mathrm{SEFP}}(\mathbf{x}^{(t)};\mathbf{x}^{(t)})
&=
F_R(\mathbf{x}^{(t)},\underline{\boldsymbol{\Gamma}}^{(t)})
=F(\mathbf{x}^{(t)}).
\label{eq:mm_tightness}
\end{align}

\section{INTERCONNECTIONS WITH XMMSE}
\label{sec:xmmse_connection}
It has been observed in the literature that the FP-type and the WMMSE-type algorithms share certain intrinsic interconnections.
Specifically, \cite{FPPart3} first revealed that WMMSE can be interpreted as a specific application of using FP to solve the optimal beamforming problem, and subsequently \cite{BCA+MM WSR} provided a more comprehensive and systematic examination of their intrinsic connections.

Remarkably, such an intrinsic interconnection also exists between our proposed SEFP and the recently proposed XMMSE method~\cite{XMMSE}, essentially aligning with that between FP and WMMSE.
As shown in Fig.~\ref{fig: interconnection_puzzle}, we complete the puzzle of the interconnections between the two FP-type and the two MMSE-type algorithms, where the missing part between SEFP and XMMSE is given in this section.

\begin{figure}[htbp] 
  \centering 
  \hspace*{-0.4cm}\includegraphics[scale=0.76]{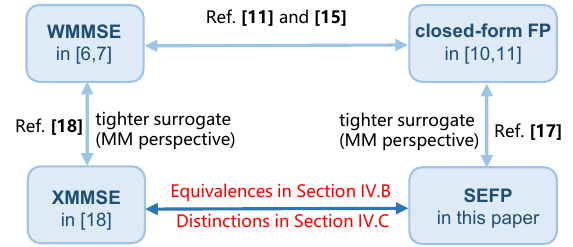} 
  \caption{{Interconnections between the two FP-type and the two MMSE-type algorithms, where the last piece between XMMSE and SEFP is completed.}} 
  \label{fig: interconnection_puzzle} %
\end{figure}

To make this connection explicit, this section proceeds in three steps. First, we recast XMMSE as a functional auxiliary-variable representation, thereby placing XMMSE and SEFP in a common BCA optimization framework.
Second, we establish a lossless finite dimensional factorization of the XMMSE auxiliary-variable path through the SEFP variables and show that, on the optimal auxiliary variable manifold, the two representations induce exactly the same MM surrogate. 
Finally, we clarify the algorithmic implications of this equivalence by showing that XMMSE corresponds to a particular realization within the broader SEFP family, while the additional decoupling freedom of SEFP becomes consequential when scheduling or association is optimized.

\subsection{Functional Auxiliary-Variable Representation of XMMSE}
To place XMMSE and SEFP in the same mathematical framework for comparison, we recast XMMSE directly for the generic weighted sum-log-determinant problem~\eqref{prob:generic_logdet}, without invoking its MSE interpretation.

Specifically, for the $m$-th unweighted term in $F(\mathbf{x})$, and by the variable substitution $\tau=1-\epsilon$,
the exact rate identity underlying XMMSE~\cite[Eq.~(4)]{XMMSE} can be written as
\begin{equation}
\begin{aligned}
\log\det(\mathbf I_{d_m}+\mathbf R_m)
&=\int_0^1
\operatorname{tr}\!\left[(\mathbf I_{d_m}+\tau\mathbf R_m)^{-1}\mathbf R_m\right]
\, d\tau\\
&=\int_0^1
\operatorname{tr}\!\left[\mathbf S_m^H(\mathbf F_m+\tau\mathbf S_m\mathbf S_m^H)^{-1}\mathbf S_m\right]
\, d\tau.
\end{aligned}
\label{eq:xmmse_integral}
\end{equation}
The second equality follows from the push-through identity~\cite[Eq. (167)]{petersen2012matrix}
$$(\mathbf I+\tau\mathbf S^H\mathbf F^{-1}\mathbf S)^{-1}\mathbf S^H\mathbf F^{-1}
=\mathbf S^H(\mathbf F+\tau\mathbf S\mathbf S^H)^{-1}$$ 
when considering $ \mathbf{S}^H$ and $\tau \mathbf{F}^{-1} \mathbf{S}$ as two matrices.

Notice that in~\cite{XMMSE}, the following derivation relied on the physical relationship between the WSR and regularized MMSE receivers. To ensure a concise logical structure, we instead adopt a purely mathematical perspective here. For each $m$ and $\tau\in[0,1]$, introduce \(\mathbf U_{m,\tau}\in\mathbb C^{n_m\times d_m}\). Applying the matrix QT~\cite[Theorem~1]{FPPart3} pointwise yields
\begin{equation}
\begin{aligned}
&\operatorname{tr}\!\left[\mathbf S_m^H(\mathbf F_m+\tau\mathbf S_m\mathbf S_m^H)^{-1}\mathbf S_m\right]=\max_{\mathbf U_{m,\tau}}\Big\{
2\Re\operatorname{tr}(\mathbf S_m^H\mathbf U_{m,\tau})\\
&\hspace{14mm}
-\operatorname{tr}(\mathbf U_{m,\tau}^H\mathbf F_m\mathbf U_{m,\tau})
-\tau \operatorname{tr}(\mathbf U_{m,\tau}^H \mathbf S_m\mathbf S_m^H\mathbf U_{m,\tau})
\Big\},
\end{aligned}
\label{eq:xmmse_inverse_quadratic}
\end{equation}
for which the optimal auxiliary variable is
\begin{equation}
\mathbf U_{m,\tau}^{\star}
=(\mathbf F_m+\tau\mathbf S_m\mathbf S_m^H)^{-1}\mathbf S_m
=\mathbf F_m^{-1}\mathbf S_m(\mathbf I_{d_m}+\tau\mathbf R_m)^{-1}.
\label{eq:xmmse_optimal_receiver}
\end{equation}

Define the uncountable set
$\mathcal U_m\triangleq\{\mathbf U_{m,\tau}:\tau\in[0,1]\}$ as the path with respect to $\tau$ and
$\underline{\mathcal U}\triangleq\{\mathcal U_m\}_{m=1}^{M}$. 
We have the following proposition.

\begin{proposition}[Functional XMMSE Representation] \label{prop:XMMSEobj}
Applying the XMMSE-type transform, the weighted sum-log-determinant problem \eqref{prob:generic_logdet} is equivalent to
\begin{equation}
\begin{aligned}
&\underset{\mathbf{x} \in \mathcal{X},\ \{\mathcal U_m\}_{m=1}^{M}}
{\operatorname{max}}\ F_{X}(\mathbf{x},\underline{\mathcal U})
\triangleq\sum_{m=1}^{M}\omega_m\int_0^1\Big[
2\Re\operatorname{tr}\!\left(\mathbf S_m^H\mathbf U_{m,\tau}\right)\\
&\hspace{12mm}
-\operatorname{tr}(\mathbf U_{m,\tau}^H\mathbf F_m\mathbf U_{m,\tau})
-\tau \operatorname{tr}(\mathbf U_{m,\tau}^H \mathbf S_m\mathbf S_m^H\mathbf U_{m,\tau})
\Big] \, d\tau
\end{aligned}
\label{eq:xmmse_functional_objective}
\end{equation}
where the optimal XMMSE-induced auxiliary variable path $\mathcal U_m^\star$ is given pointwise by
\eqref{eq:xmmse_optimal_receiver}.
\end{proposition}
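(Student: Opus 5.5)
The plan is to show that, for every fixed $\mathbf{x}\in\mathcal{X}$, the supremum of $F_X(\mathbf{x},\underline{\mathcal U})$ over all admissible paths equals $F(\mathbf{x})$ and is attained at $\mathcal U_m^\star$ given by \eqref{eq:xmmse_optimal_receiver}. The claimed equivalence of \eqref{eq:xmmse_functional_objective} with \eqref{prob:generic_logdet} then follows by exchanging the order of maximization, $\max_{\mathbf{x}}\max_{\underline{\mathcal U}}=\max_{\mathbf{x},\underline{\mathcal U}}$. Since $F_X$ is a sum over $m$ with nonnegative weights $\omega_m$, and each path $\mathcal U_m$ enters only its own term, I will treat each term separately. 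Terms with $\omega_m=0$ are trivially fine, because their path is arbitrary.

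Fix $m$ and $\mathbf{x}$, and let $g_\tau(\mathbf U)$ denote the integrand in \eqref{eq:xmmse_functional_objective}. For each $\tau\in[0,1]$, the matrix $\mathbf F_m+\tau\mathbf S_m\mathbf S_m^H\succ\mathbf 0$, so the operator $\mathbf U\mapsto(\mathbf F_m+\tau\mathbf S_m\mathbf S_m^H)\mathbf U$ is self-adjoint and positive definite. I will apply Lemma~\ref{lem:operator_qt} with $\mathcal D=\mathcal D_\tau$ equal to this operator and $\mathbf A=\mathbf S_m$; this is just \eqref{eq:xmmse_inverse_quadratic}. It gives the pointwise bound $g_\tau(\mathbf U_{m,\tau})\le \operatorname{tr}[\mathbf S_m^H(\mathbf F_m+\tau\mathbf S_m\mathbf S_m^H)^{-1}\mathbf S_m]$ for every choice of $\mathbf U_{m,\tau}$, with equality if and only if $\mathbf U_{m,\tau}=\mathbf U_{m,\tau}^\star$. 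Integrating over $\tau$ and invoking the identity \eqref{eq:xmmse_integral} yields $\int_0^1 g_\tau\,d\tau\le\log\det(\mathbf I+\mathbf R_m)$ for any path. The push-through identity supplies the second expression for $\mathbf U^\star_{m,\tau}$.

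Next I will check that the optimal path is admissible and attains the bound. The map $\tau\mapsto\mathbf U^\star_{m,\tau}=\mathbf F_m^{-1}\mathbf S_m(\mathbf I+\tau\mathbf R_m)^{-1}$ is continuous on $[0,1]$, because it is a rational function with a nonvanishing denominator, since $\mathbf I+\tau\mathbf R_m\succ\mathbf 0$. Hence $g_\tau(\mathbf U^\star_{m,\tau})$ is continuous and integrable, and the integral equals $\log\det(\mathbf I+\mathbf R_m)$ exactly. Multiplying by $\omega_m$ and summing over $m$ gives $\max_{\underline{\mathcal U}}F_X(\mathbf{x},\underline{\mathcal U})=F(\mathbf{x})$.

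The main obstacle is the functional nature of the auxiliary variable: the maximization is over an uncountable family, so I must fix an admissible class of paths. I would take measurable paths for which the integral exists; the pointwise bound then integrates directly. Uniqueness holds only up to sets of measure zero in $\tau$, and I would state it in that form. Because the claim concerns optimal values, this caveat does not affect the equivalence.
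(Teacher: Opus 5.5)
Your proposal is correct and follows essentially the same route as the paper: you apply the QT pointwise in $\tau$, which is \eqref{eq:xmmse_inverse_quadratic}, with uniqueness coming from $\mathbf F_m+\tau\mathbf S_m\mathbf S_m^H\succ\mathbf 0$, and then integrate using the identity \eqref{eq:xmmse_integral}. Your extra care in restricting to measurable paths, and in noting that the optimal path is unique only up to $\tau$-null sets, makes explicit a technical point the paper's short proof leaves implicit.
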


\begin{proof}
Combining \eqref{eq:xmmse_integral} and \eqref{eq:xmmse_inverse_quadratic} gives
\eqref{eq:xmmse_functional_objective}. 
Since $\mathbf{F}+\tau \mathbf{S}\mathbf{S}^H\succ \mathbf 0$ for every $\tau\in[0,1]$, the maximizer in \eqref{eq:xmmse_inverse_quadratic} is unique and equals \eqref{eq:xmmse_optimal_receiver}.
\end{proof}

Note that the XMMSE-type objective $F_{ X}(\mathbf{x},\underline{\mathcal U})$ in~\eqref{eq:xmmse_functional_objective}
is precisely the counterpart of the sum-regularized-MSE formulation in~\cite[Eq. (3)]{XMMSE}. That is,
\(F_{ X}(\mathbf{x},\underline{\mathcal U})\) plays for XMMSE the same representational role as $F_{RQ}(\mathbf{x},\underline{\boldsymbol{\Gamma}},\underline{\mathbf{Y}})$ does for SEFP--- maximizing the corresponding auxiliary variables recovers the same original objective $F(\mathbf{x})$.

\subsection{Two Coordinates Mapping Under the Optimal Auxiliary Variable Manifold}
In this subsection, we show that the auxiliary variables for SEFP and XMMSE have a similar yet subtler relationship than those for FP and WMMSE.
Specifically, the two algorithms are essentially two coordinate representations of the same optimization object when restricted to their optimal auxiliary-variable manifold, analogous to the FP--WMMSE correspondence in~\cite{BCA+MM WSR}.

Since a term with $\omega_m=0$ does not affect the original objective, we restrict the discussion in this subsection to $\omega_m>0$ without loss of generality.  
For each $m$ and any $\boldsymbol{\Gamma}_m\in\mathbb{H}_{++}^{d_m}$, define the $\tau$-dependent ($\tau \in [0,1]$) spectral weighting matrix
\begin{equation}
\mathbf K_{m,\tau}(\boldsymbol{\Gamma}_m)
\triangleq
\boldsymbol{\Gamma}_m(\mathbf I_{d_m}+\tau\boldsymbol{\Gamma}_m)^{-1}
(\mathbf I_{d_m}+\boldsymbol{\Gamma}_m)^{1/2}.
\label{eq:xmmse_kernel}
\end{equation}
Under the spectral functional calculus defined in Section~\ref{subsec:Hermitian functional calculus}, all factors in~\eqref{eq:xmmse_kernel} are functions of the same Hermitian matrix $\boldsymbol{\Gamma}_m$, hence they are simultaneously
diagonalizable by the eigenvectors of $\boldsymbol{\Gamma}_m$ and therefore commute.

\begin{lemma}[Spectral Moment Identities]\label{lem:kernel_moments}
For every $\boldsymbol{\Gamma}_m\in\mathbb{H}_{++}^{d_m}$, the spectral weighting matrix in~\eqref{eq:xmmse_kernel} satisfies
\begin{subequations}
\label{eq:kernel_moment_identities}
\begin{align}
\int_0^1\mathbf K_{m,\tau}(\boldsymbol{\Gamma}_m)\, d\tau
=c(\boldsymbol{\Gamma}_m)&^{1/2},
\label{eq:kernel_first_moment}\\
\int_0^1\mathbf K_{m,\tau}(\boldsymbol{\Gamma}_m)
\mathbf K_{m,\tau}(\boldsymbol{\Gamma}_m)^H\, d\tau
&=\boldsymbol{\Gamma}_m^2,
\label{eq:kernel_second_moment}\\
\int_0^1\tau\mathbf K_{m,\tau}(\boldsymbol{\Gamma}_m)
\mathbf K_{m,\tau}(\boldsymbol{\Gamma}_m)^H\, d\tau
&=b(\boldsymbol{\Gamma}_m).
\label{eq:kernel_weighted_second_moment}
\end{align}
\end{subequations}
\end{lemma}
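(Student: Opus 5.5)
The plan is to reduce all three matrix identities to scalar integrals along the eigenvalues of $\boldsymbol{\Gamma}_m$, using the Hermitian functional calculus of Section~\ref{subsec:Hermitian functional calculus}. Write $\boldsymbol{\Gamma}_m=\boldsymbol{\Xi}\operatorname{diag}(\gamma_1,\ldots,\gamma_d)\boldsymbol{\Xi}^H$ as in \eqref{eq:gamma_evd} with $\gamma_i>0$. Every factor of $\mathbf K_{m,\tau}(\boldsymbol{\Gamma}_m)$ in \eqref{eq:xmmse_kernel} is a function of $\boldsymbol{\Gamma}_m$, so the product rule of the functional calculus gives $\mathbf K_{m,\tau}(\boldsymbol{\Gamma}_m)=k_\tau(\boldsymbol{\Gamma}_m)$. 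Here
\[
k_\tau(\gamma)\triangleq \frac{\gamma\sqrt{1+\gamma}}{1+\tau\gamma}.
\]
Since $k_\tau$ is real-valued, $k_\tau(\boldsymbol{\Gamma}_m)$ is Hermitian. Hence $\mathbf K_{m,\tau}\mathbf K_{m,\tau}^H=k_\tau^2(\boldsymbol{\Gamma}_m)$ and $\tau\mathbf K_{m,\tau}\mathbf K_{m,\tau}^H=(\tau k_\tau^2)(\boldsymbol{\Gamma}_m)$. These integrands are continuous in $\tau\in[0,1]$ at each fixed $\gamma>0$. The eigenvalue-wise parameter integration property then says each left-hand side of \eqref{eq:kernel_moment_identities} equals $h(\boldsymbol{\Gamma}_m)$, where $h$ is the corresponding scalar integral over $\tau$.

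It therefore suffices to verify three scalar identities for each fixed $\gamma>0$. For the first moment,
\[
\int_0^1 \frac{\gamma\sqrt{1+\gamma}}{1+\tau\gamma}\,d\tau=\sqrt{1+\gamma}\,\log(1+\gamma)=c(\gamma)^{1/2},
\]
using \eqref{operator_c}. The positive square root is the right one because $\log(1+\gamma)>0$, which also matches the positive-definite square root $c(\boldsymbol{\Gamma}_m)^{1/2}$. For the second moment,
\[
\int_0^1 \frac{\gamma^2(1+\gamma)}{(1+\tau\gamma)^2}\,d\tau=\gamma(1+\gamma)\Big[1-\tfrac{1}{1+\gamma}\Big]=\gamma^2.
\]
For the weighted second moment, I compute $\int_0^1 \tau\gamma^2(1+\gamma)(1+\tau\gamma)^{-2}\,d\tau$ with the substitution $u=1+\tau\gamma$, which gives $\tau=(u-1)/\gamma$ and $d\tau=du/\gamma$. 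The integral becomes
\[
(1+\gamma)\int_1^{1+\gamma}\Big(\frac1u-\frac1{u^2}\Big)du=(1+\gamma)\log(1+\gamma)-(1+\gamma)\Big(1-\tfrac1{1+\gamma}\Big)=(1+\gamma)\log(1+\gamma)-\gamma,
\]
which is $b(\gamma)$ by \eqref{operator_b}.

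Lifting these three scalar identities back through the functional calculus gives $c(\boldsymbol{\Gamma}_m)^{1/2}$, $\boldsymbol{\Gamma}_m^2$ and $b(\boldsymbol{\Gamma}_m)$, consistent with \eqref{eq:b_spectral}--\eqref{eq:c_spectral}. No step is a real obstacle. The only points needing care are the justification of interchanging the integral with the functional calculus, which holds because the integrands are continuous and bounded on $[0,1]$ for each eigenvalue, and the identification $\mathbf K\mathbf K^H=k_\tau^2(\boldsymbol{\Gamma}_m)$, which relies on $k_\tau$ being real so that $k_\tau(\boldsymbol{\Gamma}_m)$ is Hermitian. The third integral is the only one needing an actual calculation, and the substitution above handles it.
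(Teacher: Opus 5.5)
Your proposal is correct and takes essentially the same route as the paper: diagonalize $\boldsymbol{\Gamma}_m$, recognize $\mathbf K_{m,\tau}(\boldsymbol{\Gamma}_m)$ as the Hermitian matrix function with eigenvalues $\gamma\sqrt{1+\gamma}/(1+\tau\gamma)$, integrate the scalar function, its square and its $\tau$-weighted square over $[0,1]$, and lift the results back eigenvalue-wise. Your explicit scalar computations, including the substitution $u=1+\tau\gamma$ for the third moment, supply the details the paper leaves implicit, and they all check out.
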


\begin{proof}
Let
$\boldsymbol{\Gamma}_m =\boldsymbol{\Xi}_m
\operatorname{diag}(\gamma_{m,1},\ldots,\gamma_{m,d_m})
\boldsymbol{\Xi}_m^H$.
Then $\mathbf K_{m,\tau}(\boldsymbol{\Gamma}_m)$ has eigenvalues
$\{\gamma_{m,i}\sqrt{1+\gamma_{m,i}}/(1+\tau\gamma_{m,i})\}_{i=1}^{d_m}$.
The three identities in~\eqref{eq:kernel_moment_identities} follow by integrating this scalar function itself,
its squared magnitude, and its $\tau$-weighted squared magnitude over $[0,1]$, respectively.
Applying the resulting scalar identities eigenvalue-wise according to the matrix function property in Section~\ref{subsec:Hermitian functional calculus} then yields the corresponding matrix identities.
\end{proof}

For any $\boldsymbol{\Gamma}_m\in\mathbb{H}_{++}^{d_m}$ and $\mathbf Y_m\in\mathbb{C}^{n_m\times d_m}$, define the structured XMMSE-induced auxiliary variable
\begin{equation}
\mathbf U_{m,\tau}(\boldsymbol{\Gamma}_m,\mathbf Y_m)
\triangleq
\frac{1}{\sqrt{\omega_m}}\,
\mathbf Y_m\mathbf K_{m,\tau}(\boldsymbol{\Gamma}_m),
\label{eq:structured_receiver_path}
\end{equation}
for $\tau\in[0,1]$ and let
$\mathcal U_m(\boldsymbol{\Gamma}_m,\mathbf Y_m)
\triangleq\{\mathbf U_{m,\tau}(\boldsymbol{\Gamma}_m,\mathbf Y_m):\tau\in[0,1]\}$ be the path with respect to $\tau$.
Collecting these paths gives
$\underline{\mathcal U}(\underline{\boldsymbol{\Gamma}},\underline{\mathbf Y})
\triangleq\{\mathcal U_m(\boldsymbol{\Gamma}_m,\mathbf Y_m)\}_{m=1}^{M}$.

\begin{theorem}[Lossless XMMSE--SEFP Factorization]\label{SEFP-XMMSE}
For every feasible $\mathbf x\in\mathcal X$,
$\boldsymbol{\Gamma}_m\in\mathbb{H}_{++}^{d_m}$ and
$\mathbf Y_m\in\mathbb{C}^{n_m\times d_m}$, it follows
\begin{equation}
F_X\!\left(
\mathbf x,
\underline{\mathcal U}(\underline{\boldsymbol{\Gamma}},\underline{\mathbf Y})
\right)
=
F_{RQ}(\mathbf x,\underline{\boldsymbol{\Gamma}},\underline{\mathbf Y}).
\label{eq:xmmse_sefp_objective_identity}
\end{equation}
Moreover, suppose $\mathbf R_m\succ\mathbf 0$ for all $m$.
By Theorem~\ref{thm:matrix_sefp}, the optimal SEFP-induced auxiliary variables
\begin{subequations}
\label{eq:xmmse_sefp_optimal_mapping}
\begin{align}
\boldsymbol{\Gamma}_m^\star
&=\mathbf R_m
=\mathbf S_m^H\mathbf F_m^{-1}\mathbf S_m,
\label{eq:xmmse_sefp_gamma_mapping}\\
\mathbf Y_m^\star
&=\sqrt{\omega_m}\,
\mathbf F_m^{-1}\mathbf S_m\mathbf R_m^{-1}
(\mathbf I+\mathbf R_m)^{-1/2}
\label{eq:xmmse_sefp_y_mapping}
\end{align}
\end{subequations}
recover the unique optimal XMMSE-induced auxiliary variables in~\eqref{eq:xmmse_optimal_receiver} through~\eqref{eq:structured_receiver_path} and the converse holds as well.  
Consequently,
\begin{align}
\max_{\underline{\mathcal U}}
F_X(\mathbf x,\underline{\mathcal U})
&=
\max_{\underline{\boldsymbol{\Gamma}},\underline{\mathbf Y}}
F_X\!\left(
\mathbf x,
\underline{\mathcal U}(\underline{\boldsymbol{\Gamma}},\underline{\mathbf Y})
\right)
\nonumber\\
&=
\max_{\underline{\boldsymbol{\Gamma}},\underline{\mathbf Y}}
F_{RQ}(\mathbf x,\underline{\boldsymbol{\Gamma}},\underline{\mathbf Y})
=F(\mathbf x).
\label{eq:xmmse_sefp_lossless}
\end{align}
\end{theorem}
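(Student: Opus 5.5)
The proof has three parts: the objective identity, the mapping of optimal auxiliary variables, and the chain of equalities in \eqref{eq:xmmse_sefp_lossless}.

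\textbf{Objective identity \eqref{eq:xmmse_sefp_objective_identity}.} Fix $m$ and substitute $\mathbf U_{m,\tau}=\omega_m^{-1/2}\mathbf Y_m\mathbf K_{m,\tau}$ into the integrand of \eqref{eq:xmmse_functional_objective}, writing $\mathbf K_{m,\tau}$ for $\mathbf K_{m,\tau}(\boldsymbol{\Gamma}_m)$. Then handle the three terms one at a time, using linearity of trace and integral together with cyclic invariance of the trace.

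The linear term becomes $2\omega_m^{-1/2}\Re\operatorname{tr}(\mathbf S_m^H\mathbf Y_m\int_0^1\mathbf K_{m,\tau}d\tau)$. Multiplying by $\omega_m$ and applying \eqref{eq:kernel_first_moment} gives $2\sqrt{\omega_m}\Re\operatorname{tr}(c(\boldsymbol{\Gamma}_m)^{1/2}\mathbf S_m^H\mathbf Y_m)$.

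The first quadratic term becomes $\omega_m^{-1}\operatorname{tr}(\mathbf Y_m^H\mathbf F_m\mathbf Y_m\mathbf K_{m,\tau}\mathbf K_{m,\tau}^H)$. After multiplying by $\omega_m$ and applying \eqref{eq:kernel_second_moment}, it equals $\operatorname{tr}(\mathbf Y_m^H\mathbf F_m\mathbf Y_m\boldsymbol{\Gamma}_m^2)$.

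The $\tau$-weighted term is treated the same way with \eqref{eq:kernel_weighted_second_moment}, which yields the $b(\boldsymbol{\Gamma}_m)$ term.

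Summing over $m$ reproduces \eqref{def_of_FmSEFP} exactly. The only care needed is exchanging integral and trace; this is justified because the integrand is continuous in $\tau$ on $[0,1]$.

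\textbf{Optimal mapping.} Assume $\mathbf R_m\succ\mathbf 0$, so Theorem~\ref{thm:matrix_sefp} gives $\boldsymbol{\Gamma}_m^\star=\mathbf R_m$, and $\mathbf R_m^\dagger=\mathbf R_m^{-1}$ turns \eqref{eq:Y_closeform} into \eqref{eq:xmmse_sefp_y_mapping}. Plugging these into \eqref{eq:structured_receiver_path} gives
\[
\mathbf U_{m,\tau}=\mathbf F_m^{-1}\mathbf S_m\mathbf R_m^{-1}(\mathbf I+\mathbf R_m)^{-1/2}\mathbf R_m(\mathbf I+\tau\mathbf R_m)^{-1}(\mathbf I+\mathbf R_m)^{1/2}.
\]
All factors to the right of $\mathbf F_m^{-1}\mathbf S_m$ are functions of $\mathbf R_m$, so they commute, and the product collapses to $\mathbf F_m^{-1}\mathbf S_m(\mathbf I+\tau\mathbf R_m)^{-1}$. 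This is \eqref{eq:xmmse_optimal_receiver}.

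For the converse, start from the optimal XMMSE path $\mathcal U_m^\star$. Set $\tau=0$ to read off $\mathbf F_m^{-1}\mathbf S_m$, and hence $\mathbf R_m=\mathbf S_m^H\mathbf F_m^{-1}\mathbf S_m$, which determines $\boldsymbol{\Gamma}_m^\star$. Then $\mathbf Y_m^\star=\sqrt{\omega_m}\,\mathbf U_{m,0}^\star\,\mathbf K_{m,0}(\mathbf R_m)^{-1}$ is recovered uniquely, since $\mathbf K_{m,0}(\mathbf R_m)=\mathbf R_m(\mathbf I+\mathbf R_m)^{1/2}$ is invertible. So the optimal points correspond one-to-one.

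\textbf{Chain of equalities \eqref{eq:xmmse_sefp_lossless}.} The second equality is the objective identity, and the last is \eqref{eq:mm_value_representation}. For the first equality, restricting to structured paths can only lower the maximum, so the structured maximum is at most $\max_{\underline{\mathcal U}}F_X$. The reverse holds because the unconstrained maximizer \eqref{eq:xmmse_optimal_receiver} is itself a structured path, namely the one generated by $(\boldsymbol{\Gamma}_m^\star,\mathbf Y_m^\star)$. Alternatively, both sides equal $F(\mathbf x)$ by Proposition~\ref{prop:XMMSEobj}. If some $\mathbf R_m$ is singular, I would avoid the mapping argument and use only the value identities: $F_X(\mathbf x,\underline{\mathcal U}(\cdot))=F_{RQ}$, whose maximum is $F(\mathbf x)$, which equals $\max F_X$.

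The main technical checks are the scalar integrals behind Lemma~\ref{lem:kernel_moments} and the commutation step, both of which are routine given functional calculus.
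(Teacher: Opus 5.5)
Your proposal is correct and follows the paper's proof essentially step by step. You substitute the structured path and apply the three moment identities of Lemma~\ref{lem:kernel_moments} with cyclic invariance of the trace, then plug in $(\mathbf R_m,\mathbf Y_m^\star)$ and collapse the commuting functions of $\mathbf R_m$ to recover \eqref{eq:xmmse_optimal_receiver}, and close the chain with \eqref{eq:mm_value_representation}; your explicit converse (reading off $\mathbf F_m^{-1}\mathbf S_m$ at $\tau=0$ and inverting $\mathbf K_{m,0}(\mathbf R_m)$) and your value-only argument for singular $\mathbf R_m$ are small additions the paper leaves implicit, not a different route.
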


\begin{proof} See Appendix~\ref{proof_of_thm_SEFP-XMMSE}.
\end{proof}

\begin{remark}[Rank deficient case]\label{rem:xmmse_sefp_rank_deficient}
If $\mathbf{R}_m$ is singular and has the eigendecomposition in~\eqref{eq:Rm_decom}.
Select any optimal $\boldsymbol{\Gamma}_m^\star$ of the form in~\eqref{eq:gamma_opt_singular} and let
\begin{equation}
\mathbf{Y}_m^\star = \sqrt{\omega_m}\, \mathbf{F}_m^{-1}\mathbf{S}_m\mathbf{R}_m^\dagger(\mathbf{I}+\mathbf{R}_m)^{-1/2}.
\label{eq:rank_deficient_xmmse_y}
\end{equation}
Since $\mathbf{S}_m$ vanishes on $\ker(\mathbf{R}_m)$,
\eqref{eq:structured_receiver_path} still recovers
\eqref{eq:xmmse_optimal_receiver}.
\end{remark}

Theorem~\ref{SEFP-XMMSE} shows the SEFP-induced auxiliary variables could be seen as a lossless finite-dimensional factorization of the XMMSE-induced auxiliary variable path, or rather, the two can be regarded as two representations in different coordinates and are interchangeable at optimality. 

Next, we lift the above coordinate-level equivalence to the algorithmic level under the MM theory.  
Let $\mathbf{x}^{(t)}$ be the current iterate and denote
$\underline{\mathcal U}^{(t)}
\triangleq
\argmax_{\underline{\mathcal U}}
F_X(\mathbf x^{(t)},\underline{\mathcal U}),$ where the maximizer is uniquely specified pointwise by~\eqref{eq:xmmse_optimal_receiver}. Define the XMMSE surrogate as
\begin{equation}
Q_X(\mathbf x;\mathbf x^{(t)})
\triangleq
F_X(\mathbf x,\underline{\mathcal U}^{(t)}).
\label{eq:xmmse_surrogate}
\end{equation}
Recalling that $Q_{\mathrm{SEFP}}(\mathbf x;\mathbf x^{(t)})$ is defined in~\eqref{eq:mm_surrogate} by fixing the SEFP auxiliary variables at their exact maximizers associated with the same $\mathbf x^{(t)}$, we have the following proposition.

\begin{proposition}[Identical Updates of Surrogates and Original Variables] \label{coro:Equality_of_surro}
For every $x\in\mathcal X$, we have
\begin{equation}
Q_{X}(\mathbf{x};\mathbf{x}^{(t)}) = F_{RQ} \!\left(\mathbf{x},
\underline{\boldsymbol{\Gamma}}^{(t)},
\underline{\mathbf{Y}}^{(t)}
\right) = Q_{\mathrm{SEFP}}(\mathbf{x};\mathbf{x}^{(t)}).
\label{eq:xmmse_sefp_surrogate_equality}
\end{equation}
Furthermore, suppose that XMMSE and matrix SEFP initialize with the same feasible original variable $\mathbf x^{(0)}$, update their auxiliary variables exactly at every iteration, solve the induced surrogate subproblems exactly, and select the same maximizer whenever that subproblem has multiple solutions.  
Then their original optimization variable updates are identical as 
\begin{equation}
\mathbf x_X^{(t)}
=
\mathbf x_{\mathrm{SEFP}}^{(t)},
\qquad t=0,1,\ldots,
\label{eq:xmmse_sefp_identical_iterates}
\end{equation}
and consequently
$F(\mathbf x_X^{(t)})=F(\mathbf x_{\mathrm{SEFP}}^{(t)})$ for every $t$.
\end{proposition}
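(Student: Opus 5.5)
The plan has two parts: the surrogate identity \eqref{eq:xmmse_sefp_surrogate_equality}, then an induction on $t$ for the iterate equality.

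For the surrogate identity, fix $\mathbf{x}^{(t)}$ and let $(\underline{\boldsymbol{\Gamma}}^{(t)},\underline{\mathbf{Y}}^{(t)})$ be the SEFP auxiliary variables used in \eqref{eq:mm_surrogate}. We restrict to $\omega_m>0$, since terms with $\omega_m=0$ contribute identically zero to both $F_X$ and $F_{RQ}$.

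\emph{Step 1.} Show that the structured path built from the SEFP variables coincides with the exact XMMSE maximizer, i.e. $\underline{\mathcal U}(\underline{\boldsymbol{\Gamma}}^{(t)},\underline{\mathbf{Y}}^{(t)})=\underline{\mathcal U}^{(t)}$. When $\mathbf{R}_m(\mathbf{x}^{(t)})\succ\mathbf{0}$, this is exactly the optimal-mapping part of Theorem~\ref{SEFP-XMMSE}. Concretely, substitute \eqref{eq:xmmse_sefp_gamma_mapping}--\eqref{eq:xmmse_sefp_y_mapping} into \eqref{eq:structured_receiver_path}. The commuting factors of $\mathbf{R}_m$ then cancel:
\[
\mathbf{R}_m^{-1}(\mathbf I+\mathbf R_m)^{-1/2}\,\mathbf R_m(\mathbf I+\tau\mathbf R_m)^{-1}(\mathbf I+\mathbf R_m)^{1/2}=(\mathbf I+\tau\mathbf R_m)^{-1}.
\]
This yields \eqref{eq:xmmse_optimal_receiver} for every $\tau$.

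When $\mathbf{R}_m(\mathbf{x}^{(t)})$ is singular, invoke Remark~\ref{rem:xmmse_sefp_rank_deficient}. Theorem~\ref{thm:matrix_sefp} states that under optimal $\boldsymbol{\Gamma}_m$ the QT maximizer takes the closed form \eqref{eq:Y_closeform}, so $\mathbf Y_m^{(t)}$ is \eqref{eq:rank_deficient_xmmse_y} for any admissible choice of the $\eta_{m,j}$. The cancellation above then holds on the range of $\mathbf R_m$. On $\ker(\mathbf{R}_m)$ both sides vanish, because $\mathbf F_m^{-1}\mathbf S_m\mathbf R_m^\dagger$ annihilates that subspace, and $\mathbf S_m$ vanishes there as well ($\mathbf{S}_m\mathbf{v}=\mathbf 0$ whenever $\mathbf R_m\mathbf v=\mathbf 0$, since $\mathbf F_m\succ\mathbf 0$).

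I expect this singular case to be the main obstacle. The subtlety is that $\mathbf{K}_{m,\tau}(\boldsymbol{\Gamma}_m)$ depends on the arbitrary $\eta_{m,j}$. One has to check that this dependence is killed by $\mathbf{R}_m^\dagger$, using that $\mathbf K_{m,\tau}$ and $\mathbf R_m^\dagger$ share the eigenbasis $\mathbf P_m$.

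\emph{Step 2.} Apply the objective identity \eqref{eq:xmmse_sefp_objective_identity}, which holds for every $\mathbf{x}$ and every admissible $(\underline{\boldsymbol\Gamma},\underline{\mathbf Y})$. This gives
\[
Q_X(\mathbf x;\mathbf x^{(t)})=F_X\big(\mathbf x,\underline{\mathcal U}(\underline{\boldsymbol{\Gamma}}^{(t)},\underline{\mathbf{Y}}^{(t)})\big)=F_{RQ}(\mathbf x,\underline{\boldsymbol{\Gamma}}^{(t)},\underline{\mathbf{Y}}^{(t)})=Q_{\mathrm{SEFP}}(\mathbf x;\mathbf x^{(t)}),
\]
which is \eqref{eq:xmmse_sefp_surrogate_equality}. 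The key point is that $\underline{\mathcal U}^{(t)}$ depends only on $\mathbf{x}^{(t)}$, while the auxiliary variables are held fixed as $\mathbf x$ varies. Hence the identity holds as functions of $\mathbf{x}$ on all of $\mathcal X$, not just at $\mathbf x^{(t)}$.

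\emph{Step 3 (induction).} The base case $\mathbf x_X^{(0)}=\mathbf x_{\mathrm{SEFP}}^{(0)}$ holds by assumption. If $\mathbf x_X^{(t)}=\mathbf x_{\mathrm{SEFP}}^{(t)}$, both algorithms form their surrogates from the same iterate. By Steps 1--2 these surrogates are the same function on $\mathcal X$, so their maximizer sets coincide. The common tie-breaking rule then selects the same $\mathbf x^{(t+1)}$ for both methods, closing the induction. Equality of the objective values $F(\mathbf x_X^{(t)})=F(\mathbf x_{\mathrm{SEFP}}^{(t)})$ follows immediately.
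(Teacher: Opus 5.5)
Your proposal is correct and follows the paper's proof. The exact SEFP maximizers generate the unique XMMSE path through \eqref{eq:structured_receiver_path} (Theorem~\ref{SEFP-XMMSE} plus Remark~\ref{rem:xmmse_sefp_rank_deficient}); the objective identity \eqref{eq:xmmse_sefp_objective_identity} then equates the two surrogates on all of $\mathcal X$, and induction with a common tie-breaking rule gives identical iterates. Your explicit singular-case check (that $\mathbf Y_m^{(t)}$ does not depend on the $\eta_{m,j}$, and that the shared eigenbasis $\mathbf P_m$ removes their influence) only spells out what the paper's remark asserts, and your $\omega_m=0$ aside is valid because $\mathbf Y_m^{(t)}=\mathbf 0$ there by \eqref{eq:Y_operator_update}.
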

\begin{proof}
By Theorem~\ref{SEFP-XMMSE} and Remark~\ref{rem:xmmse_sefp_rank_deficient}, the exact SEFP maximizers $(\underline{\boldsymbol{\Gamma}}^{(t)},\underline{\mathbf Y}^{(t)})$ at $\mathbf x^{(t)}$ generate the unique XMMSE maximizing path $\underline{\mathcal U}^{(t)}$ through~\eqref{eq:structured_receiver_path}.  
Once these auxiliary variables are fixed,~\eqref{eq:xmmse_sefp_objective_identity} holds for every candidate $\mathbf x\in\mathcal X$, which proves~\eqref{eq:xmmse_sefp_surrogate_equality}.

Equation~\eqref{eq:xmmse_sefp_identical_iterates} follows by induction.  
If $\mathbf x_X^{(t)}=\mathbf x_{\mathrm{SEFP}}^{(t)}$, ~\eqref{eq:xmmse_sefp_surrogate_equality} shows that the two methods construct exactly the same surrogate over the same feasible set $\mathcal X$.
Exact maximization with the same selection rule therefore gives $\mathbf x_X^{(t+1)}=\mathbf x_{\mathrm{SEFP}}^{(t+1)}$, completing the induction.
\end{proof}
\begin{remark}
As illustrated in Fig.~\ref{fig: interconnection_puzzle}, Proposition~\ref{coro:Equality_of_surro} completes the puzzle of the interconnections between two FP-type and two MMSE-type algorithms, showing that those two lines of reasoning strategies achieve equivalent surrogate enhancement from the surrogate construction perspective.

On one hand, the improvement of XMMSE over WMMSE lies primarily in how the original WSR maximization problem is transformed into an MSE-type minimization problem.
Whereas WMMSE employs a variational reformulation of the rate--MSE relation by introducing auxiliary MSE weight matrices, which are iteratively updated together with the receive and transmit beamformers,
XMMSE exploits an exact integral rate--MSE identity to obtain an equivalent sum-regularized-MSE reformulation without introducing explicit MSE weight matrices.

On the other hand, the SEFP originates from revisiting the LDT in classical closed-form FP from a surrogate construction perspective.
The construction of the LDT-induced lower bound couples the constant term and the variable ratio term inside the logarithmic function inherently and consequently leads to a relatively conservative surrogate.
In contrast, the RIT reconstructs the lower bound in a reciprocal-inversion coordinate, which separates the above-mentioned structural coupling.
\end{remark}

\begin{remark} \label{euqi_bound}
For wireless communications, the equivalence between the surrogate and the original variables established in Proposition~\ref{coro:Equality_of_surro} applies to a broad class of network-utility-maximization problems, including SISO power control~\cite{FPPart1} and MIMO beamforming in fixed single association D2D networks~\cite{FPPart3}, and multi-stream MIMO precoding in single-cell~\cite{WMMSE_1} and multicell MU-MIMO networks~\cite{XMMSE,WMMSE_2}.
A common structural feature of these problems is that the set of utility terms and their associated weights are fixed.
The optimization variables modify the signal and interference quantities entering each utility term, but neither the identity of the term nor its weight changes. 
\end{remark}

However, such a structure changes once scheduling or association decisions are incorporated as optimization variables.
In such problems, the active utility term and its associated weight can themselves depend on the scheduling decisions, in which case SEFP retains additional flexibility over XMMSE, as will be elaborated analytically next.

\subsection{XMMSE as a Special SEFP Instance: From Continuous to Mixed-Discrete Optimization}
\label{subsec:xmmse_special_case}
Remarkably, SEFP inherits from FP two flexibilities  -- nonunique ratio decouplings and BCA block-selection rules -- to generate distinct algorithmic instances~\cite{BCA+MM WSR}. By these two flexibilities, we have the following proposition.

\begin{proposition}
XMMSE is one particular member of the broad SEFP algorithm family under a specific ratio decoupling and a specific BCA order, analogous to the WMMSE-FP relationship.
\end{proposition}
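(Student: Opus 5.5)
The plan is to make precise what ``member of the SEFP family'' means and then exhibit the XMMSE iteration as one concrete SEFP-BCA instance. Following the FP--WMMSE correspondence in \cite{BCA+MM WSR}, an SEFP instance is specified by two choices. The first is a ratio decoupling, i.e., a choice of $(\mathbf S_m(\mathbf x),\mathbf F_m(\mathbf x))$ with $\mathbf R_m=\mathbf S_m^H\mathbf F_m^{-1}\mathbf S_m$ for each utility term. The second is a BCA block order over $\{\mathbf x,\underline{\boldsymbol\Gamma},\underline{\mathbf Y}\}$ (and over sub-blocks of $\mathbf x$) applied to $F_{RQ}$ in \eqref{def_of_FmSEFP}. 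The goal is to find a choice of both that reproduces XMMSE's iterates exactly.

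\textbf{Step 1 (decoupling).} Choose the standard beamforming decoupling used in \eqref{eq:xmmse_integral}. Here $\mathbf S_m$ is the effective desired-signal matrix $\mathbf H_{mm}\mathbf V_m$, and $\mathbf F_m$ is the noise-plus-interference covariance, so that $\mathbf F_m+\tau\mathbf S_m\mathbf S_m^H$ is exactly the regularized covariance appearing in XMMSE. This is the same decoupling under which WMMSE is recovered from FP in \cite{FPPart3}. With this choice, Proposition~\ref{prop:XMMSEobj} reproduces the sum-regularized-MSE objective of \cite[Eq.~(3)]{XMMSE}.

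\textbf{Step 2 (block order).} Take the BCA order
\begin{equation*}
\underline{\boldsymbol\Gamma}\to\underline{\mathbf Y}\to\mathbf x ,
\end{equation*}
with each auxiliary block updated exactly. By Theorem~\ref{thm:matrix_rit} and Theorem~\ref{thm:matrix_sefp}, this gives $\boldsymbol\Gamma_m^{(t)}=\mathbf R_m(\mathbf x^{(t)})$ and $\mathbf Y_m^{(t)}$ as in \eqref{eq:Y_closeform}. By Theorem~\ref{SEFP-XMMSE} (and Remark~\ref{rem:xmmse_sefp_rank_deficient} in the singular case), the map \eqref{eq:structured_receiver_path} carries this pair to XMMSE's receive-filter path \eqref{eq:xmmse_optimal_receiver}. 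So the auxiliary update of this SEFP instance coincides with XMMSE's receiver update. The $\mathbf x$-update then maximizes $F_{RQ}(\cdot,\underline{\boldsymbol\Gamma}^{(t)},\underline{\mathbf Y}^{(t)})=Q_X(\cdot;\mathbf x^{(t)})$, which is exactly XMMSE's transmit-beamformer update. By Proposition~\ref{coro:Equality_of_surro}, the trajectories coincide for every $t$.

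\textbf{Step 3 (proper inclusion).} Show that the family is strictly larger than this one instance. XMMSE fixes both the decoupling (the integral identity forces the pair $(\mathbf S_m,\mathbf F_m)$ that defines the regularized covariance) and the joint update of all $\tau$-indexed filters. SEFP, by contrast, can use other admissible factorizations of $\mathbf R_m$, for instance one that places scheduling indicators or power inside $\mathbf S_m$ and $\mathbf F_m$ differently. It can also use other block orders, for instance updating $\underline{\mathbf Y}$ before $\underline{\boldsymbol\Gamma}$, or splitting $\mathbf x$ into scheduling and beamforming blocks. These choices produce different surrogates and different trajectories. One explicit SISO example suffices: a decoupling in which the weight multiplies a scheduling variable, where the two surrogates already differ.

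\textbf{Main obstacle.} The hardest part is the formalization. The statement is partly definitional, so the proof must fix a precise notion of ``algorithmic family'': the set of pairs (decoupling, BCA schedule) applied to $F_{RQ}$. It must also check that XMMSE's specific implementation, with its closed-form transmit update and its handling of power constraints, is exactly the exact $\mathbf x$-maximization of $Q_X$. I would first verify this by comparing the stationarity conditions of $Q_X$ in $\mathbf V$ with the XMMSE precoder formula. The analogy with how \cite{BCA+MM WSR} frames WMMSE as an FP instance would serve as the template for this argument.
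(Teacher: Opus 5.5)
Your overall route is the same as the paper's: fix the auxiliary variables at their exact maximizers, push them through \eqref{eq:structured_receiver_path}, and invoke Theorem~\ref{SEFP-XMMSE} and Proposition~\ref{coro:Equality_of_surro}. However, Step~2 misidentifies the block order, and the block order is precisely what the proposition is about.

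Read as cyclic BCA on the single objective $F_{RQ}$, the order $\underline{\boldsymbol\Gamma}\to\underline{\mathbf Y}\to\mathbf x$ updates $\underline{\boldsymbol\Gamma}$ by maximizing $F_{RQ}(\mathbf x^{(t)},\underline{\boldsymbol\Gamma},\underline{\mathbf Y}^{(t-1)})$ with the stale $\underline{\mathbf Y}^{(t-1)}$ frozen. That maximizer depends on $\underline{\mathbf Y}^{(t-1)}$ and is in general not $\mathbf R_m(\mathbf x^{(t)})$; already in the scalar case, the maximizer over $\alpha$ of $F_{rq}$ at fixed $y$ moves with $y$. Theorems~\ref{thm:matrix_rit} and~\ref{thm:matrix_sefp} characterize something else: the maximizer of $F_R(\mathbf x^{(t)},\underline{\boldsymbol\Gamma})=\max_{\underline{\mathbf Y}}F_{RQ}(\mathbf x^{(t)},\underline{\boldsymbol\Gamma},\underline{\mathbf Y})$, i.e.\ the $\boldsymbol\Gamma$-part of the joint maximizer. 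So your sentence ``this gives $\boldsymbol\Gamma_m^{(t)}=\mathbf R_m(\mathbf x^{(t)})$'' does not follow.

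What reproduces XMMSE is a mixed-level rule. First take $\underline{\boldsymbol\Gamma}^{(t)}\in\argmax_{\underline{\boldsymbol\Gamma}}F_R(\mathbf x^{(t)},\underline{\boldsymbol\Gamma})$, with the QT block maximized out. Then take $\underline{\mathbf Y}^{(t)}\in\argmax_{\underline{\mathbf Y}}F_{RQ}(\mathbf x^{(t)},\underline{\boldsymbol\Gamma}^{(t)},\underline{\mathbf Y})$, followed by the $\mathbf x$-update. If this is what you meant by ``updated exactly,'' you must say so explicitly. The plain cyclic orders on $F_{RQ}$ are exactly the other family members that generally do not coincide with XMMSE: $\boldsymbol\Gamma$ first with frozen $\mathbf Y$ (your order, read literally), and $\mathbf Y$ first. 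Your Step~3 lists only the latter, so it misses the actual distinguishing feature of XMMSE: an RIT-level versus an $F_{RQ}$-level $\boldsymbol\Gamma$ update, not the relative order of $\boldsymbol\Gamma$ and $\mathbf Y$.

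Second, the ``ratio decoupling'' in the proposition is not a choice of $(\mathbf S_m,\mathbf F_m)$. Those are fixed problem data, so your Step~1 does no work. The relevant nonuniqueness, as in the FP--WMMSE analysis, is in the QT split of the RIT ratio.

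Since $c(\boldsymbol\Gamma_m)^{1/2}$, $\boldsymbol\Gamma_m^2$ and $b(\boldsymbol\Gamma_m)$ commute, $\mathcal D_{m,\boldsymbol\Gamma_m}^{-1}(\mathbf S_mc(\boldsymbol\Gamma_m)^{1/2})=\mathcal D_{m,\boldsymbol\Gamma_m}^{-1}(\mathbf S_m)\,c(\boldsymbol\Gamma_m)^{1/2}$. Hence the same term equals $\langle\mathbf S_mc(\boldsymbol\Gamma_m),\mathcal D_{m,\boldsymbol\Gamma_m}^{-1}(\mathbf S_m)\rangle_F$. It can therefore be decoupled with a different QT variable $\mathbf Y_m'$, linked to $\mathbf Y_m$ by the $\boldsymbol\Gamma_m$-dependent bijection $\mathbf Y_m=\sqrt{\omega_m}\,\mathbf Y_m'c(\boldsymbol\Gamma_m)^{1/2}$.

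Under the mixed-level rule the two splits yield the same surrogate. Because the bijection depends on $\boldsymbol\Gamma_m$, however, they can yield different iterates once $\boldsymbol\Gamma$ is updated with the QT variable frozen. This, together with the block-order choice, is what generates the other members of the family. Your ``other factorizations of $\mathbf R_m$'' and the SISO scheduling example are asserted rather than shown, and are not needed once these two flexibilities are exhibited.
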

\begin{proof}
The argument is due to two flexibilities inherited from FP: nonunique ratio decouplings and BCA block-selection.

To make the first flexibility explicit, we keep the RIT unchanged and revisit the $m$-th weighted term of the RIT-induced objective $F_{R}$ in~\eqref{eq:operator_kron_equivalence}. 
By the matrix function property in Section~\ref{subsec:Hermitian functional calculus}, $c(\boldsymbol{\Gamma}_m)^{1/2}$, $\boldsymbol{\Gamma}_m^2$, and $b(\boldsymbol{\Gamma}_m)$ are all functions of the same Hermitian matrix $\boldsymbol{\Gamma}_m$ and therefore commute with one another.
Consequently,
\begin{equation}
\mathcal D_{m,\boldsymbol{\Gamma}_m}^{-1}
\big(
\mathbf S_mc(\boldsymbol{\Gamma}_m)^{1/2}
\big)
=
\big(\mathcal D_{m,\boldsymbol{\Gamma}_m}^{-1}(\mathbf S_m)\big)c(\boldsymbol{\Gamma}_m)^{1/2}.
\label{eq:sefp_operator_bridge_c}
\end{equation}
Using the trace cyclic invariance property further gives
\begin{equation}
\begin{aligned}
\Big\langle&
\mathbf S_m c(\boldsymbol{\Gamma}_m)^{1/2},
\mathcal D_{m,\boldsymbol{\Gamma}_m}^{-1}
\big(
\mathbf S_m c(\boldsymbol{\Gamma}_m)^{1/2}
\big)
\Big\rangle_F
\\& \qquad =
\Big\langle
\mathbf S_m c(\boldsymbol{\Gamma}_m),
\mathcal D_{m,\boldsymbol{\Gamma}_m}^{-1}(\mathbf S_m)
\Big\rangle_F,
\label{eq:sefp_operator_bridge_d}
\end{aligned}
\end{equation}
which shows that the same RIT-induced ratio admits another exact QT decoupling.
Instead of placing $c(\boldsymbol{\Gamma}_m)^{1/2}$ symmetrically in the QT numerator, one may merge the two factors into $c(\boldsymbol{\Gamma}_m)$.
By introducing the distinct QT-induced auxiliary matrix $\mathbf Y_m^{\prime}$, the corresponding SEFP objective becomes~\eqref{eq:alt_sefp_decoupling}.
\begin{figure*}[t]
\begin{equation}
 F^{\prime}_{RQ}(\mathbf{x},\underline{\boldsymbol{\Gamma}},\underline{\mathbf{Y}^{\prime}}) \triangleq
\sum_{m=1}^{M} \omega_m
\Big\{2\Re\tr\!\big[c(\boldsymbol{\Gamma}_m)\mathbf S_m^H\mathbf Y_m^{\prime}\big]-\tr\!\big[c(\boldsymbol{\Gamma}_m)\mathbf Y_m^{\prime H}\mathbf F_m\mathbf Y_m^{\prime}\boldsymbol{\Gamma}_m^2\big]
-\tr\!\big[c(\boldsymbol{\Gamma}_m)\mathbf Y_m^{\prime H}\mathbf S_m\mathbf S_m^H\mathbf Y_m^{\prime} b(\boldsymbol{\Gamma}_m)\big]\Big\}.
\label{eq:alt_sefp_decoupling}
\end{equation}
\hrulefill 
\vspace*{-10pt} 
\end{figure*}
Notice that, for every fixed $\boldsymbol{\Gamma}_m\succ\mathbf0$, the coordinate change
\begin{equation}
\mathbf Y_m=\sqrt{\omega_m}\,\mathbf Y_m^{\prime} c(\boldsymbol{\Gamma}_m)^{1/2}
\label{eq:alt_sefp_coordinate_map}
\end{equation}
is bijective and transforms \eqref{eq:alt_sefp_decoupling} exactly into~\eqref{def_of_FmSEFP}. 
Therefore, both formulations maximize the same RIT-induced quantity for fixed $\boldsymbol{\Gamma}_m$, but they correspond to different coordinate realizations of the QT block.

The second flexibility is the BCA update rule.
Define the three auxiliary-block maximization maps
\begin{subequations}\label{eq:sefp_block_maps}
\begin{align}
\mathcal T_R(\mathbf x)
&\triangleq\argmax_{\underline{\boldsymbol{\Gamma}}}F_R(\mathbf x,\underline{\boldsymbol{\Gamma}}),
\label{eq:block_map_rit}\\
\mathcal T_{\Gamma}(\mathbf x,\underline{\mathbf Y})
&\triangleq\argmax_{\underline{\boldsymbol{\Gamma}}}F_{RQ}(\mathbf x,\underline{\boldsymbol{\Gamma}},\underline{\mathbf Y}),
\label{eq:block_map_gamma}\\
\mathcal T_Y(\mathbf x,\underline{\boldsymbol{\Gamma}})
&\triangleq\argmax_{\underline{\mathbf Y}}F_{RQ}(\mathbf x,\underline{\boldsymbol{\Gamma}},\underline{\mathbf Y}).
\label{eq:block_map_y}
\end{align}
\end{subequations}
At the same $\mathbf x^{(t)}$, considering one entire iteration, different admissible update orders therefore generate different SEFP variants, e.g.,
\begin{subequations}\label{eq:sefp_bca_variants}
\begin{align}
\mathcal A_X:\quad
&\underline{\boldsymbol{\Gamma}}^{(t+1)}\in\mathcal T_R(\mathbf x^{(t)}),\
\underline{\mathbf Y}^{(t+1)}\in\mathcal T_Y(\mathbf x^{(t)},\underline{\boldsymbol{\Gamma}}^{(t+1)}),
\label{eq:sefp_bca_x}\\
\mathcal A_{\Gamma Y}:\
&\underline{\boldsymbol{\Gamma}}^{(t+1)}\in\mathcal T_{\Gamma}(\mathbf x^{(t)},\underline{\mathbf Y}^{(t)}), \
\underline{\mathbf Y}^{(t+1)}\in\mathcal T_Y(\mathbf x^{(t)},\underline{\boldsymbol{\Gamma}}^{(t+1)}),
\label{eq:sefp_bca_gy}\\
\mathcal A_{Y\Gamma}:\
&\underline{\mathbf Y}^{(t+1)}\in\mathcal T_Y(\mathbf x^{(t)},\underline{\boldsymbol{\Gamma}}^{(t)}), \
\underline{\boldsymbol{\Gamma}}^{(t+1)}\in\mathcal T_{\Gamma}(\mathbf x^{(t)},\underline{\mathbf Y}^{(t+1)}).
\label{eq:sefp_bca_yg}
\end{align}
\end{subequations}
Each sequence is followed by the common original variable update
$\mathbf x^{(t+1)}\in\argmax_{\mathbf x\in\mathcal X}F_{RQ}(\mathbf x,\underline{\boldsymbol{\Gamma}}^{(t+1)},\underline{\mathbf Y}^{(t+1)})$.
The mixed-level rule $\mathcal A_X$ first maximizes the RIT objective and then the QT block; by Theorem~\ref{SEFP-XMMSE}, Proposition~\ref{coro:Equality_of_surro}, and \eqref{eq:structured_receiver_path}, it reproduces XMMSE.
In contrast, $\mathcal A_{\Gamma Y}$ and $\mathcal A_{Y\Gamma}$ are conventional cyclic BCA variants of the single transformed objective $F_{RQ}$ and generally need not coincide with $\mathcal A_X$.
The above analyses complete the proof.
\end{proof}

The above flexibilities become even more consequential when discrete variables (e.g., scheduling) are part of the optimization. 
In what follows, we present a conceptual explanation, leaving the detailed analysis to Section~\ref{SEFPLinQ}. 
The essential change introduced by scheduling can be represented by allowing a fixed utility weight to become scheduling-dependent, i.e., $\omega_j\rightarrow\omega_{j,s_j}$, so that the active utility term and its weight vary with the discrete decision. 

By applying the SEFP-type transform to the original objective with an appropriate ratio decoupling, the relevant part to the original variables is then exposed in an edge-separable form and can be optimized independently.
In contrast, the XMMSE type representation associates each candidate link with a continuum of auxiliary variables, 
the candidate identity and the network interference are therefore coupled that prevent the edge-wise separation, yielding a markedly less convenient combinatorial structure. 
A more practical and instructive derivation of this distinction, after the scheduling variables have been explicitly introduced, is given in the next section.


\section{JOINT SCHEDULING, AND BEAMFORMING VIA MATRIX SEFP}\label{SEFPLinQ}
To establish a direct comparison with the FPLinQ strategy~\cite[Algorithm~2]{FPPart3} and demonstrate the structural advantage over XMMSE identified in Section~\ref{subsec:xmmse_special_case}, this section applies the matrix SEFP framework developed in Section~\ref{sec:matrix_rit} to the joint link scheduling and beamforming WSR optimization problem in a flexible--association MIMO interference network, under the same setup as in~\cite[Section~V]{FPPart3}.

\subsection{System Model and Problem Formulation}
\label{subsec:system_model}
Consider a typical multiple-antenna flexible--association D2D wireless interference network consisting of a transmitter set $\mathcal I$ and a receiver set $\mathcal J$. 
Each transmitter is equipped with $N_t$ antennas, each receiver is equipped with $N_r$ antennas, and each active link transmits $d$ independent data streams for simplicity.
Let $\mathcal K_j\subseteq\mathcal I$ denote the set of transmitters that may serve receiver $j$ and $\mathcal L_i\subseteq\mathcal J$ denote the set of receivers that may be served by transmitter $i$.
The scheduling variable associated with receiver $j$ is denoted by $s_j\in\mathcal K_j\cup\{\varnothing\}$, for
$s_j=i$ means transmitter $i$ is scheduled to serve receiver $j$, while $s_j=\varnothing$ means receiver $j$ is inactive.

Defining the binary association variable
\begin{equation}
q_{ji} \triangleq 1 \text{ if } s_j=i, \text{ else } 0
\end{equation}
and, correspondingly, $\underline{\mathbf{q} }\triangleq {\{q_{ji}|\ i\in\mathcal I,j\in\mathcal J\}}$.
In each timeslot, each transmitter (or receiver) can only communicate with at most one of its associated receivers (or transmitters), respectively. Mathematically, a legal $\underline{\mathbf{q}}$ must satisfy 
\begin{equation}\label{eq:matching_constraints}
\sum_{i\in\mathcal K_j}q_{ji}\leq 1,\
 \forall j\in\mathcal J,\qquad
 \sum_{j\in\mathcal L_i}q_{ji}\leq 1,\ \forall i\in\mathcal I.
\end{equation}

Let $\mathbf H_{ji}\in\C^{N_r\times N_t}$ denote the channel from transmitter $i$ to receiver $j$, and let $\mathbf V_i\in\C^{N_t\times d}$ denote the transmit beamforming matrix of transmitter $i$ supporting $d$ datastreams transmission.
The transmitted data vector $\mathbf x_i$ is assumed to satisfy $\mathbb E[\mathbf x_i\mathbf x_i^{\mathrm H}]=\mathbf I_d$.

For an active receiver $j$ with $s_j=i$, define the effective desired-signal matrix $ \mathbf S_j\triangleq \sum_{i \in \mathcal{K}_j} q_{ji} \mathbf{H}_{ji} \mathbf{V}_i,$
the interference-plus-noise covariance matrix
$\mathbf F_j \triangleq \sigma_j^2 \mathbf I_{N_r} + \sum_{\substack{k \in \mathcal{J} \\ k \ne j}} \sum_{i \in \mathcal{K}_k} q_{ki} \mathbf H_{ji} \mathbf V_i \mathbf V_i^{H} \mathbf H_{ji}^{H},$ and correspondingly, the SINR matrix
$\mathbf R_j\triangleq
\mathbf S_j^{ H}\mathbf F_j^{-1}\mathbf S_j\succeq\mathbf 0.$
Under Gaussian signaling and treating interference as noise, the achievable rate of receiver $j$ is
$
R_j(\underline{\mathbf{s}},\underline{\mathbf{V}}) = \log\det\!\left( \mathbf I_d+\mathbf R_j \right). $

Given $\omega_{ji}\ge0$ for each associated link from transmitter $i$ to receiver $j$, and $\omega_{j,\varnothing} \triangleq 0$, the network joint scheduling and beamforming WSR optimization problem is given by
\begin{subequations}\label{prob:wsr}
\begin{align}
\mathop{\mathrm{max}}_{\underline{\mathbf{s}},\underline{\mathbf{V}}}\quad
& f(\underline{\mathbf{s}},\underline{\mathbf{V}}) \triangleq \sum_{j\in\mathcal{J}} \omega_{j,s_j} R_j(\underline{\mathbf{s}},\underline{\mathbf{V}}) \label{prob:wsr_obj}\\
\mathrm{s.t.}\quad
& \tr\left(\mathbf{V}_i^{H}\mathbf{V}_i\right) \le P_i, \quad \forall i\in\mathcal{I}, \label{prob:wsr_power}\\
& s_j \in \mathcal{K}_j \cup \{\varnothing\}, \quad \forall j\in\mathcal{J}, \label{prob:wsr_schedule}\\
& s_j \neq s_k, \text{ or } s_j= \varnothing, \forall j \neq k , \label{prob:wsr_one_to_one}
\end{align}
\end{subequations}
where $\underline{\mathbf{s}}$ and $\underline{\mathbf{V}}$ denote the scheduling links set and the transmitter beamformers set, respectively, and~\eqref{prob:wsr_power} is the transmit power constraint.
Problem~\eqref{prob:wsr} is a mixed discrete--continuous nonconvex optimization problem that is NP-hard and notoriously difficult to solve.

\subsection{Iterative Optimization via Matrix SEFP -- SEFPLinQ}
\label{subsec:iterative_sefp}

\subsubsection{Matrix SEFP reformulation and the scaling version}
By applying the matrix SEFP, specializing the generic variable $\mathbf{x}$ in~\eqref{def_of_FmSEFP} by $(\underline{\mathbf{s}},\underline{\mathbf{V}})$, problem~\eqref{prob:wsr} is equivalent to
\begin{subequations}\label{prob:sefp_equiv}
\begin{align}
&\underset{\underline{\mathbf s},\underline{\mathbf V},\underline{\boldsymbol\Gamma},\underline{\mathbf Y}}{\max}\quad
f_{\mathrm{SE}}(\underline{\mathbf s},\underline{\mathbf V},\underline{\boldsymbol\Gamma},\underline{\mathbf Y})
\label{f_SE}\\
& \quad\text{s.t.}\quad  (\text{\ref{prob:wsr_power})--(\ref{prob:wsr_one_to_one}}),\ \
\boldsymbol\Gamma_j\in\mathbb{H}_{++}^{d},
\ \mathbf Y_j\in\C^{N_r\times d},
\ \forall j,
\label{prob:sefp_equiv_aux}
\end{align}
\end{subequations}
where the objective $f_{\mathrm{SE}}(\mathbf s,\mathbf V,\boldsymbol\Gamma,\mathbf Y)$ is defined in~\eqref{eq:sefp_objective} on the top of this page.
\begin{figure*}[!t]
\normalsize
\begin{equation} \label{eq:sefp_objective}
  f_{\mathrm{SE}}(\underline{\mathbf s},\underline{\mathbf V},\underline{\boldsymbol\Gamma},\underline{\mathbf Y})
\triangleq \sum_{j\in\mathcal J}\Big\{
2\sqrt{\omega_{j,s_j}}\,
\Re \tr\!\left[
 c(\boldsymbol\Gamma_j)^{1/2}
 \mathbf S_j^{H} \mathbf Y_j \right]
 -\tr\!\left(\mathbf Y_j^{\mathrm H}\mathbf F_j\mathbf Y_j \boldsymbol\Gamma_j^2\right)
 -\tr\!\left[ \mathbf Y_j^{H} \mathbf S_j \mathbf S_j^{H} \mathbf Y_j b(\boldsymbol\Gamma_j)\right]  \Big\}.
\end{equation}
\hrulefill 
\vspace*{4pt} 
\end{figure*}

Although~\eqref{eq:sefp_objective} is the exact consequence of the matrix SEFP transform, in practice, we choose a more numerically implementable form to move forward.
The idea is quite natural because without this operation, to deal with the case where $\mathbf{R}_j$ is rank deficient, the optimal $\boldsymbol\Gamma_j$ has to be updated extra by~\eqref{eq:gamma_opt_singular}.
Specifically, we first introduce two scalar functions
$\rho(\cdot)$ and $\delta(\cdot)$ on $\mathbb R_{+}$ as
\begin{subequations}\label{eq:rho_delta_scalar}
\begin{align}
\rho(t)
&\triangleq
\begin{cases}
\dfrac{\sqrt{1+t}\,\log(1+t)}{t}, & t>0,\\[1.4mm]
\lim_{t\downarrow 0}\rho(t)=1, & t=0,
\end{cases}
\label{eq:rho_scalar}\\
\delta(t)
&\triangleq
\begin{cases}
\dfrac{(1+t)\log(1+t)-t}{t^2}, & t>0,\\[1.4mm]
\lim_{t\downarrow 0}\delta(t)=\dfrac{1}{2}, & t=0,
\end{cases}
\label{eq:delta_scalar}
\end{align}
\end{subequations}
where their both values at $t=0$ are defined by continuity.

We next extend $\rho(\cdot)$ and $\delta(\cdot)$ to Hermitian matrix functions. 
Specifically, for any $\boldsymbol\Gamma$ with the form in~\eqref{eq:gamma_evd}, define
\begin{subequations}\label{eq:rho_delta_matrix}
\begin{align}
\rho(\boldsymbol\Gamma)
&\triangleq
\boldsymbol{\Xi}
\operatorname{diag}\!\big(
\rho(\gamma_1),\ldots,\rho(\gamma_d)
\big)
\boldsymbol{\Xi}^{H},
\label{eq:rho_matrix}\\
\delta(\boldsymbol\Gamma)
&\triangleq
\boldsymbol{\Xi}
\operatorname{diag}\!\big(
\delta(\gamma_1),\ldots,\delta(\gamma_d)
\big)
\boldsymbol{\Xi}^{H}.
\label{eq:delta_matrix}
\end{align}
\end{subequations}
Accordingly, zero eigenvalues are handled through
$\rho(0)=1$ and $\delta(0)=1/2$, so
\eqref{eq:rho_delta_matrix} remains well defined even when $\boldsymbol\Gamma$ is rank deficient. 
For $\boldsymbol\Gamma\succ\mathbf0$, these spectral definitions reduce to
\begin{equation}
\rho(\boldsymbol\Gamma)
=
\boldsymbol\Gamma^{-1}
c(\boldsymbol\Gamma)^{1/2},
\qquad
\delta(\boldsymbol\Gamma)
=
\boldsymbol\Gamma^{-2}
b(\boldsymbol\Gamma).
\label{eq:rho_delta_pd_equiv}
\end{equation}

We next introduce the scaled auxiliary variable $\mathbf Z_j\triangleq
\mathbf Y_j\boldsymbol\Gamma_j.$
For $\boldsymbol\Gamma_j\succ\mathbf0$,
$\mathbf Y_j\leftrightarrow\mathbf Z_j$ is bijective. 
Substituting
$\mathbf Y_j=\mathbf Z_j\boldsymbol\Gamma_j^{-1}$ into~\eqref{eq:sefp_objective}, and using the cyclic invariance property of the trace yields the scaling objective
\begin{align}
&\bar f_{\mathrm{SE}}\!\left(
\underline{\mathbf s},\underline{\mathbf V},
\underline{\boldsymbol\Gamma},\underline{\mathbf Z}
\right)
\triangleq\sum_{j\in\mathcal J}\Big\{
2\sqrt{\omega_{j,s_j}}\,
\Re\tr\!\left[ \rho(\boldsymbol\Gamma_j) \mathbf S_j^{H}\mathbf Z_j  \right] \nonumber
\\
& \qquad \qquad-\tr\!\left(
 \mathbf Z_j^{\mathrm H}\mathbf F_j\mathbf Z_j
\right)
-\tr\!\left[
 \mathbf Z_j^{\mathrm H}\mathbf S_j\mathbf S_j^{\mathrm H}
 \mathbf Z_j\delta(\boldsymbol\Gamma_j)
\right]
\Big\}.
\label{eq:scaled_sefp_objective}
\end{align}
In particular, for every $\boldsymbol\Gamma_j\succ\mathbf0$ and
$\mathbf Z_j=\mathbf Y_j\boldsymbol\Gamma_j$, we have
\begin{equation}
f_{\mathrm{SE}}\!\left(
\underline{\mathbf s},\underline{\mathbf V},
\underline{\boldsymbol\Gamma},\underline{\mathbf Y}
\right)
=
\bar f_{\mathrm{SE}}\!\left(
\underline{\mathbf s},\underline{\mathbf V},
\underline{\boldsymbol\Gamma},\underline{\mathbf Z}
\right).
\label{eq:scaled_unscaled_identity}
\end{equation}
Thus, the scaling is an exact reparameterization and does not alter either the objective value or the optimization variables $\underline{\mathbf s}$ and $\underline{\mathbf V}$. 
Moreover, since the scalar functions $\rho(t)$ and $\delta(t)$ are continuous
on $\mathbb R_{+}$, their spectral extensions in
\eqref{eq:rho_delta_matrix} are well defined on the entire cone
$\mathbb H_{+}^{d}$ and remain continuous when one or more
eigenvalues of $\boldsymbol\Gamma$ approach zero.
Consequently, although the change of variables
$\mathbf Z_j=\mathbf Y_j\boldsymbol\Gamma_j$ is bijective only for
$\boldsymbol\Gamma_j\succ\mathbf0$, the scaled objective obtained
below admits a natural continuous extension of $\boldsymbol\Gamma_j$ from
$\mathbb H_{++}^{d}$ to $\mathbb H_{+}^{d}$.
This extension enables the auxiliary variable update $
\boldsymbol\Gamma_j^\star=\mathbf R_j$
to be applied uniformly, including the case in which
$\mathbf R_j$ is rank deficient.
In addition, for fixed $(\underline{\mathbf s},\underline{\mathbf V},\underline{\boldsymbol\Gamma})$, the $\mathbf Z_j$ optimization subproblem remains strictly concave because $\mathbf F_j\succ\mathbf0$ and $\delta(\boldsymbol\Gamma_j)\succeq\mathbf0$.


\subsubsection{Update of the auxiliary variables}
For fixed $(\underline{\mathbf s},\underline{\mathbf V})$, the optimal auxiliary variable induced by the RIT is updated by the current SINR matrix
\begin{equation}
\boldsymbol\Gamma_j^\star = \mathbf R_j =
\mathbf S_j^{\mathrm H}\mathbf F_j^{-1}\mathbf S_j.
\label{eq:gamma_update}
\end{equation}

For fixed $(\underline{\mathbf s},\underline{\mathbf V},\underline{\boldsymbol\Gamma})$, taking the first-order optimality condition of~\eqref{eq:scaled_sefp_objective} with respect to $\mathbf Z_j$ gives
\begin{equation}
\mathbf F_j\mathbf Z_j +
\mathbf S_j\mathbf S_j^{\mathrm H}
\mathbf Z_j\delta(\boldsymbol\Gamma_j)
=  \sqrt{\omega_{j,s_j}}\,
\mathbf S_j\rho(\boldsymbol\Gamma_j).
\label{eq:U_sylvester}
\end{equation}
\begin{remark}
Equation~\eqref{eq:U_sylvester} is a generalized Sylvester equation. When the RIT induced auxiliary  variable is updated according to~\eqref{eq:gamma_update}, it admits the following closed-form solution
\begin{equation}
\mathbf Z_j^\star =
\sqrt{\omega_{j,s_j}}\,
\mathbf F_j^{-1}\mathbf S_j
(\mathbf I+\mathbf R_j)^{-1/2}.
\label{eq:U_closed_form}
\end{equation}
\end{remark}
\begin{proof}
Substituting~\eqref{eq:U_closed_form}, 
and using $\mathbf S_j^{H}\mathbf F_j^{-1}\mathbf S_j=\mathbf R_j$, the left-hand-side of~\eqref{eq:U_sylvester} becomes
$$ \sqrt{\omega_{j,s_j}} \mathbf S_j \left[
 \mathbf (\mathbf I+\mathbf R_j)^{-1/2}
 +\mathbf R_j\mathbf (\mathbf I+\mathbf R_j)^{-1/2}\delta(\mathbf R_j)\right]. $$ 
Leveraging the scalar identity ${1+t\delta(t)}/{\sqrt{1+t}}=\rho(t),$ and the matrix function property in Section~\ref{subsec:Hermitian functional calculus} directly proves~\eqref{eq:U_closed_form}. 
\end{proof}

\subsubsection{Edge-wise decomposition of the SEFP surrogate} 
We next optimize the scheduling variables and beamformers while holding the auxiliary variables $(\underline{\boldsymbol\Gamma},\underline{\mathbf Z})$ fixed. Define the transmitter activation variable
$a_i\in\{0,1\}\triangleq\sum_{j\in\mathcal L_i}q_{ji}.$
The interference covariance at receiver $k$ can then be written as
\begin{equation}
\mathbf F_k = \sigma_k^2\mathbf I + \sum_{i\in\mathcal I}
(a_i-q_{ki})
\mathbf H_{ki}\mathbf V_i\mathbf V_i^{ H}\mathbf H_{ki}^{H}.
\label{eq:F_binary_form}
\end{equation}
Substituting~\eqref{eq:F_binary_form} into~\eqref{eq:scaled_sefp_objective} and collecting all terms associated with a candidate edge $(j,i)$ gives
\begin{equation}
\bar f_{\mathrm{SE}} = C(\underline{\mathbf Z}) + \sum_{j\in\mathcal J}\sum_{i\in\mathcal K_j}
q_{ji}\psi_{ji}(\mathbf V_i),
\label{eq:edge_decomposition}
\end{equation}
where $C(\underline{\mathbf Z})=-
\sum_{k\in\mathcal J}\sigma_k^2
\tr\!\left(\mathbf Z_k^{H}\mathbf Z_k\right)$
denotes the constant term with respect to $\mathbf V_i$,
and
\begin{equation}
\psi_{ji}(\mathbf V_i)
=
2\Re\tr\!\left(\mathbf V_i^{ H}\mathbf G_{ji}\right)
-
\tr\!\left(\mathbf V_i^{H}\mathbf A_{ji}\mathbf V_i\right),
\label{eq:edge_surrogate}
\end{equation}
where the linear coefficient matrix is
\begin{equation}
\mathbf G_{ji}
=
\sqrt{\omega_{ji}}\,
\mathbf H_{ji}^{ H}\mathbf Z_j\rho(\boldsymbol\Gamma_j),
\label{eq:G_ji}
\end{equation}
and the semi-definite  quadratic coefficient matrix is
\begin{equation}
\mathbf A_{ji}
={}
\mathbf H_{ji}^{ H}
\mathbf Z_j\delta(\boldsymbol\Gamma_j)
\mathbf Z_j^{H}\mathbf H_{ji}
+
\sum_{\substack{k\in\mathcal J, k\neq j}}
\mathbf H_{ki}^{ H}\mathbf Z_k\mathbf Z_k^{H}\mathbf H_{ki}.
\label{eq:A_ji}
\end{equation}
\begin{remark}
By comparing with the FPLinQ surrogate~\cite[Eq. 32(b)]{FPPart3},  
a distinguishing property of the SEFPLinQ surrogate is that $\mathbf A_{ji}$ generally depends on both the transmitter and the candidate receiver, while the corresponding quadratic coefficient in FPLinQ depends only on the transmitter index. 
This occurs because the signal sent by transmitter $i$ is weighted by $\delta(\boldsymbol\Gamma_j)$ at its intended receiver $j$, yet by the identity matrix when it appears as interference at the remaining receivers.
\end{remark}

\subsubsection{Tentative beamformer for each candidate edge and the first step matching}
Observing~\eqref{eq:edge_decomposition} that the optimization of the beamformer of each link is mathematically independent of any other links,
for each feasible association edge $(j,i)$, define the tentative beamformer $\widetilde{\mathbf V}_{ji}$ as the solution to
\begin{equation}\label{prob:tentative_beamformer}
\underset{\mathbf V_i}{\text{arg  max}}\quad
2\Re\tr\!\left(\mathbf V_i^{ H}\mathbf G_{ji}\right)
-\tr\!\left(\mathbf V_i^{H}\mathbf A_{ji}\mathbf V_i\right),
\end{equation}
under the transmit power constraint~\eqref{prob:wsr_power}.
By introducing a nonnegative Lagrange multiplier $\mu_{ji}$, and leveraging the first-order KKT optimality condition, $\widetilde{\mathbf V}_{ji}$ can be computed by
\begin{equation}
\widetilde{\mathbf V}_{ji}
=
(\mathbf A_{ji}+\mu^{\star}_{ji}\mathbf I_{N_t})^{-1}\mathbf G_{ji},
\label{eq:tentative_beamformer}
\end{equation}
where in practice, the optimal Lagrange multiplier $\mu^{\star}_{ji}$ can be efficiently obtained by one-dimensional bisection search.

Substituting $\widetilde{\mathbf V}_{ji}$ into~\eqref{eq:edge_surrogate}, we define
\begin{equation}
\lambda_{ji}^{\mathrm{SE}}
\triangleq
2\Re\tr\!\left(
\widetilde{\mathbf V}_{ji}^{ H}\mathbf G_{ji}
\right) - \tr\!\left(
\widetilde{\mathbf V}_{ji}^{H}
\mathbf A_{ji}\widetilde{\mathbf V}_{ji}
\right)=\psi_{ji}( \widetilde{\mathbf V}_{ji})
\label{eq:sefp_edge_weight}
\end{equation}
to serve as the first step matching's edge weight of the corresponding link associated with the transmitter $i$ and the receiver $j$. 
Next, maximizing the SEFPLinQ surrogate $\bar f_{\mathrm{SE}}$ in~\eqref{eq:edge_decomposition} of the entire network is equivalent with
\begin{equation}\label{prob:first_matching}
\underset{\mathbf q}\max\quad
\sum_{j\in\mathcal J}\sum_{i\in\mathcal K_j}
q_{ji}\lambda_{ji}^{\mathrm{SE}} \qquad\mathrm{s.t.}  \ \ \text{(\ref{eq:matching_constraints})}.
\end{equation}
Problem~\eqref{prob:first_matching} is essentially a maximum weighted bipartite matching problem, which is well-studied and could be solved optimally in polynomial time by centralized and distributed algorithms such as~\cite{Hungarian} and~\cite{Auction}.

Let $\widetilde{\underline{\mathbf{q}}} \triangleq {\{\widetilde{q}_{ji}|\ i\in\mathcal I,j\in\mathcal J\}}$ denote the solution to~\eqref{prob:first_matching}. The beamformer of transmitter $i$ can then be updated as
\begin{equation}
\mathbf V^{\star}_i
=
\sum_{j\in\mathcal L_i}
\widetilde q_{ji}\widetilde{\mathbf V}_{ji}.
\label{eq:first_matching_beamformer}
\end{equation}
Because at most one $\widetilde q_{ji}=1$,~\eqref{eq:first_matching_beamformer} selects at most one candidate beamformer for every transmitter. If no edge incident to transmitter $i$ is selected, then $\mathbf V^{\star}_i=\mathbf0$.
\begin{remark}[Why the SEFP coordinate is advantageous for scheduling]
\label{rem:sefp_xmmse_scheduling_structure}
The edge decomposition in~\eqref{eq:edge_decomposition} makes the conceptual distinction discussed in Section~\ref{subsec:xmmse_special_case} explicit.
Once $(\underline{\boldsymbol\Gamma},\underline{\mathbf Z})$ are fixed, the same pair $(\boldsymbol\Gamma_j,\mathbf Z_j)$ is reused for all candidate transmitters of receiver $j$, and the scheduling variables enter the SEFP surrogate through the form $q_{ji}\psi_{ji}(\mathbf V_i)$. 
Consequently, each tentative beamformer $\widetilde{\mathbf V}_{ji}$ and its scalar score $\lambda_{ji}^{\mathrm{SE}}$ can be computed independently before matching. The only remaining coupling is imposed by the one-to-one matching constraints in~\eqref{prob:first_matching}.

The direct XMMSE coordinate in~\eqref{eq:xmmse_functional_objective}, however, does not expose this structure. 
The obstruction arises from its interference factor $\omega_{j,s_j}\mathbf F_j$. 
Specifically, $\omega_{j,s_j}=\sum_{i\in\mathcal K_j}q_{ji}\omega_{ji}$, whereas $\mathbf F_j$ depends on the other selected edges. Their product therefore expands as
\begin{equation}
\autoeq{
\sigma_j^2\sum_{i\in\mathcal K_j}q_{ji}\omega_{ji}\mathbf I_{N_r}+\sum_{i\in\mathcal K_j}
\sum_{\substack{k\in\mathcal J\\k\ne j}}
\sum_{m\in\mathcal K_k}
q_{ji}q_{km}\omega_{ji}
\mathbf H_{jm}\mathbf V_m\mathbf V_m^H\mathbf H_{jm}^H.}
\label{eq:xmmse_scheduling_coupling_sefplinq} 
\end{equation}
When~\eqref{eq:xmmse_scheduling_coupling_sefplinq} is inserted into the quadratic integral in~\eqref{eq:xmmse_functional_objective}, the products $q_{ji}q_{km}$ make the contribution of edge $(j,i)$ dependent on which interfering edges $(k,m)$ are simultaneously selected. Hence, no schedule-independent score analogous to $\lambda_{ji}^{\mathrm{SE}}$ is available before matching.
 
\end{remark}

\begin{remark}[Premature turn-off alleviated]The premature turn-off phenomenon is such that the performance of a direct block coordinate type method may permanently deactivate a transmitter.
In particular, if $\mathbf V_i=\mathbf0$, all actual-rate matching weights associated with transmitter $i$ are zero when the beamformers are held fixed, preventing the transmitter from being selected again.

A remarkable ability possessed by the FPLinQ is that it alleviates the premature turn-off efficiently, which can be inherited by the SEFPLinQ. To be specific, the tentative beamformer in~\eqref{eq:tentative_beamformer}
is recomputed for every feasible edge and does not explicitly depend on the current value of $\underline{\mathbf V}$ and $\underline{\mathbf s}$. 
Therefore, even if $\mathbf V_i^{(t)}=\mathbf0$, a nonzero tentative beamformer can be generated whenever the current linear coefficient matrix 
$\mathbf G_{ji}^{(t)}\neq\mathbf 0$ in~\eqref{eq:G_ji}
for at least one candidate receiver $j$.
An inactive transmitter can consequently be reactivated by the matching in~\eqref{prob:first_matching}.

\end{remark}

\subsubsection{ The second step matching}
After updating $\underline{\mathbf V}$,
the second step matching plays a role in updating $\underline{\mathbf s}$ by directly optimizing the original rate objective.

Define the set of transmitters activated by the first matching as
$ \overline{\mathcal I} \triangleq
\left\{i\in\mathcal I:\normF{\mathbf V^{\star}_i}>0\right\}.$
For every feasible edge $(j,i)$ with $i\in\overline{\mathcal I}$, the corresponding interference-plus-noise covariance matrix is $\overline{\mathbf F}_{ji} \triangleq
\sigma_j^2\mathbf I + \sum_{\ell\in\overline{\mathcal I}\setminus\{i\}}
\mathbf H_{j\ell}\mathbf V^{\star}_{\ell}
{\mathbf V^{\star}_{\ell}}^{ H}\mathbf H_{j\ell}^{ H},$
and the corresponding tentative actual rate is
$$ r_{ji}(\underline{\mathbf V}^{\star}) = \log\det\!\left(
\mathbf I_d+ {\mathbf V^{\star H}_i}\mathbf H_{ji}^{ H}
\overline{\mathbf F}_{ji}^{-1}
\mathbf H_{ji}\mathbf V^{\star}_i
\right).$$
The update of $\underline{\mathbf{s}}$ turns to adjusting the transmitter--receiver service relation to maximize the WSR, that is
\begin{equation}\label{prob:second_matching}
\underset{\mathbf q}{\max}\quad
\sum_{j\in\mathcal J}\sum_{i \in \mathcal{K}_j \cap \bar{\mathcal{I}}}
q_{ji}\left(\omega_{ji}r_{ji}(\underline{\mathbf V}^{\star})\right)
 \qquad {\rm s.t.}  \ \ \text{(\ref{eq:matching_constraints})},
\end{equation}
which is still a weighted bipartite matching problem.
Let $\underline{\mathbf{q}}^{+} \triangleq {\{{q}^{+}_{ji}|\ i\in\overline{\mathcal I},j\in\mathcal J\}}$ denote the solution.
The corresponding updated scheduling variables are
\begin{equation}
s_j^{\star}=\begin{cases}
i, & q_{ji}^{+}=1\text{ for some } i \in \mathcal{K}_j \cap \bar{\mathcal{I}},\\
\varnothing, & \sum_{i \in \mathcal{K}_j \cap \bar{\mathcal{I}}}q_{ji}^{+}=0.
\end{cases}
\label{eq:schedule_update}
\end{equation}

Algorithm~\ref{alg:sefplinq} summarizes the overall SEFPLinQ approach.
\begin{algorithm}[t]
\caption{Proposed SEFPLinQ Strategy for Flexible Association D2D Link Scheduling and Beamforming}
\label{alg:sefplinq}
\begin{algorithmic}[1]
\State Initialize all the variables to feasible values;
\Repeat
    \State Update $\underline{\boldsymbol\Gamma}$ according to~\eqref{eq:gamma_update};
    \State Update $\underline{\mathbf Z}$ according to~\eqref{eq:U_closed_form};
    \State Solve the first weighted bipartite matching~\eqref{prob:first_matching};
    \State Update $\underline{\mathbf V}$ according to~\eqref{eq:first_matching_beamformer};
    \State Solve the second weighted bipartite matching~\eqref{prob:second_matching};
    \State Update $\underline{\mathbf s}$ according to~\eqref{eq:schedule_update};
\Until{\textit{the weighted sum rate converges};}
\end{algorithmic}
\end{algorithm}

\section{SIMULATION RESULTS}\label{sec:numerical results}
\subsection{Common Simulation Setup}

\begin{table}[t]
    \centering
    \caption{Common simulation parameters for the network settings.}
    \label{tab:sim_parameters}
    \renewcommand{\arraystretch}{1.05}
    \setlength{\tabcolsep}{3.0pt}
    \footnotesize
    \begin{tabular*}{\columnwidth}{@{\extracolsep{\fill}}lclc@{}}
        \toprule
        \textbf{Parameter} & \textbf{Value} &
        \textbf{Parameter} & \textbf{Value} \\
        \midrule
        Area
        & $1~\mathrm{km}\!\times\!1~\mathrm{km}$
        & Channel model
        & ITU-1411 \\

        Bandwidth
        & $5~\mathrm{MHz}$
        & Carrier frequency
        & $2.4~\mathrm{GHz}$ \\

        Antenna gain
        & $2.5~\mathrm{dBi}$
        & Device height
        & $1.5~\mathrm{m}$ \\

        Noise PSD
        & $-169~\mathrm{dBm/Hz}$
        & Max. Tx power
        & $20~\mathrm{dBm}$ \\

        Shadowing std.
        & $10~\mathrm{dB}$
        & Noise figure
        & $7~\mathrm{dB}$ \\

        \bottomrule
    \end{tabular*}
\end{table}
Unless otherwise specified, all simulations use the common D2D network setting parameters summarized in Table~\ref{tab:sim_parameters}, in line with those in~\cite{FPPart3}.
Specifically, two network association configurations are considered: the fixed single-association and flexible-association settings.  
For the former, each transmitter is paired with one dedicated receiver, and the direct link distance is independently and uniformly generated between $[2,65]$ m.
For the latter, we deploy $|\mathcal J|=100$ receivers and $|\mathcal I|=300$ transmitters.
Three local transmitters are first generated for every receiver, each with a transmitter--receiver distance uniformly distributed over $[2,65]$ m, yielding $300$ primary candidate edges.
We then randomly select $100$ transmitters and assign each selected transmitter one additional candidate edge to its nearest receiver that is not already associated with it.
Consequently, every flexible-association realization contains exactly $400$ feasible transmitter--receiver candidate edges. 
Each transmitter and each receiver participate in at most one active link in each scheduling interval.

Two network utility metrics are considered.
For the WSR metric, all feasible links are assigned fixed unit weights.
For the proportional fairness (PF) metric, link weights are dynamically updated according to the PF criterion, which, in the long term, is equivalent to maximizing the sum of the logarithms of the average link rates.
$300$ scheduling slots are simulated for each realization. 
Each WSR subproblem is solved from a cold start without warm start across PF slots.

For reproducibility and fairness, all algorithms compared in each experiment are evaluated over the same network and channel realizations. 
Each experiment performs 30 independent channel realizations. 
In the fixed-association SISO setting, FPLinQ and SEFPLinQ are initialized with all links transmitting at $P_{\max}$, whereas
the initialization follows the same greedy weight-ordered one-to-one scheduling rule with normalized full power beamformers.

\subsection{Fixed Single-association D2D Networks}
{To validate the interconnections of the two FP-type and the two MMSE-type algorithms delineated in Proposition~\ref{coro:Equality_of_surro} and Remark~\ref{euqi_bound}, Fig.~\ref{fig:WSRvsIternum4alg} compares the WSR trajectories of the four algorithms under the WSR metric. 
For both the SISO and the multi-data-stream MIMO cases, the SEFP and XMMSE WSR trajectories coincide within numerical precision throughout the iterations, as do FP and WMMSE, corroborating the equivalence.
Notably, under both antenna configurations, SEFP and XMMSE exhibit a visible gain compared to closed-form FP and WMMSE.}
\begin{figure}[htbp] 
  \centering 
  \hspace*{-0.4cm}\includegraphics[scale=0.59]{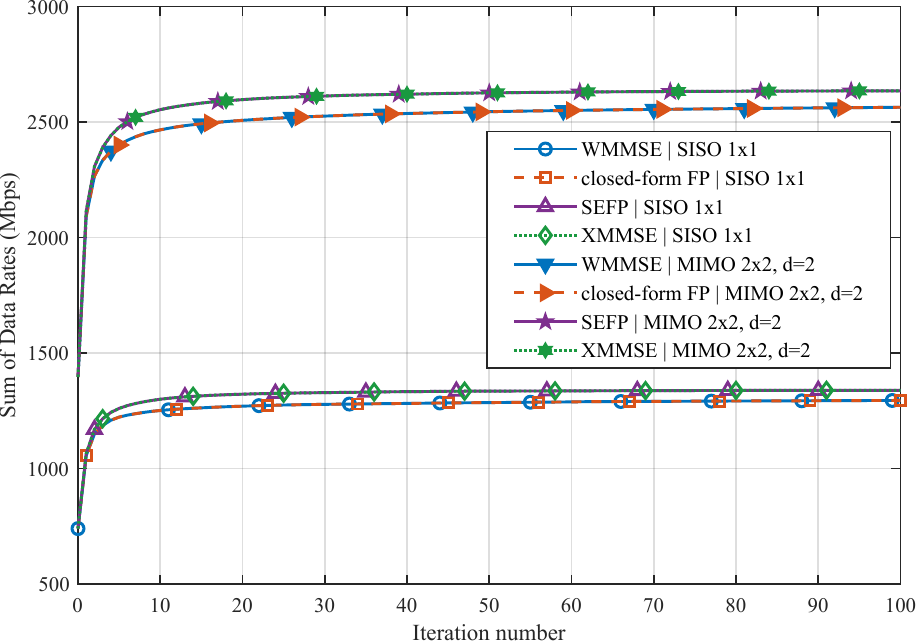} 
  \caption{{WSR versus iteration number for the two FP-type and the two MMSE-type algorithms in a fixed single-association D2D network.}} 
  \label{fig:WSRvsIternum4alg} %
\end{figure}

Fig.~\ref{fig:FixAss_WSRvsLinkNum} compares the WSR performance in the fixed single-association SISO D2D network.
For each realization, we iterate FPLinQ and SEFPLinQ until convergence, up to the maximum of $200$ iterations.
In particular, SEFPLinQ consistently achieves the highest sum rate over all considered network sizes, slightly but steadily outperforming FPLinQ.
Both FP-type algorithms substantially outperform the conventional heuristic scheduling baselines.
\begin{figure}[htbp] 
  \centering 
  \hspace*{-0.4cm}\includegraphics[scale=0.67]{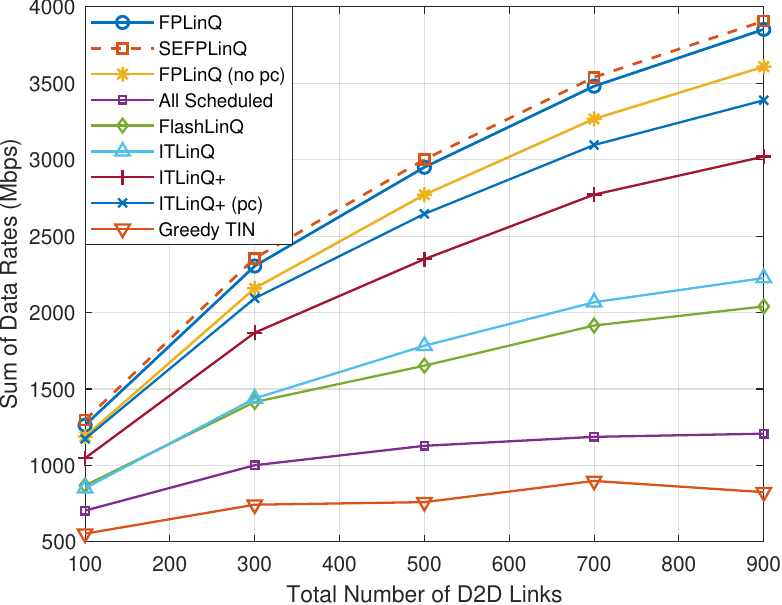} 
  \caption{WSR versus the total number of links for the fixed single-association D2D network.} 
  \label{fig:FixAss_WSRvsLinkNum} 
\end{figure}

Fig.~\ref{fig:FixAss_PF_cdf} considers the fixed single-association network with $100$ D2D links under PF weight updating, and compares the cumulative distribution of the resulting long-term link rates. Compared with FPLinQ, SEFPLinQ shifts the CDF toward higher rates mainly in the low- and medium-rate regimes while maintaining a competitive high-rate performance. 
Combining with the numerical result in Table~\ref{tab:sum_log_utility}, SEFPLinQ shows a better balance among the links and a higher overall log-utility, illustrating the benefit of the SEFP surrogate when fairness is incorporated into scheduling and power control.
\begin{figure}[htbp] 
  \centering 
  \hspace*{-0.4cm}\includegraphics[scale=0.67]{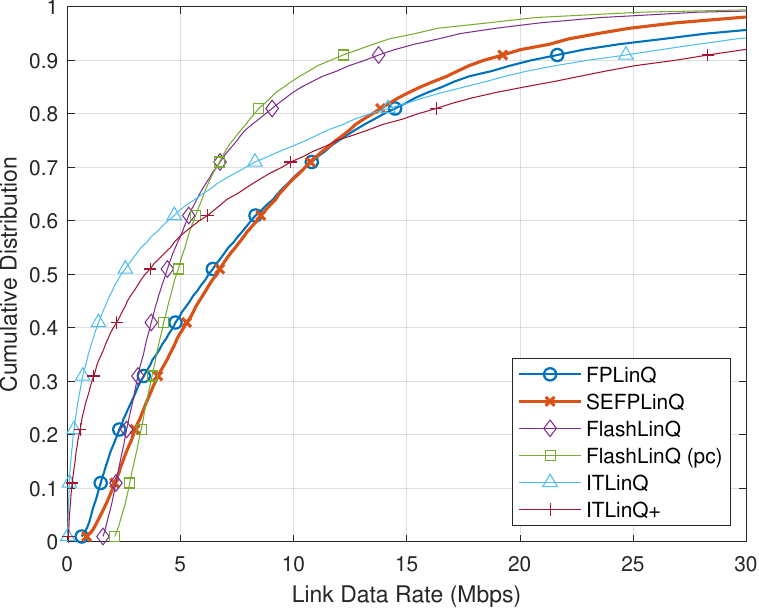} 
  \caption{Log-utility CDF for the fixed single-association D2D network.} 
  \label{fig:FixAss_PF_cdf} 
\end{figure}

\subsection{Flexible-association D2D Networks}
We then turn to the ﬂexible-association case, where for SEFPLinQ and FPLinQ the second matching is indispensable. 
With also the consideration of the MIMO case, FlashLinQ, ITLinQ, and ITLinQ+ are excluded because of their inapplicability.
Fig.~\ref{fig:FleAss_overcomepremature} evaluates PF scheduling in the flexible-association network for both the SISO and $2\times2$ MIMO settings. 
BCD suffers a substantial performance loss because of the premature turning-off, whereas both FPLinQ and SEFPLinQ avoid it effectively. 
Moreover, SEFPLinQ achieves a rate cumulative distribution that is consistently competitive with, and generally better than that of FPLinQ, demonstrating that SEFPLinQ inherits the robust scheduling capability of FPLinQ while further improving the resulting network utility.

\begin{figure}[htbp] 
  \centering 
  \hspace*{-0.4cm}\includegraphics[scale=0.69]{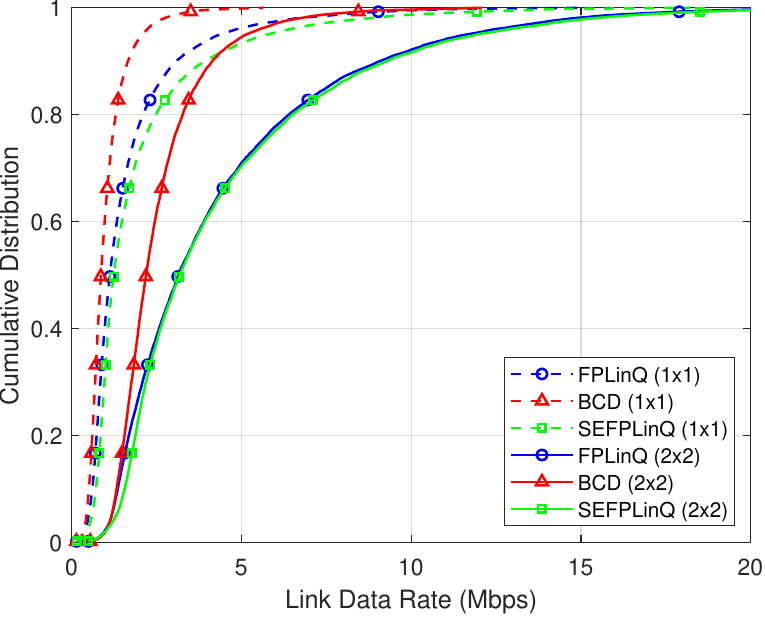} 
  \caption{Log-utility maximization for the ﬂexible-association D2D network:
SEFPLinQ vs. FPLinQ and BCD.} 
  \label{fig:FleAss_overcomepremature} 
\end{figure}

Fig.~\ref{fig:FleAss_matrixbenifit} compares matrix FPLinQ and SEFPLinQ with the single-stream vector FP method for $2\times2$, $4\times4$, and $8\times8$ MIMO configurations.
FPLinQ and SEFPLinQ employ spatial streams, whereas vector FP is restricted to one stream. 
The difference is small in the $2\times2$ case but becomes increasingly pronounced as the antenna dimension grows, confirming the benefit of multi-stream matrix optimization.
Moreover, SEFPLinQ achieves a higher overall log-utility than FPLinQ across the considered MIMO configurations, where the advantage becomes more evident in the larger-dimensional cases.

\begin{figure}[htbp] 
  \centering 
  \hspace*{-0.4cm}\includegraphics[scale=0.67]{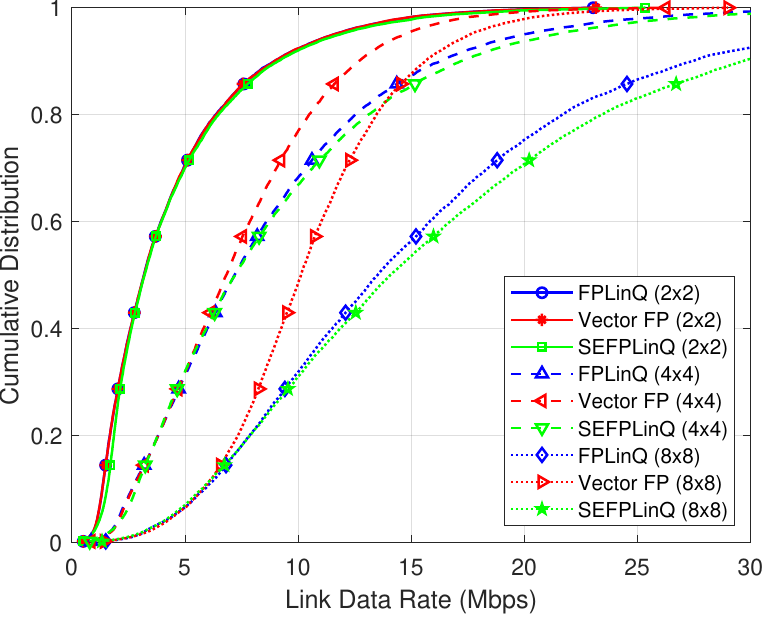} 
  \caption{Log-utility maximization for the ﬂexible-association D2D network:
SEFPLinQ vs. FPLinQ and Vector FP.} 
  \label{fig:FleAss_matrixbenifit} 
\end{figure}

Finally, Fig.~\ref{fig:convergence} compares the convergence behavior of FPLinQ and SEFPLinQ for equal-weight sum-rate maximization with no PF outer loop employed in the flexible-association network.
For each antenna configuration and each channel realization, both FPLinQ and SEFPLinQ execute $100$ iterations without early stopping.
SEFPLinQ exhibits both faster rate improvement and a larger sum-rate gain throughout the iteration horizon.
In particular, for the $8\times8$ case, when taking the $100$-iteration value as reference, SEFPLinQ reaches $95\%$ of this gain after only $26$ iterations, in contrast to $62$ iterations for FPLinQ; at iteration $100$, the corresponding average sum-rate increments are $13,068.2$ Mbps and $11,128.7$ Mbps, respectively, representing a $17.43\%$ improvement.
\begin{figure}[htbp] 
  \centering 
  \hspace*{-0.4cm}\includegraphics[scale=0.66]{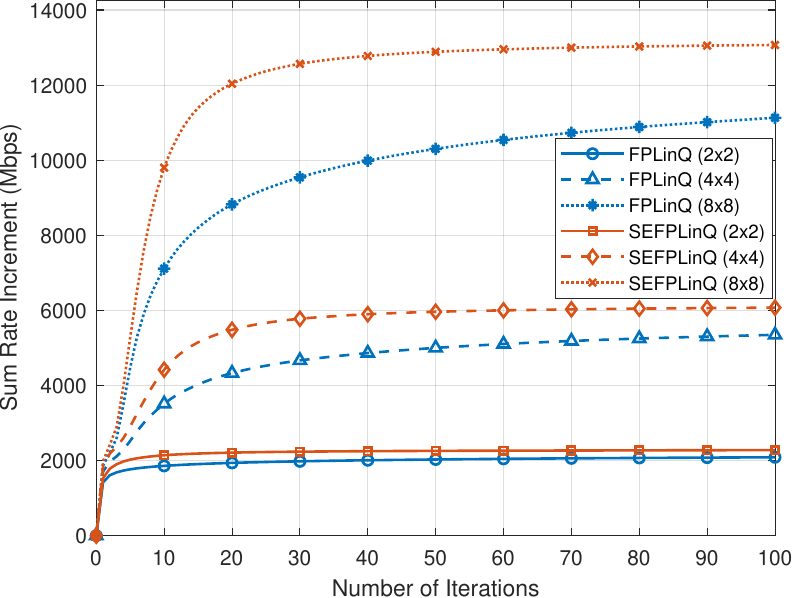} 
  \caption{Convergence of SEFPLinQ and FPLinQ in maximizing the sum rate for the flexible-association D2D network.} 
  \label{fig:convergence} 
\end{figure}

\begin{table}[t]
\centering
\caption{Network log-utility comparison under fixed and flexible association.}
\label{tab:log_utility}
\footnotesize
\renewcommand{\arraystretch}{1.03}

\begin{tabular*}{\columnwidth}{@{\extracolsep{\fill}}lrlr@{}}
\toprule
\multicolumn{4}{c}{\textbf{Fixed single-association}} \\
\midrule
\textbf{Algorithm} & \textbf{$\mathbf{1\!\times\!1}$}
& \textbf{Algorithm} & \textbf{$\mathbf{1\!\times\!1}$} \\
\midrule
\textbf{SEFPLinQ} & \textbf{185.4}
& FPLinQ & 177.1 \\

FPLinQ (no pc) & 146.1
& FlashLinQ & 158.4 \\

FlashLinQ (pc) & 167.0
& ITLinQ+ & 104.4 \\
\bottomrule
\end{tabular*}

\vspace{4pt}

\begin{tabular*}{\columnwidth}{@{\extracolsep{\fill}}lrrlrr@{}}
\toprule
\multicolumn{6}{c}{\textbf{Flexible-association}} \\
\midrule
\textbf{Algorithm}
& \multicolumn{1}{c}{$\mathbf{1\!\times\!1}$}
& \multicolumn{1}{c}{$\mathbf{2\!\times\!2}$}
& \textbf{Algorithm}
& \multicolumn{1}{c}{$\mathbf{4\!\times\!4}$}
& \multicolumn{1}{c}{$\mathbf{8\!\times\!8}$} \\
\midrule

\textbf{SEFPLinQ}
& \textbf{146.8}
& \textbf{496.6}
& \textbf{SEFPLinQ}
& \textbf{783.0}
& \textbf{1042.1} \\

FPLinQ
& 87.0
& 479.6
& FPLinQ
& 774.2
& 1027.7 \\

Vector FP
& 87.0
& 479.0
& Vector FP
& 737.5
& 910.2 \\

BCD
& $-48.9$
& 328.6
&
&
& \\

\bottomrule
\end{tabular*} \label{tab:sum_log_utility}
\end{table}

\section{CONCLUSION}\label{sec:conclu}
This paper developed a matrix SEFP framework for the weighted sum-log-determinant maximization problem with matrix ratios, going beyond the classical matrix FP paradigm.
Such a matrix SEFP is a nontrivial generalization of the previously proposed scalar SEFP with RIT, leveraging the mathematical tools of Hermitian functional calculus.
We further established an insightful interconnection between SEFP and XMMSE, analogous to the FP-WMMSE relationship, showing that they induce identical MM surrogates under exact auxiliary variable updates, such that XMMSE turns out to be a particular realization within the broad SEFP family. 
Consequently, building upon the matrix SEFP, we developed the SEFPLinQ algorithm for flexible-association MIMO D2D networks, with numerical results demonstrating the state-of-the-art weighted sum-rate performance.

\appendices
\section{Proof of Theorem~\ref{thm:matrix_rit}} \label{pro:mRIT_theorem}
To provide insight on how the matrix RIT is obtained, we conduct a constructive proof here. 

It suffices to establish the transform for one matrix ratio, since the auxiliary matrices are separable across $m$ for fixed $\mathbf{x}$.
We therefore suppress the index $m$ and write $
\mathbf{R}=\mathbf{S}^{H}\mathbf{F}^{-1}\mathbf{S}\succeq\mathbf{0},$
with $\mathbf{S}\in\mathbb{C}^{n\times d}$ and $\mathbf{F}\in\mathbb{H}_{++}^{n}$.
Following the eigen-decomposition of $\boldsymbol{\Gamma}$ in~\eqref{eq:gamma_evd},
the scalar lower bound property in Corollary~\ref{cor:scalar_rit_lower_bound} suggests the directional lifting as
\begin{equation}\label{eq:directional_lower_bound}
\Phi_{\boldsymbol{\Gamma}}(\mathbf{R})
\triangleq
\sum_{i=1}^{d}
\boldsymbol{\xi}_i^{H}\phi_{\gamma_i}(\mathbf{R})\boldsymbol{\xi}_i,
\end{equation}
where $\phi_{\gamma_i}(\mathbf{R}) \triangleq  c(\gamma_i) \mathbf{R}\left(\gamma_i^2\mathbf{I}+b(\gamma_i)\mathbf{R}\right)^{-1}$ is obtained by applying the scalar function $\phi_{\gamma_i}(\cdot)$ eigenvalue-wise to the Hermitian matrix $\mathbf{R}$ according to the matrix function property in Section~\ref{subsec:Hermitian functional calculus}.

Naturally, analogous to Corollary~\ref{cor:scalar_rit_lower_bound}, we have the following lemma.
\begin{lemma}[Matrix RIT's Lower-Bound Property] \label{lem:mRIT_LB}
For every fixed $\boldsymbol{\Gamma}\succ\mathbf{0}$, $\Phi_{\boldsymbol{\Gamma}}(\mathbf{R})$ is a globally valid lower bound of the log-determinant, that is
\begin{equation}
\Phi_{\boldsymbol{\Gamma}}(\mathbf{R})
\leq
\log\det(\mathbf{I}+\mathbf{R}), \qquad \forall \ \mathbf{R} \succeq0.
\label{eq:matrix_rit_lower_bound}
\end{equation}
For $\mathbf{R}\succ\mathbf{0}$, equality in \eqref{eq:matrix_rit_lower_bound} holds if and only if $\boldsymbol{\Gamma}=\mathbf{R}$.
If $\mathbf{R}$ is singular, every $\boldsymbol{\Gamma}$ of the form~\eqref{eq:gamma_opt_singular} attains equality.
\end{lemma}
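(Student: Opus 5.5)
The plan is to reduce Lemma~\ref{lem:mRIT_LB} to the scalar bound of Corollary~\ref{cor:scalar_rit_lower_bound} by applying it eigenvalue-wise to $\mathbf{R}$ for each fixed direction $\boldsymbol{\xi}_i$, and then to combine the directions through a concavity/majorization argument.

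Write $\mathbf{R}=\mathbf{P}\operatorname{diag}(\lambda_1,\ldots,\lambda_d)\mathbf{P}^H$. For fixed $\gamma>0$, Corollary~\ref{cor:scalar_rit_lower_bound} gives $\phi_{\gamma}(t)\le\log(1+t)$ for every $t\ge0$. The functional-calculus inequality of Section~\ref{subsec:Hermitian functional calculus} then yields
\[
\phi_{\gamma_i}(\mathbf{R})\preceq\log(\mathbf{I}+\mathbf{R}).
\]
Hence $\boldsymbol{\xi}_i^H\phi_{\gamma_i}(\mathbf{R})\boldsymbol{\xi}_i\le\boldsymbol{\xi}_i^H\log(\mathbf{I}+\mathbf{R})\boldsymbol{\xi}_i$. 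Summing over the orthonormal basis $\{\boldsymbol{\xi}_i\}$ gives
\[
\Phi_{\boldsymbol{\Gamma}}(\mathbf{R})\le\tr\log(\mathbf{I}+\mathbf{R})=\log\det(\mathbf{I}+\mathbf{R}),
\]
which is \eqref{eq:matrix_rit_lower_bound}. Sufficiency of the equality cases is equally direct. If $\boldsymbol{\Gamma}=\mathbf{R}\succ\mathbf{0}$, or more generally $\boldsymbol{\Gamma}$ has the form \eqref{eq:gamma_opt_singular}, take $\boldsymbol{\xi}_i=\mathbf{p}_i$. Then $\boldsymbol{\xi}_i^H\phi_{\gamma_i}(\mathbf{R})\boldsymbol{\xi}_i=\phi_{\gamma_i}(\lambda_i)$, and this equals $\log(1+\lambda_i)$: for $\lambda_i>0$ because $\gamma_i=\lambda_i$, and for $\lambda_i=0$ because both sides vanish. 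When eigenvalues repeat, any basis of the eigenspace works, so the choice of $\boldsymbol{\Xi}$ does not matter.

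The main obstacle is necessity: showing that for $\mathbf{R}\succ\mathbf{0}$, equality forces $\boldsymbol{\Gamma}=\mathbf{R}$. Equality in the sum forces equality in each direction,
\[
\boldsymbol{\xi}_i^H\big[\log(\mathbf{I}+\mathbf{R})-\phi_{\gamma_i}(\mathbf{R})\big]\boldsymbol{\xi}_i=0 \quad\text{for every } i .
\]
The bracketed matrix is a PSD function of $\mathbf{R}$ whose eigenvalue on $\mathbf{p}_k$ is $\log(1+\lambda_k)-\phi_{\gamma_i}(\lambda_k)\ge0$. By the strict scalar equality case, this eigenvalue vanishes only when $\lambda_k=\gamma_i$, since $\lambda_k>0$. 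A PSD quadratic form is zero exactly on the kernel, so $\boldsymbol{\xi}_i$ lies in the span of the eigenvectors $\mathbf{p}_k$ with $\lambda_k=\gamma_i$. That is, $\mathbf{R}\boldsymbol{\xi}_i=\gamma_i\boldsymbol{\xi}_i$ for all $i$. Since $\{\boldsymbol{\xi}_i\}$ is a full orthonormal basis, $\mathbf{R}=\boldsymbol{\Xi}\operatorname{diag}(\gamma_i)\boldsymbol{\Xi}^H=\boldsymbol{\Gamma}$.

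The same kernel argument also characterizes the singular case. Any direction $\boldsymbol{\xi}_i$ must lie in the sum of the eigenspace $\{\lambda_k=\gamma_i\}$ and $\ker\mathbf{R}$, on which the gap is identically zero. This is consistent with \eqref{eq:gamma_opt_singular}, though the lemma only asks for sufficiency there.
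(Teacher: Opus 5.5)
Your proof is correct and is essentially the paper's argument. You lift the scalar bound to $\phi_{\gamma_i}(\mathbf{R})\preceq\log(\mathbf{I}+\mathbf{R})$ by functional calculus, sum the Rayleigh quotients over the orthonormal eigenbasis of $\boldsymbol{\Gamma}$ to reach $\operatorname{tr}\log(\mathbf{I}+\mathbf{R})$, and get necessity from the vanishing of a PSD quadratic form, which forces $\mathbf{R}\boldsymbol{\xi}_i=\gamma_i\boldsymbol{\xi}_i$. The ``concavity/majorization'' step announced in your opening sentence is never actually used, since the trace summation suffices. Your explicit kernel step and your remark that the choice of $\boldsymbol{\Xi}$ does not matter usefully tighten points the paper leaves implicit.
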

\begin{proof}
See Appendix~\ref{proof_of_mRIT_LB_lem}.
\end{proof}

Lemma~\ref{lem:mRIT_LB} captures the part of the construction of the matrix RIT that is inherited directly from the scalar RIT: each eigenvalue $\gamma_i$ specifies a scalar RIT contact value, whereas the corresponding eigenvector $\boldsymbol{\xi}_i$ specifies the direction along which that scalar bound is evaluated.
The remaining goal is to algebraically assemble the direction-wise parameters $\{(\gamma_i,\boldsymbol{\xi}_i)\}_{i=1}^{d}$ back into one entire matrix, which should be the matrix $\boldsymbol{\Gamma}$ exactly, without explicitly retaining the individual spectral directions in the final transform.

To this end, recall the operator $\mathcal{D}_{\boldsymbol{\Gamma}}$ in~\eqref{eq:Dm_operator} with the term index suppressed. Since $\mathcal{D}_{\boldsymbol{\Gamma}}$ is invertible (proved in Appendix~\ref{pro:coro_oper_repr}), define the unique matrix $\mathbf{W}_{\boldsymbol{\Gamma}}$ by
\begin{equation}
\mathbf{W}_{\boldsymbol{\Gamma}} \triangleq
\mathcal{D}_{\boldsymbol{\Gamma}}^{-1}
\!\left(\mathbf{S}c(\boldsymbol{\Gamma})^{1/2}\right),
\label{eq:WGamma_operator_inverse}
\end{equation}
or equivalently,
\begin{equation}
\mathcal{D}_{\boldsymbol{\Gamma}}(\mathbf{W}_{\boldsymbol{\Gamma}})=\mathbf{F}\mathbf{W}_{\boldsymbol{\Gamma}}\boldsymbol{\Gamma}^{2}
+\mathbf{S}\mathbf{S}^{H}\mathbf{W}_{\boldsymbol{\Gamma}}b(\boldsymbol{\Gamma})
=
\mathbf{S}c(\boldsymbol{\Gamma})^{1/2}.
\label{eq:operator_equation_Z}
\end{equation}
Right-multiplying~\eqref{eq:operator_equation_Z} by the $i$-th eigenvector $\boldsymbol{\xi}_i$ of $\boldsymbol{\Gamma}$ and using the spectral definitions of $\boldsymbol{\Gamma}^{2}$, $b(\boldsymbol{\Gamma})$, and $c(\boldsymbol{\Gamma})^{1/2}$ gives
\begin{equation}
\left(\gamma_i^2\mathbf{F}
+b(\gamma_i)\mathbf{S}\mathbf{S}^{H}\right)
\mathbf{\mathbf{W}_{\boldsymbol{\Gamma}}}\boldsymbol{\xi}_i
=
\sqrt{c(\gamma_i)}\,\mathbf{S}\boldsymbol{\xi}_i,
\label{eq:operator_direction_i}
\end{equation}
which directly shows the single operator equation~\eqref{eq:operator_equation_Z} simultaneously contains the $d$ direction-wise linear systems associated with the scalar RIT parameters $\{\gamma_i\}_{i=1}^{d}$.

Next, using the push-through identity $(\lambda\mathbf{I} + \mathbf{A}\mathbf{B})^{-1} \mathbf{A} = \mathbf{A}(\lambda\mathbf{I} + \mathbf{B}\mathbf{A})^{-1}$~\cite[Chapter 11]{IntroductionLA}, we obtain
\begin{equation}
\mathbf{S}^{H}
\left(\gamma_i^2\mathbf{F}+b(\gamma_i)\mathbf{S}\mathbf{S}^{H}\right)^{-1}
\mathbf{S}
=
\mathbf{R}
\left(\gamma_i^2\mathbf{I}+b(\gamma_i)\mathbf{R}\right)^{-1}.
\label{eq:push_through}
\end{equation}

Combining~\eqref{eq:push_through} with the definition of $\phi_{\gamma_i}(\mathbf{R})$ and~\eqref{eq:operator_direction_i} yields
\begin{align}
\boldsymbol{\xi}_i^{H}\phi_{\gamma_i}(\mathbf{R})\boldsymbol{\xi}_i
&=
c(\gamma_i)(\mathbf{S}\boldsymbol{\xi}_i)^{H}
\left(\gamma_i^2\mathbf{F}+b(\gamma_i)\mathbf{S}\mathbf{S}^{H}\right)^{-1}
(\mathbf{S}\boldsymbol{\xi}_i)\nonumber\\
&=
\sqrt{c(\gamma_i)}\,
(\mathbf{S}\boldsymbol{\xi}_i)^{H}\mathbf{W}_{\boldsymbol{\Gamma}}\boldsymbol{\xi}_i.
\label{eq:directional_energy}
\end{align}
Equation~\eqref{eq:directional_energy} reveals that each direction-wise scalar RIT term can be recovered from the common matrix $\mathbf{W}_{\boldsymbol{\Gamma}}$ through its operation along the corresponding eigen-direction $\boldsymbol{\xi}_i$, which motivates the following representation straightforwardly.

Summing~\eqref{eq:directional_energy} over all directions 
yields
\begin{equation}
\autoeq{%
\begin{aligned}
\Phi_{\boldsymbol{\Gamma}}(\mathbf{R}) &= \sum_{i=1}^{d} \sqrt{c(\gamma_i)} (\mathbf{S}\boldsymbol{\xi}_i)^H \mathbf{W}_{\boldsymbol{\Gamma}}\boldsymbol{\xi}_i= \tr\left( c(\boldsymbol{\Gamma})^{1/2} \mathbf{S}^H \mathbf{W}_{\boldsymbol{\Gamma}} \right)\\
&= \left\langle \mathbf{S} c(\boldsymbol{\Gamma})^{1/2}, \mathbf{W}_{\boldsymbol{\Gamma}} \right\rangle_F = \left\langle \mathbf{S} c(\boldsymbol{\Gamma})^{1/2}, \mathcal{D}_{\boldsymbol{\Gamma}}^{-1} \left( \mathbf{S} c(\boldsymbol{\Gamma})^{1/2} \right) \right\rangle_F \\
&= \mathbf{a}_{\boldsymbol{\Gamma}}^H \mathbf{D}_{\boldsymbol{\Gamma}}^{-1} \mathbf{a}_{\boldsymbol{\Gamma}}.
\end{aligned}}
\label{eq:operator_directional_equivalence}
\end{equation}
The last equality follows from Corollary~\ref{cor:mRIT_equ_repre}. 
Combining \eqref{eq:operator_directional_equivalence} with Lemma~\ref{lem:mRIT_LB} gives
\begin{equation}
\log\det(\mathbf{I}+\mathbf{R})
=
\max_{\boldsymbol{\Gamma}\succ\mathbf{0}}
\mathbf{a}_{\boldsymbol{\Gamma}}^{H}
\mathbf{D}_{\boldsymbol{\Gamma}}^{-1}
\mathbf{a}_{\boldsymbol{\Gamma}},
\label{eq:single_matrix_rit_identity}
\end{equation}
with the maximizing $\boldsymbol{\Gamma}$ characterized exactly as stated in Theorem~\ref{thm:matrix_rit}.
Applying \eqref{eq:single_matrix_rit_identity} independently to each term of
\eqref{prob:generic_logdet} then summing over $m$ with weights $\omega_m$ yields~\eqref{eq:matrix_rit_problem}.
This completes the constructive proof of Theorem~\ref{thm:matrix_rit}.

\section{Proof of Corollary~\ref{cor:mRIT_equ_repre}}\label{pro:coro_oper_repr}
We first verify that $\mathcal{D}_{\boldsymbol{\Gamma}}^{-1}$ exists.
For arbitrary $\mathbf{X},\mathbf{Y}\in\mathbb{C}^{n\times d}$, the Hermitian properties of \(\mathbf{F}\), \(\boldsymbol{\Gamma}^2\), \(\mathbf{S}\mathbf{S}^{H}\), and \(b(\boldsymbol{\Gamma})\), together with cyclic invariance of the trace \cite[Eq.~(16)]{petersen2012matrix}, give
\begin{equation}
\left\langle
\mathbf{X},\mathcal{D}_{\boldsymbol{\Gamma}}(\mathbf{Y})
\right\rangle_F
=
\left\langle
\mathcal{D}_{\boldsymbol{\Gamma}}(\mathbf{X}),\mathbf{Y}
\right\rangle_F,
\end{equation}
so that \(\mathcal{D}_{\boldsymbol{\Gamma}}\) is self-adjoint. Moreover, for any $\mathbf{A}\neq\mathbf{0}$,
\begin{align}
\left\langle\mathbf{A},\mathcal{D}_{\boldsymbol{\Gamma}}(\mathbf{A})\right\rangle_F
&=
\operatorname{tr}(\mathbf{A}^{H}\mathbf{F}\mathbf{A}\boldsymbol{\Gamma}^{2})
+\operatorname{tr}(\mathbf{A}^{H}\mathbf{S}\mathbf{S}^{H}\mathbf{A}b(\boldsymbol{\Gamma}))\nonumber\\
&=
\|\mathbf{F}^{1/2}\mathbf{A}\boldsymbol{\Gamma}\|_F^2
+\|\mathbf{S}^{H}\mathbf{A}b(\boldsymbol{\Gamma})^{1/2}\|_F^2
>0,
\label{eq:D_positive_energy}
\end{align}
where the strict inequality follows from $\mathbf{F}\succ\mathbf{0}$ and $\boldsymbol{\Gamma}\succ\mathbf{0}$.
Hence $\mathcal{D}_{\boldsymbol{\Gamma}}$ is strictly positive definite and therefore invertible.

Next, by the definition of $\mathbf{D}_{m,\boldsymbol{\Gamma}_m}$ in~\eqref{eq:D_gamma_kron} and $\mathcal{D}_{m,\boldsymbol{\Gamma}_m}$ in~\eqref{eq:Dm_operator}, and using the identity
$\operatorname{vec}(\mathbf{A}\mathbf{Y}\mathbf{B})
=(\mathbf{B}^{T}\otimes\mathbf{A})\operatorname{vec}(\mathbf{Y})$~\cite[Eq.~(520)]{petersen2012matrix}, we obtain
\begin{equation}\label{eq:vec_operator_relation}
\operatorname{vec}\!\left(
\mathcal{D}_{m,\boldsymbol{\Gamma}_m}(\mathbf{Y})
\right)
=
\mathbf{D}_{m,\boldsymbol{\Gamma}_m}\operatorname{vec}(\mathbf{Y}).
\end{equation}
Since both representations are invertible, we have
\begin{equation}\label{eq:vec_inverse_operator_relation}
\operatorname{vec}\!\left(
\mathcal{D}_{m,\boldsymbol{\Gamma}_m}^{-1}(\mathbf{Y})
\right)
=
\mathbf{D}_{m,\boldsymbol{\Gamma}_m}^{-1}\operatorname{vec}(\mathbf{Y}).
\end{equation}
Setting
\(\mathbf{Y}=\mathbf{S}_m c(\boldsymbol{\Gamma}_m)^{1/2}\) and using the identity
$\langle\mathbf{A},\mathbf{B}\rangle_F
=\operatorname{vec}(\mathbf{A})^{H}\operatorname{vec}(\mathbf{B})$~\cite[Eq.~(521)]{petersen2012matrix}, we have
\begin{align}
&\left\langle
\mathbf{S}_m c(\boldsymbol{\Gamma}_m)^{1/2},
\mathcal{D}_{m,\boldsymbol{\Gamma}_m}^{-1}
\!\left(\mathbf{S}_m c(\boldsymbol{\Gamma}_m)^{1/2}\right)
\right\rangle_F \nonumber\\
&\qquad =
\mathbf{a}_{m,\boldsymbol{\Gamma}_m}^{H}
\mathbf{D}_{m,\boldsymbol{\Gamma}_m}^{-1}
\mathbf{a}_{m,\boldsymbol{\Gamma}_m}.
\label{eq:operator_vector_equivalence_single}
\end{align}
Summing \eqref{eq:operator_vector_equivalence_single} over $m$ with weights $\omega_m$ gives \eqref{eq:operator_kron_equivalence} and proves the corollary.

\section{Proof of Lemma~\ref{lem:mRIT_LB}} \label{proof_of_mRIT_LB_lem}
Applying the scalar RIT inequality \eqref{eq:scalar_rit_lower_bound} to every eigenvalue of $\mathbf R$ according to the matrix function property in Section~\ref{subsec:Hermitian functional calculus} gives
\begin{equation}\label{eq:operator_gap}
\phi_{\gamma_i}(\mathbf{R})
\preceq
\log(\mathbf{I}+\mathbf{R}),
\qquad i=1,\ldots,d.
\end{equation}
Taking the Rayleigh quotients of~\eqref{eq:operator_gap} with respect to the orthonormal eigenvectors $\{\boldsymbol{\xi}_i\}_{i=1}^d$ of $\boldsymbol{\Gamma}$ and summing over $i = 1, \dots, d$, we obtain
\begin{align}
\Phi_{\boldsymbol{\Gamma}}(\mathbf{R})&= \sum_{i=1}^{d}
\boldsymbol{\xi}_i^{H}\phi_{\gamma_i}(\mathbf{R})\boldsymbol{\xi}_i
\leq
\sum_{i=1}^{d}
\boldsymbol{\xi}_i^{H}\log(\mathbf{I}+\mathbf{R})\boldsymbol{\xi}_i\nonumber\\
&=
\operatorname{tr}\left(\boldsymbol{\Xi}^{H}\log(\mathbf{I}+\mathbf{R})\boldsymbol{\Xi}\right)
=
\log\det(\mathbf{I}+\mathbf{R}).
\label{eq:matrix_rit_lower_bound_proof}
\end{align}
This proves~\eqref{eq:matrix_rit_lower_bound}. 

If $\mathbf{R}\succ\mathbf{0}$ with the eigenvalue set $\{\lambda_i\}_{i=1}^d,$ choose \(\boldsymbol{\Gamma}=\mathbf{R}\). 
Then each $\gamma_i=\lambda_i$ along the corresponding eigenvector $\boldsymbol{\xi}_i$, and satisfies the scalar RIT tightness condition. This gives $\phi_{\lambda_i}(\lambda_i)=\log(1+\lambda_i)$.
Moreover, when the equality in \eqref{eq:matrix_rit_lower_bound} holds, there is
\begin{equation}
\boldsymbol{\xi}_i^{H}
\left[
\log(\mathbf{I}+\mathbf{R})-\phi_{\gamma_i}(\mathbf{R})
\right]
\boldsymbol{\xi}_i=0,
\qquad i=1,\ldots,d.
\end{equation}
The matrix in brackets is positive semidefinite, and the scalar gap
\(\log(1+t)-\phi_{\gamma_i}(t)\) vanishes for $t>0$ only at
$t=\gamma_i$, that is,
\begin{equation}
\mathbf{R}\boldsymbol{\xi}_i=\gamma_i\boldsymbol{\xi}_i,
\qquad i=1,\ldots,d,
\end{equation}
which implies
\(\mathbf{R}
=\boldsymbol{\Xi}\operatorname{diag}(\gamma_1,\ldots,\gamma_d)\boldsymbol{\Xi}^{H}
=\boldsymbol{\Gamma}\). 
Thus $\boldsymbol{\Gamma}^{\star}=\mathbf{R}$ is unique when
$\mathbf{R}\succ\mathbf{0}$. 

If $\mathbf{R}$ is singular, the choice in
\eqref{eq:gamma_opt_singular} attains equality because
$\phi_{\eta}(0)=0=\log(1+0)$ for every $\eta>0$ and the remaining parts are identical with $\mathbf{R}\succ\mathbf{0}$.

\section{Proof of Theorem~\ref{thm:matrix_sefp}}\label{proof_of_Thm_mSEFP}
Applying Lemma~\ref{lem:operator_qt} term by term to \eqref{eq:operator_kron_equivalence} with $\mathbf{A}=\sqrt{\omega_m}\,\mathbf{S}_m c(\boldsymbol{\Gamma}_m)^{1/2}$ and $\mathcal{D}=\mathcal{D}_{m,\boldsymbol{\Gamma}_m}$,
expanding $\langle\mathbf{Y}_m,
\mathcal{D}_{m,\boldsymbol{\Gamma}_m}(\mathbf{Y}_m)\rangle_F$ and following the definition in~\eqref{eq:Dm_operator}, we have the objective in \eqref{def_of_FmSEFP}.
Therefore,
\begin{equation}\label{eq:appendix_max_y}
\max_{\{\mathbf{Y}_m\}_{m=1}^{M}}
F_{RQ}(\mathbf{x},\underline{\boldsymbol{\Gamma}},\underline{\mathbf{Y}})=
F_R(\mathbf{x},\underline{\boldsymbol{\Gamma}}).
\end{equation}
Combining \eqref{eq:appendix_max_y} with Theorem~\ref{thm:matrix_rit} proves the equivalence in Theorem~\ref{thm:matrix_sefp}. 
The update \eqref{eq:Y_operator_update} follows directly from \eqref{eq:operator_qt_opt}.

Substituting the definition of $\mathcal{D}_{m,\boldsymbol{\Gamma}_m}$ in \eqref{eq:Dm_operator} into~\eqref{eq:Y_operator_update}, $\mathbf{Y}_m^{\star}$ is then rewritten into a more tractable form as the unique solution of a generalized Sylvester equation
\begin{equation}\label{eq:Y_sylvester}
\mathbf{F}_m\mathbf{Y}_m^{\star}\boldsymbol{\Gamma}_m^2
+\mathbf{S}_m\mathbf{S}_m^{H}\mathbf{Y}_m^{\star}b(\boldsymbol{\Gamma}_m) =
\sqrt{\omega_m}\,\mathbf{S}_m c(\boldsymbol{\Gamma}_m)^{1/2}.
\end{equation}
Finally,~\eqref{eq:Y_closeform} can be directly obtained by substituting~\eqref{eq:gamma_opt_pd} into~\eqref{eq:Y_sylvester} for $\mathbf{R}_m$ is non-singular, and by substituting~\eqref{eq:gamma_opt_singular} into~\eqref{eq:Y_sylvester} for $\mathbf{R}_m$ is singular.
\section{Proof of Theorem~\ref{SEFP-XMMSE}}\label{proof_of_thm_SEFP-XMMSE}
Substituting~\eqref{eq:structured_receiver_path} into the $m$-th term of~\eqref{eq:xmmse_functional_objective} and applying Lemma~\ref{lem:kernel_moments} gives
\begin{equation}\label{eq:xmmse_sefp_three_moments}
\autoeq{
\begin{aligned}
&\omega_m\!\int_0^1\!
2\Re\operatorname{tr}(\mathbf S_m^H\mathbf U_{m,\tau}) d\tau
=2\sqrt{\omega_m}\Re\operatorname{tr}\big(
c(\boldsymbol{\Gamma}_m)^{1/2}\mathbf S_m^H\mathbf Y_m\big),\\
&\omega_m\!\int_0^1\!
\operatorname{tr}(\mathbf U_{m,\tau}^H\mathbf F_m\mathbf U_{m,\tau}) d\tau
=\operatorname{tr}(\mathbf Y_m^H\mathbf F_m\mathbf Y_m\boldsymbol{\Gamma}_m^2),\\
&\omega_m\!\int_0^1\!\tau
\operatorname{tr}(\mathbf U_{m,\tau}^H\mathbf S_m\mathbf S_m^H\mathbf U_{m,\tau}) d\tau
=\operatorname{tr}\big(
\mathbf Y_m^H\mathbf S_m\mathbf S_m^H\mathbf Y_m b(\boldsymbol{\Gamma}_m)\big).
\end{aligned}}
\end{equation}
Summing~\eqref{eq:xmmse_sefp_three_moments} over $m$ proves~\eqref{eq:xmmse_sefp_objective_identity}.

Next, substituting~\eqref{eq:xmmse_sefp_optimal_mapping} into~\eqref{eq:structured_receiver_path} yields, for every $m$ and $\tau\in[0,1]$, that
\begin{align}
\mathbf U_{m,\tau}(\boldsymbol{\Gamma}_m^\star,\mathbf Y_m^\star)
&=\mathbf F_m^{-1}\mathbf S_m\mathbf R_m^{-1}
(\mathbf I_{d_m}+\mathbf R_m)^{-1/2}
\nonumber\\
&\quad\times
\mathbf R_m(\mathbf I_{d_m}+\tau\mathbf R_m)^{-1}
(\mathbf I_{d_m}+\mathbf R_m)^{1/2}
\nonumber\\
&=\mathbf F_m^{-1}\mathbf S_m
(\mathbf I_{d_m}+\tau\mathbf R_m)^{-1},
\label{eq:recover_xmmse_receiver}
\end{align}
which is exactly the unique pointwise XMMSE-induced auxiliary variable in~\eqref{eq:xmmse_optimal_receiver}.  Hence restricting $\underline{\mathcal U}$ to the structured family~\eqref{eq:structured_receiver_path} causes no loss at the optimum.  Combining this observation with~\eqref{eq:xmmse_sefp_objective_identity} and the pointwise SEFP equivalence in~\eqref{eq:mm_value_representation} proves~\eqref{eq:xmmse_sefp_lossless}.

\newpage

\vfill

\end{document}